\newtheorem{theorem}{Theorem}[section]
\newtheorem{proposition}{Proposition}[section]
\newtheorem{lemma}{Lemma}[section]
\newtheorem{corollary}{Corollary}[section]
\newcommand{\beqa}{\begin{eqnarray}}
\newcommand{\eeqa}{\end{eqnarray}}
\numberwithin{equation}{section}
\begin{document}

\begin{flushright}
YITP-SB-12-18
\end{flushright}

\bigskip \vspace{15pt}

\begin{center}
\textbf{{\Large Non-diagonal open spin-1/2 XXZ quantum chains by separation
of variables:} } \vspace{2pt}

\textbf{{\Large Complete spectrum and matrix elements of some quasi-local
operators} } \vspace{45pt}

{\large G.~Niccoli}\footnote{%
YITP, Stony Brook University, New York 11794-3840, USA,
niccoli@max2.physics.sunysb.edu}

\vspace{50pt}

\vspace{50pt}
\end{center}
\begin{itemize}
\begin{small}
\item[] \textbf{Abstract}\,\,\,The integrable quantum models, associated to the transfer matrices of the 6-vertex reflection algebra for spin 1/2 representations, are studied in this paper. In the framework of Sklyanin's quantum separation of variables (SOV), we provide the complete characterization of the eigenvalues and eigenstates of the transfer matrix and the proof of the simplicity of the transfer matrix spectrum. Moreover, we use these integrable quantum models as further key examples for which to develop a method in the SOV framework to compute matrix elements of local operators. This method has been introduced first in \cite{ADMFKGMN12-SG} and then used also in \cite{ADMFKN12-0}, it is based on the resolution of the quantum inverse problem (i.e. the reconstruction of all local operators in terms of the quantum separate variables) plus the computation of the action of \textit{separate covectors} on \textit{separate vectors}. In particular, for these integrable quantum models, which in the homogeneous limit reproduce the open spin-1/2 XXZ quantum chains with non-diagonal boundary conditions, we have obtained the SOV-reconstructions for a class of quasi-local operators and determinant formulae for the covector-vector actions. As consequence of these findings we provide one determinant formulae for the matrix elements of this class of reconstructed quasi-local operators on transfer matrix eigenstates.
\end{small}
\end{itemize}

\vspace{1cm} \parbox{12cm}{\small }\newpage

\newpage

\tableofcontents
\newpage

\section{Introduction}

In this paper we analyze the lattice quantum integrable models characterized
in the quantum inverse scattering method (QISM) \cite{ADMFKSF78}-\cite{ADMFKIK82} by
(boundary) monodromy matrices which satisfy the reflection algebra \cite{ADMFKGau71}-\cite{ADMFKGZ94} w.r.t. the 6-vertex R-matrix solution of the
Yang-Baxter equation. The prototypical elements in this class of quantum integrable models are the open XXZ spin-1/2 quantum chains, reproduced under the homogeneous limit of the representations of this reflection algebra on the spin-1/2 quantum chains. In the special representations corresponding to the open XXZ chain with diagonal boundary matrices\footnote{Numerical $2\times 2$ matrix solutions of the reflection equations, located at the ending points of the quantum chain.} the system has been
largely analyzed in the framework of the algebraic Bethe ansatz (ABA) \cite{ADMFKSF78}-\cite{ADMFKFST80} with results going from the spectrum\footnote{See \cite{ADMFKRag1+} for the analysis by nested ABA of higher rank open spin chains with diagonal boundary conditions.} \cite{ADMFKGau71}-\cite{ADMFKGZ94} up to the correlation functions\footnote{%
Let us recall that in the half-infinite volume case (with one boundary)
similar multiple integrals formulae have been previously derived in \cite{ADMFKJimKKKM95}-\cite{ADMFKJimKKMW95} by using the q-vertex operator method.} \cite{ADMFKKKMNST07}-\cite{ADMFKKKMNST08}, where the Lyon group method\footnote{Always in the ABA framework, see also \cite{ADMFKK01}-\cite{ADMFKCM07} for the
extension of this method to the higher spin quantum chains.} \cite{ADMFKKitMT99}-\cite{ADMFKKKMST07} has been generalized to the reflection algebra case in the ABA framework.

The situation is more complicated in the case of general non-diagonal
boundary matrices; technical difficulties arise and the ABA method can be
applied only limitedly to non-diagonal boundary matrices which satisfy
special constrains which allow the definition of reference states by gauge
transformations\footnote{See the series of papers \cite{ADMFKRag2+1}-\cite{ADMFKRag2+3} for the analysis of XXZ spin 1/2 open chains with special constraints on the non-diagonal boundaries by a method combing coordinate Bethe ansatz and matrix ansatz methods.} \cite{ADMFKCao03,ADMFKZhang07}. Even in this class of constrained
non-diagonal boundaries the use of algebraic Bethe ansatz is complicated
from the fact that two sets of Bethe ansatz equations are generally required
in order to have some numerical evidence of the completeness of the spectrum
description \cite{ADMFKNR03}. The other problem in the ABA framework is for the
computation of scalar products which is missing as soon as non-diagonal
boundary matrices are considered. Some analysis for solving this last
problem has been addressed in the papers \cite{ADMFKFK10,ADMFKFK10+} where the
partition function for the dynamical diagonal open case was considered, see
also \cite{ADMFKF11}. According to the standard techniques in the ABA framework, this is the first step towards the determination of scalar product formula in a determinant form. The situation for the general
unconstrained non-diagonal boundary matrices is even more fragmented in the
framework of Bethe ansatz analysis\footnote{%
It is worth mentioning that by using a different approach, the
representation theory of the so-called q-Onsager algebra, the spectrum of the spin-1/2 open XXZ quantum chains with the
general non-diagonal boundary conditions has been characterized in \cite{ADMFKBK05}-\cite{ADMFKB07} in terms of the roots of certain characteristic polynomials.};
only the eigenvalue analysis is implemented by the fusion procedure for
special values (roots of unit) of the anisotropy parameter \cite{ADMFKN05,ADMFKMNS06}
while for the other cases (non-roots of unit) the construction of the
Q-operator is mainly at a conjecture level \cite{ADMFKYNZ06}. Once again the
completeness of the spectrum description is verified only by some numerical
analysis and the absence of eigenstates construction is the first
fundamental missing step toward the matrix elements of local operators.

The circumstances that for general non-diagonal boundary matrices the ABA
method does not work while for the constrained ones, for which it works, it is instead missing a scalar
product formula have so far prevented to further
generalize the method described in \cite{ADMFKKKMNST07}-\cite{ADMFKKKMNST08} to
the reflection algebra representations for non-diagonal boundary matrices.
However, we are in the position to develop a different approach based on the
Sklyanin's quantum separation of variables (SOV) method\footnote{Let us comment that in the special case of the spin-1/2 representations of the rational 6-vertex reflection algebra the
construction of the functional version of separation of variables of Sklyanin has been implemented in \cite{ADMFKFSW08,ADMFKFGSW11}; that is a representation of the rational reflection algebra on a space of symmetric functions. However, the explicit construction of the SOV representation and of the transfer matrix eigenstates in the original Hilbert space of the quantum chain are not provided. See also \cite{ADMFKFram+2} for the analysis by the functional SOV of the related but more general spin-boson model introduced and analyzed by ABA in \cite{ADMFKFram+1}.} \cite{ADMFKSk1}-\cite{ADMFKSk3}, which can be used to get the exact characterization of their spectrum (eigenvalues \& eigenstates) and the
computation of their form factors. Indeed, Sklyanin's SOV is a more
efficient method to analyze the
spectral problem w.r.t. other methods\footnote{%
Other important examples are the coordinate Bethe ansatz \cite{ADMFKBe31}, \cite{ADMFKBaxBook} and \cite{ADMFKABBBQ87}, the Baxter Q-operator method \cite{ADMFKBaxBook}
and the analytic Bethe ansatz \cite{ADMFKRe83-1}-\cite{ADMFKRe83-2}.} like the algebraic Bethe ansatz. It works for a large class
of integrable quantum models; it leads to both the eigenvalues and the
eigenstates of the transfer matrix providing a complete characterization of the spectrum under simple requirements\footnote{Note that on the contrary in the ABA framework a proof of completeness has been given only for some models; for example see \cite{ADMFKMTV09} for the XXX Heisenberg model, \cite{ADMFKORR09} for the infinite quantum XXZ spin chain with domain wall boundary conditions, \cite{ADMFKKorff-11} for the nonlinear quantum Schroedinger model and the references contained in these papers.}. Moreover, in all the integrable quantum models analyzed  by SOV in the series of papers \cite{ADMFKNT-10}-\cite{ADMFKN12-3} it was possible to show that the transfer matrix forms a complete\footnote{Note that this is the natural quantum analogue of the definition of classical complete integrability.} set of commuting conserved charges of the quantum model.

The approach used in the present article can be considered as the
generalization to the SOV framework of the Lyon group method. It has been already implemented in \cite{ADMFKN12-0} for the XXZ spin 1/2 quantum chain \cite{ADMFKBe31}, \cite{ADMFKH28}-\cite{ADMFKLM66} with antiperiodic
boundary conditions. The results\footnote{Note that previous results on this model were the Baxter Q-operator \cite{ADMFKBBOY95} and the Sklyanin's functional separation of variables for the XXX chain \cite{ADMFKSk2} extended in \cite{ADMFKNWF09} to the XXZ case.} obtained in \cite{ADMFKN12-0} go from the
complete characterization of the spectrum up to the calculation of the form
factors of the local spin $\sigma _{n}^{a}$ in a determinant
form\footnote{The same type of results are derived in \cite{ADMFKN12-1} for rational 6-vertex Yang-Baxter algebra representations on antiperiodic spin-$s$ quantum chains.}. This approach has been originally
introduced in \cite{ADMFKGMN12-SG} for the lattice quantum sine-Gordon model \cite{ADMFKFST80,ADMFKIK82} and then generalized in \cite{ADMFKGMN12-T2} to the $\tau _{2}$-model\footnote{Note that in \cite{ADMFKGIPS06}-\cite{ADMFKGIPS09} a previous analysis by SOV method of the $\tau _{2}$-model has been implemented.} \cite{ADMFKBa04} and the chiral Potts model \cite{ADMFKBS90}-\cite{ADMFKTarasovSChP}.

Finally, let us comment that the analysis of these systems with non-diagonal
boundary conditions is in particular interesting for the relevant physical
applications to systems in non-equilibrium like the
asymmetric simple exclusion processes (ASEP) as they allow to describe
systems for which the particle number is not conserved\footnote{%
For relevant references on this subject from the point of view of the
connection to quantum integrable models see \cite{ADMFKD98}-\cite{ADMFKGE06} and
reference there in.}. Then, the lack of knowledge on the spectrum in this
general framework, the complications emerging in the use of algebraic Bethe
ansatz plus the physical relevance of these open chains with non-diagonal
boundary conditions makes clear how big can be the impact brought to this
research area from the solution of these systems by our approach in quantum
separation of variables.

\section{Reflection algebra and open spin-1/2 XXZ quantum chain}

In this section we describe a class of quantum integrable models
characterized in the framework of the quantum inverse scattering method by
monodromy matrices $\mathcal{U}(\lambda )$ which are solutions of the
following reflection equation:%
\begin{equation}
R_{12}(\lambda -\mu )\,\mathcal{U}_{1}(\lambda )\,R_{12}(\lambda +\mu )\,%
\mathcal{U}_{2}(\mu )=\mathcal{U}_{2}(\mu )\,R_{12}(\lambda +\mu )\,\mathcal{%
U}_{1}(\lambda )\,R_{12}(\lambda -\mu ),
\end{equation}%
w.r.t. the 6-vertex trigonometric solution of the Yang-Baxter equation, the $%
R$-matrix:%
\begin{equation}
R_{12}(\lambda )\equiv \left( 
\begin{array}{cccc}
\sinh (\lambda +\eta ) & 0 & 0 & 0 \\ 
0 & \sinh \lambda & \sinh \eta & 0 \\ 
0 & \sinh \eta & \sinh \lambda & 0 \\ 
0 & 0 & 0 & \sinh (\lambda +\eta )%
\end{array}%
\right) \in \text{End}(\text{R}_{1}\otimes \text{R}_{2}),
\end{equation}%
where R$_{x}\simeq \mathbb{C}^{2}$ is a 2-dimensional linear space.

\subsection{H.w. representations of 6-vertex reflection algebra on spin-1/2
chains}

Let $K(\lambda ;\zeta ,\delta ,\tau )$ be the following (general
non-diagonal boundary) matrix: 
\begin{equation}
K(\lambda ;\zeta ,\delta ,\tau )=\frac{1}{\sinh \zeta }\left( 
\begin{array}{cc}
\sinh (\lambda +\zeta ) & \kappa e^{\tau }\sinh 2\lambda \\ 
\kappa e^{-\tau }\sinh 2\lambda & \sinh (\zeta -\lambda )%
\end{array}%
\right) ,  \label{ADMFKK}
\end{equation}%
where $\zeta ,$ $\kappa $ and $\tau $ are arbitrary complex parameters,
it is the most general scalar solution of the 6-vertex trigonometric
reflection equation: 
\begin{equation}
R_{12}(\lambda -\mu )K_{1}(\lambda )R_{12}(\lambda +\mu )K_{2}(\mu
)=K_{2}(\mu )R_{12}(\lambda +\mu )K_{1}(\lambda )R_{12}(\lambda -\mu ).
\label{ADMFKbYB}
\end{equation}%
Then following Sklyanin \cite{ADMFKSkly88}, it is possible to construct in the 2$^{\mathsf{N}}$%
-dimensional representation space\footnote{%
The representation space of a spin 1/2 quantum chain of $\mathsf{N}$
local sites each one associated to a 2-dimensional local space R$_{n}$.}:%
\begin{equation}
\mathcal{R}_{\mathsf{N}}\equiv \otimes _{n=1}^{\mathsf{N}}\text{R}_{n},
\end{equation}%
two classes of solutions to the same reflection equation $\left( \ref{ADMFKbYB}%
\right) $. In order to do so, let us define:%
\begin{equation}
K_{-}(\lambda )=K(\lambda -\eta /2;\zeta _{-},\delta _{-},\tau _{-}),\text{
\ \ \ \ }K_{+}(\lambda )=K(\lambda +\eta /2;\zeta _{+},\delta _{+},\tau
_{+}),
\end{equation}%
where $\zeta _{\pm },\delta _{\pm },\tau _{\pm }$ are arbitrary complex
parameters and the (bulk) monodromy matrix\footnote{%
Note that here we have chosen a shifted definition of the inhomogeneity
w.r.t. the one used in the articles \cite{ADMFKKKMNST07,ADMFKKKMNST08}; in this case the homogeneous
limit corresponds to $\xi _{m}=0$ for $m=1,\ldots ,\mathsf{N}$.}:%
\begin{equation}
M_{0}(\lambda )=R_{0\mathsf{N}}(\lambda -\xi _{\mathsf{N}}-\eta /2)\ldots
R_{02}(\lambda -\xi _{2}-\eta /2)\,R_{01}(\lambda -\xi _{1}-\eta /2),\text{
\ \ }\hat{M}(\lambda )=(-1)^{\mathsf{N}}\,\sigma
_{0}^{y}\,M^{t_{0}}(-\lambda )\,\sigma _{0}^{y},  \label{ADMFKT}
\end{equation}%
$M_{0}(\lambda )\in $ End$($R$_{0}\otimes \mathcal{R}_{\mathsf{N}})$,
solution of the 6-vertex Yang-Baxter equation:%
\begin{equation}
R_{12}(\lambda -\mu )M_{1}(\lambda )M_{2}(\mu )=M_{2}(\mu )M_{1}(\lambda
)R_{12}(\lambda -\mu ).  \label{ADMFKYB}
\end{equation}%
Now we can define the following (boundary) monodromy matrices $\mathcal{U}%
_{\pm }(\lambda )\in $ End$($R$_{0}\otimes \mathcal{R}_{\mathsf{N}})$ as it
follows:%
\begin{eqnarray}
\mathcal{U}_{-}(\lambda ) &=&M_{0}(\lambda )K_{-}(\lambda )\hat{M}%
_{0}(\lambda )=\left( 
\begin{array}{cc}
\mathcal{A}_{-}(\lambda ) & \mathcal{B}_{-}(\lambda ) \\ 
\mathcal{C}_{-}(\lambda ) & \mathcal{D}_{-}(\lambda )%
\end{array}%
\right) , \\
\mathcal{U}_{+}^{t_{0}}(\lambda ) &=&M_{0}^{t_{0}}(\lambda
)K_{+}^{t_{0}}(\lambda )\hat{M}_{0}^{t_{0}}(\lambda )=\left( 
\begin{array}{cc}
\mathcal{A}_{+}(\lambda ) & \mathcal{C}_{+}(\lambda ) \\ 
\mathcal{B}_{+}(\lambda ) & \mathcal{D}_{+}(\lambda )%
\end{array}%
\right) ,
\end{eqnarray}%
then $\mathcal{U}_{-}(\lambda )$ and $\mathcal{V}_{+}(\lambda )\equiv 
\mathcal{U}_{+}^{t_{0}}(-\lambda )$ define two classes of solutions of the
reflection equation $\left( \ref{ADMFKbYB}\right) $.

As standard in the quantum inverse method and as it was proven in \cite{ADMFKSkly88}, from these monodromy matrices it is possible to define a commuting
family of transfer matrices $\mathcal{T}(\lambda )\in\, $End$(\mathcal{R}_{%
\mathsf{N}})$ as it follows:%
\begin{equation}
\mathcal{T}(\lambda )\equiv \text{tr}_{0}\{K_{+}(\lambda )\,M(\lambda
)\,K_{-}(\lambda )\hat{M}(\lambda )\}=\text{tr}_{0}\{K_{+}(\lambda )\mathcal{%
U}_{-}(\lambda )\}=\text{tr}_{0}\{K_{-}(\lambda )\mathcal{U}_{+}(\lambda )\}.
\label{ADMFKtransfer}
\end{equation}%
The problems that we address in this paper are the complete characterization
of the spectrum (eigenvalue $\&$ eigenstates) of this transfer matrix and the
computation of some matrix elements of quasi-local operators for two quite
general classes of non-diagonal boundary matrices $K_{\pm }(\lambda )$. It
is then worth recalling that the open spin-1/2 XXZ quantum chain, with the most
general non-diagonal integrable boundary conditions, is characterized by the
following Hamiltonian: 
\begin{align}
H_{\mathsf{N}D}& =\sum_{i=1}^{\mathsf{N}-1}(\sigma _{i}^{x}\sigma
_{i+1}^{x}+\sigma _{i}^{y}\sigma _{i+1}^{y}+\cosh \eta \sigma _{i}^{z}\sigma
_{i+1}^{z})  \notag \\
& +\frac{\sinh \eta }{\sinh \zeta _{-}}\left[ \sigma _{1}^{z}\cosh \zeta
_{-}+2\kappa _{-}(\sigma _{1}^{x}\cosh \tau _{-}+i\sigma _{1}^{y}\sinh \tau
_{-})\right]  \notag \\
& +\frac{\sinh \eta }{\sinh \zeta _{+}}[(\sigma _{\mathsf{N}}^{z}\cosh \zeta
_{+}+2\kappa _{+}(\sigma _{\mathsf{N}}^{x}\cosh \tau _{+}+i\sigma _{\mathsf{N%
}}^{y}\sinh \tau _{+}).  \label{ADMFKH-XXZ-Non-D}
\end{align}%
Hamiltonian which is reproduced in the homogeneous limit by the following derivative of the
transfer matrix $\left( \ref{ADMFKtransfer}\right) $:%
\begin{equation}
H_{\mathsf{N}D}=\frac{2(\sinh \eta )^{1-2\mathsf{N}}}{\text{tr}\{K_{\mathsf{N}%
D}^{+}(\eta /2)\}\,\text{tr}\{K_{\mathsf{N}D}^{-}(\eta /2)\}}\frac{d}{%
d\lambda }\mathcal{T}(\lambda )_{\,\vrule height13ptdepth1pt\>{\lambda =\eta
/2}\!}+\text{constant.}  \label{ADMFKHt}
\end{equation}

\subsection{First fundamental properties}

Here, some important properties about the generators of the reflection
algebra $\mathcal{A}_{\pm }(\lambda ),$ $\mathcal{B}_{\pm }(\lambda ),$ $%
\mathcal{C}_{\pm }(\lambda )$ and $\mathcal{D}_{\pm }(\lambda )$ are given
as they will play a fundamental role in the solution of the transfer matrix $%
\mathcal{T}(\lambda )$ spectral problem.

\begin{proposition}[$\mathcal{U}_{-}$-reflection algebra]
In the reflection algebra generated by the elements of $\,\mathcal{U}%
_{-}(\lambda )$ the quantum determinant:%
\begin{eqnarray}
\det_{q}\mathcal{U}_{-}(\lambda ) &\equiv &\sinh (2\lambda -2\eta )[\mathcal{%
A}_{-}(\lambda +\eta /2)\mathcal{A}_{+}(-\lambda +\eta /2)-\mathcal{B}%
_{-}(\lambda +\eta /2)\mathcal{C}_{-}(-\lambda +\eta /2)]  \label{ADMFKq-detU_1}
\\
&=&\sinh (2\lambda -2\eta )[\mathcal{D}_{-}(\lambda +\eta /2)\mathcal{D}%
_{-}(-\lambda +\eta /2)-\mathcal{C}_{-}(\lambda +\eta /2)\mathcal{B}%
_{-}(-\lambda +\eta /2)],  \label{ADMFKq-detU_2}
\end{eqnarray}%
is central:%
\begin{equation}
\lbrack \det_{q}\mathcal{U}_{-}(\lambda ),\mathcal{U}_{-}(\mu )]=0.
\end{equation}%
Moreover, it admits the following explicit expression:%
\begin{equation}
\det_{q}\mathcal{U}_{-}(\lambda )=\sinh (2\lambda -2\eta )\mathsf{A}%
_{-}(\lambda +\eta /2)\mathsf{A}_{-}(-\lambda +\eta /2),  \label{ADMFKq-detU_-exp}
\end{equation}%
where:%
\begin{equation}
\mathsf{A}_{-}(\lambda )\equiv g_{-}(\lambda )a(\lambda )d(-\lambda ),\text{
\ }d(\lambda )\equiv a(\lambda -\eta ),\text{ \ \ }a(\lambda )\equiv
\prod_{n=1}^{\mathsf{N}}\sinh (\lambda -\xi _{n}+\eta /2),
\end{equation}%
and:%
\begin{equation}
g_{\pm }(\lambda )\equiv \frac{\sinh (\lambda +\alpha _{\pm }\pm \eta
/2)\cosh (\lambda +\beta _{\pm }\pm \eta /2)}{\sinh \alpha _{\pm }\cosh
\beta _{\pm }},  \label{ADMFKg_PM}
\end{equation}%
where $\alpha _{\pm }$\ and $\beta _{\pm }$ are defined in terms of the
boundary parameters by:%
\begin{equation}
\sinh \alpha _{\pm }\cosh \beta _{\pm }\equiv \frac{\sinh \zeta _{\pm }}{%
2\kappa _{\pm }},\text{ \ \ \ \ \ }\cosh \alpha _{\pm }\sinh \beta _{\pm
}\equiv \frac{\cosh \zeta _{\pm }}{2\kappa _{\pm }}.  \label{ADMFKalfa-beta}
\end{equation}%
Moreover, the generator families $\mathcal{A}_{-}(\lambda )$ and $\mathcal{D}%
_{-}(\lambda )$ are related by the following parity relation:%
\begin{equation}
\mathcal{D}_{-}(\lambda )=\frac{\sinh (2\lambda -\eta )}{\sinh 2\lambda }%
\mathcal{A}_{-}(-\lambda )+\frac{\sinh \eta }{\sinh 2\lambda }\mathcal{A}%
_{-}(\lambda ),  \label{ADMFKSym-A-D-}
\end{equation}%
while for the other two families the following parity relations hold:%
\begin{equation}
\mathcal{B}_{-}(-\lambda )=-\frac{\sinh (2\lambda +\eta )}{\sinh (2\lambda
-\eta )}\mathcal{B}_{-}(\lambda )\text{ },\text{ \ }\mathcal{C}_{-}(-\lambda
)=-\frac{\sinh (2\lambda +\eta )}{\sinh (2\lambda -\eta )}\mathcal{C}%
_{-}(\lambda ).  \label{ADMFKSym-B-C-}
\end{equation}
\end{proposition}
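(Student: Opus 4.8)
The plan is to prove the four assertions in the order in which each feeds the next: centrality of $\det_q\mathcal{U}_-$, then the parity relations $(\ref{ADMFKSym-A-D-})$ and $(\ref{ADMFKSym-B-C-})$, then the equality of the two expressions $(\ref{ADMFKq-detU_1})$ and $(\ref{ADMFKq-detU_2})$, and finally the explicit factorised value $(\ref{ADMFKq-detU_-exp})$. For centrality I would follow Sklyanin: the combination in $(\ref{ADMFKq-detU_1})$ is exactly the one-dimensional component that the antisymmetric projector $P^-_{12}$ (the rank-one projector onto the antisymmetric line of $\mathrm{R}_1\otimes\mathrm{R}_2$, where $R_{12}$ degenerates at the relevant shift of the spectral parameter) picks out of the two-fold auxiliary product built from $\mathcal{U}_-$. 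Because $P^-_{12}$ is rank one, the reflection equation $(\ref{ADMFKbYB})$ lets one transport it past an arbitrary factor $\mathcal{U}_-(\mu)$, so the surviving scalar coefficient is forced to commute with every generator; hence $[\det_q\mathcal{U}_-(\lambda),\mathcal{U}_-(\mu)]=0$, and on an irreducible module it is a multiple of the identity.

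The parity relations are where the real work lies, and they cannot come from $(\ref{ADMFKbYB})$ alone, since that relation is invariant under rescaling $\mathcal{U}_-$ by a scalar function whereas $(\ref{ADMFKSym-B-C-})$ is not. I would therefore use the explicit construction $\mathcal{U}_-(\lambda)=M(\lambda)K_-(\lambda)\hat{M}(\lambda)$ together with the crossing form $\hat{M}(\lambda)=(-1)^{\mathsf N}\sigma_0^y M^{t_0}(-\lambda)\sigma_0^y$ from $(\ref{ADMFKT})$. A short manipulation gives the clean identity $\sigma_0^y\,\mathcal{U}_-^{t_0}(-\lambda)\,\sigma_0^y=M(\lambda)\,K(\lambda+\eta/2;\zeta_-,\delta_-,\tau_-)\,\hat{M}(\lambda)$, i.e. the matrix elements $\mathcal{A}_-(-\lambda),\mathcal{D}_-(-\lambda),\mathcal{B}_-(-\lambda),\mathcal{C}_-(-\lambda)$ are the entries of the \emph{same} sandwich $M\,K\,\hat{M}$, but with the boundary matrix evaluated at the shifted argument $\lambda+\eta/2$ rather than $\lambda-\eta/2$. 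The parity relations then amount to comparing this shifted-boundary monodromy with $\mathcal{U}_-(\lambda)$: expanding both sandwiches in the bulk entries $A,B,C,D$ at $\pm\lambda$, reordering with the bulk Yang--Baxter relations $(\ref{ADMFKYB})$, and feeding in the explicit $\sinh$-entries of $K$, the extra terms produced by the $\eta$-shift reorganise into exactly the $\sinh$-ratios of $(\ref{ADMFKSym-A-D-})$ and $(\ref{ADMFKSym-B-C-})$, the diagonal comparison giving the $\mathcal{A}_-$--$\mathcal{D}_-$ relation and the off-diagonal one the pure-parity relations for $\mathcal{B}_-,\mathcal{C}_-$.

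With $(\ref{ADMFKSym-A-D-})$ available, the equality of $(\ref{ADMFKq-detU_1})$ and $(\ref{ADMFKq-detU_2})$ is algebraic: both are expansions of the same $P^-_{12}$-projection, so it suffices to substitute $\mathcal{D}_-(\pm\lambda+\eta/2)$ in the $\mathcal{D}\mathcal{D}-\mathcal{C}\mathcal{B}$ form by its expression in the $\mathcal{A}_-(\cdot)$, and then collapse the result onto the $\mathcal{A}\mathcal{A}-\mathcal{B}\mathcal{C}$ form using the reflection-algebra commutation relation between $\mathcal{B}_-$ and $\mathcal{C}_-$; the $\sinh$-prefactors combine and the two expressions coincide. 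For the explicit value $(\ref{ADMFKq-detU_-exp})$ I would \emph{not} evaluate on the all-up reference state, because for non-diagonal $K_-$ that vector is not annihilated by $\mathcal{C}_-$, so the cross term in $(\ref{ADMFKq-detU_1})$ does not drop. Instead I would use the multiplicativity of the quantum determinant under $\mathcal{U}_-=M K_-\hat{M}$: it factorises into the bulk quantum determinants of $M$ and $\hat{M}$ (known scalars, i.e. the $a(\cdot)d(\cdot)$ products at $+\lambda$ and, through $\hat{M}$, at $-\lambda$) times the boundary quantum determinant of $K_-$; the latter is the scalar $\det K_-$, and the parametrisation $(\ref{ADMFKalfa-beta})$ of the boundary data through $\alpha_-,\beta_-$ is precisely what makes it factor into the two $g_-$ factors of $(\ref{ADMFKg_PM})$. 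Assembling the pieces reproduces $\sinh(2\lambda-2\eta)\,\mathsf A_-(\lambda+\eta/2)\,\mathsf A_-(-\lambda+\eta/2)$.

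The main obstacle is the parity step: it is the only genuinely computational part, since one must track the terms generated by the $\eta$-shift of the boundary matrix through the full product $M\,K\,\hat{M}$ and verify that they repackage into the stated $\sinh$-ratios; the companion bookkeeping in the explicit value---checking that $\det K_-$ reorganises through $(\ref{ADMFKalfa-beta})$ into $g_-(\lambda+\eta/2)\,g_-(-\lambda+\eta/2)$---is of the same flavour. Everything else is either the standard rank-one projector argument for centrality or formal manipulation that follows once the parity relations are fixed.
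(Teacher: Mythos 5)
Your proposal is correct in substance and, for three of the four assertions, coincides with the paper's route: the rank-one antisymmetric-projector argument for centrality is exactly what underlies Sklyanin's definition of $\det_q\mathcal{U}_-$ through the contraction with $R_{12}(-\eta)$; the explicit value (\ref{ADMFKq-detU_-exp}) is obtained in the paper precisely as you propose, via the factorisation $\det_q\mathcal{U}_-(\lambda)=\det_qK_-(\lambda)\det_qM_0(\lambda)\det_qM_0(-\lambda)$ followed by rewriting $\det_qK_-$ through the parameters (\ref{ADMFKalfa-beta}) (and your remark that evaluation on the reference state would fail for non-diagonal $K_-$ is exactly why the multiplicativity route is the right one); and the equivalence of (\ref{ADMFKq-detU_1}) and (\ref{ADMFKq-detU_2}) is likewise deduced from the parity relations. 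The one genuine divergence is the parity step: the paper simply identifies (\ref{ADMFKSym-A-D-})--(\ref{ADMFKSym-B-C-}) with formula (46) of Sklyanin's article and imports them, whereas you rederive them from the crossing form of $\hat M$ together with the identity $\sigma_0^yK^{t_0}(-\lambda;\zeta,\delta,\tau)\sigma_0^y=K(\lambda;\zeta,\delta,\tau)$ for the scalar boundary matrix. That route is viable and has the merit of being self-contained for general non-diagonal $K_-$, but be aware that your ``clean identity'' $\sigma_0^y\,\mathcal{U}_-^{t_0}(-\lambda)\,\sigma_0^y=M(\lambda)K(\lambda+\eta/2)\hat M(\lambda)$ holds only up to the order of the non-commuting entries of $M$ and $\hat M$: the transpose in the auxiliary space reverses that order, so the Yang--Baxter reordering you defer to the next step is not optional bookkeeping but the actual content of the computation (compare the ``q-reverse-order'' prescription the paper needs in the proof of Proposition \ref{ADMFKHerm-conj}). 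Also, your claim that the parity relations ``cannot come from the reflection equation alone'' is correct as stated (because of the rescaling freedom), but slightly overstates the case: Sklyanin's derivation uses the reflection equation \emph{plus} the known Laurent degree of the entries, which removes that freedom without invoking the explicit bulk decomposition. In short: same skeleton as the paper, with a more explicit (and more laborious) substitute for the one step the paper outsources to Sklyanin.
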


\begin{proof}
This proposition is just a rephrasing and a simple extension to the case of
general non-diagonal $K_{-}(\lambda )$ boundary matrix of the results stated
in Propositions 5, 6 and 7 of Sklyanin's article \cite{ADMFKSkly88}. The
quantum determinant as defined in formulae (38)$_{{\small\cite{ADMFKSkly88}}}$-(42)$_{{\small
\cite{ADMFKSkly88}}}$:%
\begin{eqnarray}
\det_{q}\mathcal{U}_{-}(\lambda ) &\equiv &\widetilde{\mathcal{U}}%
_{-}(\lambda -\eta /2)\mathcal{U}_{-}(\lambda +\eta /2)=\mathcal{U}%
_{-}(\lambda +\eta /2)\widetilde{\mathcal{U}}_{-}(\lambda -\eta /2) \\
&=&\mathcal{A}_{-}(\lambda +\eta /2)\widetilde{\mathcal{D}}_{-}(\lambda
-\eta /2)-\mathcal{B}_{-}(\lambda +\eta /2)\widetilde{\mathcal{D}}%
_{-}(\lambda -\eta /2)  \label{ADMFKq-detU_-inter}
\end{eqnarray}%
is central independently from $K_{-}(\lambda )$ being diagonal or
non-diagonal. Here, we have used the definition (43)$_{\small{\cite{ADMFKSkly88}}}$:%
\begin{align}
\widetilde{\mathcal{U}}_{-}(\lambda )& \equiv -\frac{tr_{12}R_{12}(-\eta
)\left( \mathcal{U}_{-}\right) _{2}(\lambda )R_{21}(2\lambda )}{\sinh \eta }%
=\left( 
\begin{array}{cc}
\widetilde{\mathcal{D}}_{-}(\lambda ) & -\widetilde{\mathcal{B}}_{-}(\lambda
) \\ 
-\widetilde{\mathcal{C}}_{-}(\lambda ) & \widetilde{\mathcal{A}}_{-}(\lambda
)%
\end{array}%
\right) \\
& \equiv \left( 
\begin{array}{cc}
\mathcal{D}_{-}(\lambda )\sinh 2\lambda -\mathcal{A}_{-}(\lambda )\sinh \eta
& -\sinh (2\lambda +\eta )\mathcal{B}_{-}(\lambda ) \\ 
-\sinh (2\lambda +\eta )\mathcal{B}_{-}(\lambda ) & \mathcal{A}_{-}(\lambda
)\sinh 2\lambda -\mathcal{D}_{-}(\lambda )\sinh \eta%
\end{array}%
\right).
\end{align}%
So, we can also write it as it follows:%
\begin{equation}
\det_{q}\mathcal{U}_{-}(\lambda )\equiv \det_{q}K_{-}(\lambda
)\det_{q}M_{0}(\lambda )\det_{q}M_{0}(-\lambda ),
\end{equation}%
where $\det_{q}M(\lambda )$ is the (bulk) quantum determinant of the
Yang-Baxter algebra:%
\begin{eqnarray}
\det_{q}M_{0}(\lambda ) &=&A(\lambda +\eta /2)D(\lambda -\eta /2)-B(\lambda
+\eta /2)C(\lambda -\eta /2)  \notag \\
&=&a(\lambda +\eta /2)d(\lambda -\eta /2).
\end{eqnarray}%
Here, we have denoted:%
\begin{equation}
M_{0}(\lambda )=\left( 
\begin{array}{cc}
A(\lambda ) & B(\lambda ) \\ 
C(\lambda ) & D(\lambda )%
\end{array}%
\right) ,
\end{equation}%
and%
\begin{eqnarray}
\det_{q}K_{-}(\lambda ) &=&\left( K_{-}\right) _{1,1}(\lambda +\eta /2)(%
\widetilde{K_{-}})_{2,2}(\lambda -\eta /2)-\left( K_{-}\right)
_{1,2}(\lambda +\eta /2)(\widetilde{K_{-}})_{2,1}(\lambda -\eta /2)  \notag
\\
&=&-\frac{\sinh (2\lambda -2\eta )}{\sinh ^{2}\zeta _{-}}(\sinh (\lambda
+\zeta _{-})\sinh (\lambda -\zeta _{-})+\kappa _{-}^{2}\sinh ^{2}2\lambda ),
\end{eqnarray}%
where:%
\begin{equation}
\widetilde{K_{-}}(\lambda )\equiv \sinh (2\lambda -\eta )\sigma
_{0}^{y}K_{-}^{t_{0}}(-\lambda )\sigma _{0}^{y}.
\end{equation}%
Then the explicit expression (\ref{ADMFKq-detU_-exp})\ follows observing that it
holds:%
\begin{equation}
\det_{q}K_{-}(\lambda )=\sinh (2\lambda -2\eta )g_{-}(\lambda +\eta
/2)g_{-}(-\lambda +\eta /2),
\end{equation}%
when we use the parameters $\alpha _{-}$ and $\beta _{-}$.

Finally, it is simple to remark that formula (46)$_{\small{\cite{ADMFKSkly88}}}$ is
equivalent to the symmetry properties (\ref{ADMFKSym-A-D-})\ and (\ref{ADMFKSym-B-C-}%
), which in turn imply the expressions for the quantum determinant (\ref{ADMFKq-detU_1})\ and (\ref{ADMFKq-detU_2}), when used to rewrite formula (\ref{ADMFKq-detU_-inter}).
\end{proof}

It is worth remarking that similar statements hold for the reflection
algebra generated by $\mathcal{U}_{+}(\lambda )$. In fact, they are simply
consequences of the previous proposition when it is taken into account that $%
\mathcal{U}_{+}^{t_0}(-\lambda )$ satisfies the same reflection equation of $%
\mathcal{U}_{-}(\lambda )$.

\begin{proposition}[$\mathcal{U}_{+}$-reflection algebra]
In the reflection algebra generated by the elements of $\,\mathcal{U}%
_{+}(\lambda )$ the quantum determinant:%
\begin{align}
\det_{q}\mathcal{U}_{+}(\lambda )& =\sinh (2\lambda +2\eta )[\mathcal{A}%
_{+}(\lambda -\eta /2)\mathcal{A}_{+}(-\lambda -\eta /2)-\mathcal{B}%
_{+}(\lambda -\eta /2)\mathcal{C}_{+}(-\lambda -\eta /2)] \\
& =\sinh (2\lambda +2\eta )[\mathcal{D}_{+}(-\lambda -\eta /2)\mathcal{D}%
_{+}(\lambda -\eta /2)-\mathcal{C}_{+}(-\lambda -\eta /2)\mathcal{B}%
_{+}(\lambda -\eta /2)],
\end{align}%
is central:%
\begin{equation}
\lbrack \det_{q}\mathcal{U}_{+}(\lambda ),\mathcal{U}_{+}(\mu )]=0.
\end{equation}%
Moreover, it admits the following explicit expression:%
\begin{equation}
\det_{q}\mathcal{U}_{+}(\lambda )=\sinh (2\lambda +2\eta )\mathsf{D}%
_{+}(\lambda -\eta /2)\mathsf{D}_{+}(-\lambda -\eta /2),
\end{equation}%
where the function $\mathsf{D}_{+}(\lambda )$ is defined by:%
\begin{equation}
\mathsf{D}_{+}(\lambda )=g_{+}(\lambda )a(-\lambda )d(\lambda ),
\end{equation}%
where $g_{+}(\lambda )$ is defined in $(\ref{ADMFKg_PM})$. Moreover, the
generator families $\mathcal{A}_{+}(\lambda )$ and $\mathcal{D}_{+}(\lambda )$
are related by the following parity relation:%
\begin{equation}
\mathcal{D}_{+}(\lambda )=\frac{\sinh (2\lambda +\eta )}{\sinh 2\lambda }%
\mathcal{A}_{+}(-\lambda )-\frac{\sinh \eta }{\sinh 2\lambda }\mathcal{A}%
_{+}(\lambda ),  \label{ADMFKSym-A-D+}
\end{equation}%
while for the other two families the following parity relations hold:%
\begin{equation}
\mathcal{B}_{+}(-\lambda )=-\frac{\sinh (2\lambda -\eta )}{\sinh (2\lambda
+\eta )}\mathcal{B}_{+}(\lambda )\text{ },\text{ \ }\mathcal{C}_{+}(-\lambda
)=-\frac{\sinh (2\lambda -\eta )}{\sinh (2\lambda +\eta )}\mathcal{C}%
_{+}(\lambda ).  \label{ADMFKSym-B-C+}
\end{equation}
\end{proposition}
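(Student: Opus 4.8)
The plan is to do no fresh direct computation for this proposition but to derive it entirely from Proposition 2.1 via the auxiliary matrix $\mathcal{V}_+(\lambda)\equiv\mathcal{U}_+^{t_0}(-\lambda)$, which, as recorded when it was introduced, solves the same reflection equation \rf{bYB} as $\mathcal{U}_-(\lambda)$. Hence every statement of Proposition 2.1 applies verbatim to $\mathcal{V}_+$, with its entries $\mathcal{A}^{\mathcal V},\mathcal{B}^{\mathcal V},\mathcal{C}^{\mathcal V},\mathcal{D}^{\mathcal V}$ (in the standard $2\times2$ layout) playing the roles of $\mathcal{A}_-,\mathcal{B}_-,\mathcal{C}_-,\mathcal{D}_-$. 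The first step is to fix the dictionary relating these to the entries of $\mathcal{U}_+$. Reading $\mathcal{U}_+^{t_0}(\lambda)$ entrywise and applying the reflection $\lambda\mapsto-\lambda$ gives $\mathcal{A}^{\mathcal V}(\lambda)=\mathcal{A}_+(-\lambda)$, $\mathcal{D}^{\mathcal V}(\lambda)=\mathcal{D}_+(-\lambda)$, $\mathcal{B}^{\mathcal V}(\lambda)=\mathcal{C}_+(-\lambda)$ and $\mathcal{C}^{\mathcal V}(\lambda)=\mathcal{B}_+(-\lambda)$; the crucial point is that the transposition $t_0$ exchanges the off-diagonal labels $\mathcal{B}_+\leftrightarrow\mathcal{C}_+$.

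The parity relations then follow by pure substitution. Writing \rf{Sym-A-D-} for $\mathcal{V}_+$ and inserting the dictionary turns it into $\mathcal{D}_+(-\lambda)=\tfrac{\sinh(2\lambda-\eta)}{\sinh 2\lambda}\mathcal{A}_+(\lambda)+\tfrac{\sinh\eta}{\sinh 2\lambda}\mathcal{A}_+(-\lambda)$; replacing $\lambda\mapsto-\lambda$ and using $\sinh(-2\lambda)=-\sinh 2\lambda$ reproduces \rf{Sym-A-D+} exactly, the characteristic sign flip in front of the $\sinh\eta$ term being generated precisely by that odd denominator. The two relations \rf{Sym-B-C+} arise the same way from the $\mathcal{V}_+$-version of \rf{Sym-B-C-}, the ratio $\sinh(2\lambda+\eta)/\sinh(2\lambda-\eta)$ simply inverting under $\lambda\mapsto-\lambda$.

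For the quantum determinant, centrality is inherited at no cost: $\det_q\mathcal{V}_+(\lambda)$ commutes with every $\mathcal{V}_+(\mu)$ by Proposition 2.1, and since $\det_q\mathcal{U}_+$ differs from $\det_q\mathcal{V}_+$ only by the scalar reparametrization $\lambda\mapsto-\lambda$ while the entries of $\mathcal{U}_+(\mu)$ are those of $\mathcal{V}_+$ at the reflected argument, the vanishing commutator is preserved. The two operatorial expressions for $\det_q\mathcal{U}_+$ are obtained by substituting the dictionary into the two equivalent forms \rf{q-detU_-exp} carries (the $\mathcal{A}$-row and $\mathcal{D}$-row expansions), the $\mathcal{B}\leftrightarrow\mathcal{C}$ swap accounting for which of the two one lands on. For the explicit scalar form I would compute $\det_q K_+(\lambda)$ exactly as $\det_q K_-$ was computed in the previous proof, obtaining $\det_q K_+(\lambda)=\sinh(2\lambda-2\eta)\,g_+(\lambda+\eta/2)g_+(-\lambda+\eta/2)$ once the parameters $\alpha_+,\beta_+$ of \rf{alfa-beta} are used, and combine it with the bulk quantum determinant $a(\lambda+\eta/2)d(\lambda-\eta/2)$. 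Tracking $\lambda\mapsto-\lambda$ then converts the $\mathsf{A}$-type bulk factor $a(\lambda)d(-\lambda)$ into the $\mathsf{D}$-type factor $a(-\lambda)d(\lambda)$ and shifts the prefactor from $\sinh(2\lambda-2\eta)$ to $\sinh(2\lambda+2\eta)$, yielding $\mathsf{D}_+(\lambda)=g_+(\lambda)a(-\lambda)d(\lambda)$.

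I expect the genuine obstacle to be bookkeeping rather than conceptual: one must keep the transposition and the reflection $\lambda\mapsto-\lambda$ consistently aligned, and in particular handle two independent sources of shifts — the opposite sign of the $\pm\eta/2$ built into the definitions $K_\pm(\lambda)=K(\lambda\pm\eta/2;\dots)$, and the reflection itself — so that the overall prefactor emerges as $\sinh(2\lambda+2\eta)$, the two $\mathsf{D}_+$ factors carry arguments $\lambda-\eta/2$ and $-\lambda-\eta/2$, and the $g_+$ arguments match after reflection (where the fact that $g_+$ is \emph{not} even must be tracked carefully). A possible overall sign in the scalar identity, coming from $\sinh(-2\lambda-2\eta)=-\sinh(2\lambda+2\eta)$, has to be absorbed consistently with the normalization of $\det_q$. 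Once the dictionary of the first step is pinned down, every remaining manipulation is forced.
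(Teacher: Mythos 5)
Your proposal is correct and follows essentially the same route as the paper: the remark preceding the proposition already reduces everything to Proposition 2.1 via the observation that $\mathcal{U}_{+}^{t_{0}}(-\lambda )$ satisfies the same reflection equation as $\mathcal{U}_{-}(\lambda )$, and the paper's proof then only records the explicit computation of $\det_{q}K_{+}(\lambda )=\sinh (2\lambda +2\eta )g_{+}(\lambda -\eta /2)g_{+}(-\lambda -\eta /2)$, exactly the one ingredient you single out as requiring a fresh calculation. Your dictionary (with the $t_{0}$-induced swap $\mathcal{B}_{+}\leftrightarrow \mathcal{C}_{+}$) and the sign/shift bookkeeping you flag are handled consistently, so the argument goes through.
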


\begin{proof}
Here, it is only need to notice that:%
\begin{eqnarray}
\det_{q}K_{+}(\lambda ) &=&\left( K_{+}\right) _{1,1}(\lambda -\eta /2)(%
\widetilde{K_{+}})_{2,2}(\lambda +\eta /2)-\left( K_{+}\right)
_{1,2}(\lambda -\eta /2)(\widetilde{K_{+}})_{2,1}(\lambda +\eta /2)  \notag
\\
&=&-\frac{\sinh (2\lambda +2\eta )}{\sinh ^{2}\zeta _{+}}(\sinh (\lambda
+\zeta _{+})\sinh (\lambda -\zeta _{+})+\kappa _{+}^{2}\sinh ^{2}2\lambda ),
\end{eqnarray}%
where:%
\begin{equation}
\widetilde{K_{+}}(\lambda )\equiv \sinh (2\lambda +\eta )\sigma
_{0}^{y}K_{+}^{t_{0}}(-\lambda )\sigma _{0}^{y},
\end{equation}%
can be written in the form:%
\begin{equation}
\det_{q}K_{+}(\lambda )=\sinh (2\lambda +2\eta )g_{+}(\lambda -\eta
/2)g_{+}(-\lambda -\eta /2)),
\end{equation}%
where the parameters $\alpha _{+}$\ and $\beta _{+}$, entering in the
function $g_{+}(\lambda )$, are defined in terms of the parameters of the $K_{+}(\lambda )$ boundary matrix in $(\ref{ADMFKalfa-beta})$.
\end{proof}

Let us introduce the following notations:%
\begin{equation}
K_{\pm }(\lambda )\equiv \frac{1}{\sinh \zeta _{\pm }}\left( 
\begin{array}{cc}
\sinh (\lambda +\zeta _{\pm }\pm \eta /2) & \kappa _{\pm }e^{\tau _{\pm
}}\sinh (2\lambda \pm \eta ) \\ 
\kappa _{\pm }e^{-\tau _{\pm }}\sinh (2\lambda \pm \eta ) & \sinh (\zeta
_{\pm }\mp \eta /2-\lambda )%
\end{array}%
\right) =\left( 
\begin{array}{cc}
a_{\pm }\left( \lambda \right) & b_{\pm }\left( \lambda \right) \\ 
c_{\pm }\left( \lambda \right) & d_{\pm }\left( \lambda \right)%
\end{array}%
\right) ,
\end{equation}%
and by using these notations let us rewrite the transfer matrix $\left( \ref{ADMFKtransfer}\right) $ in the following two equivalent forms:%
\begin{equation}
\mathcal{T}(\lambda )=\mathcal{T}_{\setminus }^{(\pm )}(\lambda )+b_{\mp
}\left( \lambda \right) \mathcal{C}_{\pm }(\lambda )+c_{\mp }\left( \lambda
\right) \mathcal{B}_{\pm }(\lambda ),
\end{equation}%
where: 
\begin{equation}
\mathcal{T}_{\setminus }^{(\pm )}(\lambda )\equiv a_{\mp }\left( \lambda
\right) \mathcal{A}_{\pm }(\lambda )+d_{\mp }\left( \lambda \right) \mathcal{%
D}_{\pm }(\lambda ),  \label{ADMFKTr-pm-Diag}
\end{equation}%
are the transfer matrix\footnote{%
Note that $\mathcal{T}_{\setminus }^{(+)}(\lambda )$ corresponds to $%
K_{-}(\lambda )$ diagonal while $K_{+}(\lambda )$ is left general as well as 
$\mathcal{T}_{\setminus }^{(-)}(\lambda )$ corresponds to $K_{+}(\lambda )$
diagonal while $K_{-}(\lambda )$ is left general.} of the system with
diagonal matrix $K_{\mp }(\lambda )$, respectively, then:

\begin{corollary}
$\mathcal{T}_{\setminus }^{(\pm )}(\lambda )$ admits the following explicitly
even forms w.r.t. the spectral parameter $\lambda $:
\begin{eqnarray}
\mathcal{T}_{\setminus }^{(\pm )}(\lambda ) &\equiv &\mathsf{a}_{\mp
}(\lambda )\mathcal{A}_{\pm }(\lambda )+\mathsf{a}_{\mp }(-\lambda )\mathcal{%
A}_{\pm }(-\lambda )  \label{ADMFKT-diag-pm-A} \\
&=&\mathsf{d}_{\mp }(\lambda )\mathcal{D}_{\pm }(\lambda )+\mathsf{d}_{\mp
}(-\lambda )\mathcal{D}_{\pm }(-\lambda ),  \label{ADMFKT-diag-pm-D}
\end{eqnarray}%
where:%
\begin{eqnarray}
\mathsf{a}_{\pm }(\lambda ) &\equiv &\frac{\sinh (2\lambda \pm \eta )\sinh
(\lambda +\zeta _{\pm }\mp \eta /2)}{\sinh 2\lambda \sinh \zeta _{\pm }}, \\
\mathsf{d}_{\pm }(\lambda ) &\equiv &\frac{\sinh (2\lambda \pm \eta )\sinh
(\zeta _{\pm }-\lambda \pm \eta /2)}{\sinh 2\lambda \sinh \zeta _{\pm }}.
\end{eqnarray}%
Moreover, also the most general transfer matrix is even in the spectral
parameter $\lambda $: 
\begin{equation}
\mathcal{T}(-\lambda )=\mathcal{T}(\lambda ).  \label{ADMFKeven-transfer}
\end{equation}
\end{corollary}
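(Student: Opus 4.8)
The plan is to deduce the whole statement from the parity relations already proved in the two preceding propositions, so that only elementary hyperbolic identities remain to be checked. First I would establish the $\mathcal{A}$-form \rf{T-diag-pm-A}. Starting from the definition \rf{Tr-pm-Diag}, I substitute the parity relation \rf{Sym-A-D+} (respectively \rf{Sym-A-D-}) so as to trade $\mathcal{D}_{\pm}(\lambda)$ for a linear combination of $\mathcal{A}_{\pm}(\lambda)$ and $\mathcal{A}_{\pm}(-\lambda)$. Collecting terms produces an expression $f(\lambda)\,\mathcal{A}_{\pm}(\lambda)+h(\lambda)\,\mathcal{A}_{\pm}(-\lambda)$ whose coefficients $f,h$ are explicit scalar functions built from $a_{\mp},d_{\mp}$ and the ratios $\sinh(2\lambda\pm\eta)/\sinh 2\lambda$ and $\sinh\eta/\sinh 2\lambda$. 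It then remains to verify $f(\lambda)=\mathsf{a}_{\mp}(\lambda)$ and $h(\lambda)=\mathsf{a}_{\mp}(-\lambda)$. The identification of $h$ is immediate once $d_{\mp}$ is written in its $\sinh(\zeta_{\mp}\mp\eta/2-\lambda)$ form, while that of $f$ reduces to a single product-to-sum identity of the type $\sinh(\lambda+\zeta-\eta/2)\sinh 2\lambda-\sinh(\zeta+\eta/2-\lambda)\sinh\eta=\sinh(2\lambda-\eta)\sinh(\lambda+\zeta+\eta/2)$, which I would check by expanding both sides with $\sinh A\,\sinh B=\tfrac12[\cosh(A+B)-\cosh(A-B)]$ and using that $\cosh$ is even.

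For the $\mathcal{D}$-form \rf{T-diag-pm-D} I would proceed symmetrically. Applying $\lambda\to-\lambda$ to \rf{Sym-A-D+} produces a second relation, and the resulting pair can be solved to express $\mathcal{A}_{\pm}(\pm\lambda)$ linearly in $\mathcal{D}_{\pm}(\pm\lambda)$; substituting back into \rf{Tr-pm-Diag} and matching coefficients yields $\mathsf{d}_{\mp}(\lambda)$ and $\mathsf{d}_{\mp}(-\lambda)$ after the analogous hyperbolic identity. The most economical route, however, is simply to re-express the same parity relation so as to interchange the roles of $\mathcal{A}$ and $\mathcal{D}$ inside \rf{Tr-pm-Diag}. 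In either case, once \rf{T-diag-pm-A} and \rf{T-diag-pm-D} are in hand, the evenness of $\mathcal{T}_{\setminus}^{(\pm)}(\lambda)$ is automatic, since each displayed form is manifestly invariant under $\lambda\to-\lambda$.

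For the full transfer matrix \rf{even-transfer} I would treat the off-diagonal part $b_{\mp}(\lambda)\mathcal{C}_{\pm}(\lambda)+c_{\mp}(\lambda)\mathcal{B}_{\pm}(\lambda)$ on its own, the diagonal part $\mathcal{T}_{\setminus}^{(\pm)}$ being already known to be even. From the definitions of $b_{\mp}$ and $c_{\mp}$ one sees that each acquires, up to sign, an odd factor $\sinh(2\lambda+\eta)/\sinh(2\lambda-\eta)$ under $\lambda\to-\lambda$, while the parity relations \rf{Sym-B-C+} (respectively \rf{Sym-B-C-}) show that $\mathcal{C}_{\pm}$ and $\mathcal{B}_{\pm}$ pick up exactly the reciprocal odd factor; the two cancel, giving $b_{\mp}(-\lambda)\mathcal{C}_{\pm}(-\lambda)=b_{\mp}(\lambda)\mathcal{C}_{\pm}(\lambda)$ and likewise for the $c_{\mp}\mathcal{B}_{\pm}$ term. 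Adding the even diagonal piece yields $\mathcal{T}(-\lambda)=\mathcal{T}(\lambda)$.

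I do not expect any genuine obstacle: the content is entirely contained in the parity relations \rf{Sym-A-D-} through \rf{Sym-B-C+}, and the only real work is the careful bookkeeping of the $\pm$ signs and the $\eta/2$ shifts in the hyperbolic identities. The one point demanding attention is to keep the two diagonal-boundary conventions ($K_{-}$ versus $K_{+}$ diagonal, i.e.\ the upper versus lower choice of signs) consistent throughout, so that the coefficient functions $\mathsf{a}_{\mp}$ and $\mathsf{d}_{\mp}$ emerge with precisely the arguments displayed in the statement.
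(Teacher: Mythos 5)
Your proposal is correct and follows essentially the same route as the paper: the diagonal part is rewritten purely in terms of $\mathcal{A}_{\pm}$ (or $\mathcal{D}_{\pm}$) via the parity relations (\ref{ADMFKSym-A-D-})--(\ref{ADMFKSym-A-D+}), the coefficients are matched by elementary hyperbolic identities, and the evenness of the off-diagonal terms follows from (\ref{ADMFKSym-B-C-}) and (\ref{ADMFKSym-B-C+}). The only difference is that you spell out the ``simple algebra'' the paper leaves implicit, and your key identity and coefficient identifications check out.
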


\begin{proof}
By using the formulae $\left( \ref{ADMFKSym-A-D-}\right) $ and $\left( \ref{ADMFKSym-A-D+}\right) $ to rewrite $\mathcal{T}_{\setminus }^{(\pm )}(\lambda )$
only in terms of $\mathcal{A}_{\pm }(\lambda )$\ or only in terms of $%
\mathcal{D}_{\pm }(\lambda )$ after some simple algebra we get our formulae $%
\left( \ref{ADMFKT-diag-pm-A}\right) $ and $\left( \ref{ADMFKT-diag-pm-D}\right) $,
respectively. Then the parity $\left( \ref{ADMFKeven-transfer}\right) $ of the
transfer matrix $\mathcal{T}(\lambda )$ follows remarking that the parity
properties:%
\begin{equation}
b_{\mp }\left( -\lambda \right) \mathcal{C}_{\pm }(-\lambda )=b_{\mp }\left(
\lambda \right) \mathcal{C}_{\pm }(\lambda ),\text{ \ \ }c_{\mp }\left(
-\lambda \right) \mathcal{B}_{\pm }(-\lambda )=c_{\mp }\left( \lambda
\right) \mathcal{B}_{\pm }(\lambda ),
\end{equation}%
are just a rewriting of the known properties $\left( \ref{ADMFKSym-B-C-}\right) $
and $\left( \ref{ADMFKSym-B-C+}\right) $.
\end{proof}

\begin{proposition}\label{ADMFKHerm-conj}
The monodromy matrix \,$\mathcal{U}_{\pm }(\lambda )$ satisfy the following
transformation properties under Hermitian conjugation:\newline
\textsf{I)} Under the condition $\eta \in i\mathbb{R}$ (massless regime), it holds:
\begin{equation}
\mathcal{U}_{\pm }(\lambda )^{\dagger }=\left[ \mathcal{U}_{\pm }(-\lambda
^{\ast })\right] ^{t_{0}}  \label{ADMFKml-Hermitian_U},
\end{equation}%
\ \ \ \ \ for $\{i\tau _{\pm },i\kappa _{\pm },i\zeta _{\pm },\xi
_{1},...,\xi _{\mathsf{N}}\}\in \mathbb{R}^{\mathsf{N}+3}.$\newline
\textsf{II)} Under the condition $\eta \in\mathbb{R}$ (massive regime), it holds:
\begin{equation}
\mathcal{U}_{\pm }(\lambda )^{\dagger }=\left[ \mathcal{U}_{\pm }(\lambda
^{\ast })\right] ^{t_{0}}  \label{ADMFKm-Hermitian_U},
\end{equation}%
\ \ \ \ \ for $\{\tau _{\pm },\kappa _{\pm },\zeta _{\pm },i\xi
_{1},...,i\xi _{\mathsf{N}}\}\in \mathbb{R}^{\mathsf{N}+3}.$\newline
Under the same conditions on the parameters of the representation it holds:
\begin{equation}
\mathcal{T}(\lambda )^{\dagger }=\mathcal{T}(\lambda ^{\ast }),
\end{equation}%
i.e. $\mathcal{T}(\lambda )$ defines a one-parameter\ family of normal
operators which are self-adjoint both for $\lambda $ real and imaginary.
\end{proposition}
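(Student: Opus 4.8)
The plan is to reduce the statement for $\mathcal T(\lambda)$ to the two monodromy identities $\mathcal U_\pm(\lambda)^\dagger=[\mathcal U_\pm(\mp\lambda^\ast)]^{t_0}$ (upper sign in the massless regime $\eta\in i\mathbb R$, lower sign in the massive one $\eta\in\mathbb R$), and then to obtain these from the Hermitian-conjugation behaviour of the three building blocks $R$, $K_\pm$ and $M_0$. For the transfer matrix itself I would start from $\mathcal T(\lambda)=\mathrm{tr}_0\{K_+(\lambda)\mathcal U_-(\lambda)\}$ in \rf{transfer}, use that the quantum adjoint commutes with the partial trace, i.e. $(\mathrm{tr}_0 X)^\dagger=\mathrm{tr}_0(X^\dagger)$, and rewrite $\mathrm{tr}_0\{X\}=\mathrm{tr}_0\{X^{t_0}\}$ together with $(AB)^{t_0}=B^{t_0}A^{t_0}$. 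The double transpose then removes the $t_0$'s and, together with the conjugation rule for $K_+$, collapses $\mathcal T(\lambda)^\dagger$ into $\mathrm{tr}_0\{K_+(\mp\lambda^\ast)\mathcal U_-(\mp\lambda^\ast)\}=\mathcal T(\mp\lambda^\ast)$. In the massless case the residual sign is absorbed by the even parity \rf{even-transfer} of the corollary, so that in both regimes one lands on $\mathcal T(\lambda)^\dagger=\mathcal T(\lambda^\ast)$.

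The first real computation is the conjugation of the constituents. Since the $R$-matrix is symmetric, $R(\lambda)^\dagger=\overline{R(\lambda)}$, and from the explicit entries $\sinh(\lambda+\eta),\sinh\lambda,\sinh\eta$ one reads off $R(\lambda)^\dagger=R(\lambda^\ast)$ when $\eta\in\mathbb R$ and $R(\lambda)^\dagger=-R(-\lambda^\ast)$ when $\eta\in i\mathbb R$. Entrywise conjugation of $K_\pm(\lambda)$ under the stated reality conditions on $\{\tau_\pm,\kappa_\pm,\zeta_\pm,\xi_n\}$ gives, in the massive regime, $\overline{K_\pm(\lambda)}=K_\pm(\lambda^\ast)$, hence $K_\pm(\lambda)^\dagger=[K_\pm(\lambda^\ast)]^{t_0}$; in the massless regime the diagonal entries still match $K_\pm(\mp\lambda^\ast)$ while the off-diagonal ones pick up a relative sign, so the clean identity only holds up to a $\sigma_0^z$-conjugation. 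I would also record the two invariances of $R$ that let me move auxiliary-space matrices through the bulk: the crossing relation $\sigma_0^y R^{t_0}(\lambda)\sigma_0^y\propto R(-\lambda-\eta)$ already used in the definition \rf{T} of $\hat M$, and the $\sigma^z\!\otimes\!\sigma^z$ symmetry of $R$.

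Next I would conjugate the bulk monodromy. Reversing the order of the $R$-factors under $\dagger$ and inserting the $R$-rule above, the argument of each factor becomes $\mp\lambda^\ast+\xi_n-\eta/2$; the crucial point is that the sign flip $\xi_n\to+\xi_n$ means $M_0(\lambda)^\dagger$ is proportional not to $M_0$ but to $\hat M_0(\mp\lambda^\ast)$, the compensation being exactly the crossing transformation packaged into $\hat M(\lambda)=(-1)^{\mathsf N}\sigma_0^y M_0^{t_0}(-\lambda)\sigma_0^y$. Symmetrically $\hat M_0(\lambda)^\dagger\propto M_0(\mp\lambda^\ast)$, with the scalar factors cancelling between the two. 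Feeding these three relations into $\mathcal U_\pm(\lambda)^\dagger=\hat M_0(\lambda)^\dagger K_\pm(\lambda)^\dagger M_0(\lambda)^\dagger$ reconstitutes $M_0(\mp\lambda^\ast)K_\pm(\mp\lambda^\ast)\hat M_0(\mp\lambda^\ast)$ up to auxiliary-space $\sigma$'s and transposes.

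The main obstacle is exactly this last bookkeeping. I expect the hard part to be verifying that the auxiliary transposes $t_0$, the $\sigma_0^y$ from the crossing/definition of $\hat M$, the $\sigma_0^z$-twist produced by the off-diagonal entries of $K_\pm$ in the massless regime, and the regime-dependent overall signs from $R$ all combine into precisely one transpose $[\,\cdot\,]^{t_0}$ acting on $\mathcal U_\pm(\mp\lambda^\ast)$, with every scalar prefactor cancelling; moving the $\sigma^z,\sigma^y$ factors through $M_0$ and $\hat M_0$ (turning an auxiliary $\sigma_0^z$ into the product $\prod_n\sigma_n^z$ on the quantum space and back via the $\sigma^z\!\otimes\!\sigma^z$ invariance) is where sign errors are easiest to make. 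Once the two monodromy identities are established, the transfer-matrix statement follows as in the first paragraph, and normality plus self-adjointness are immediate: $\mathcal T(\lambda)^\dagger=\mathcal T(\lambda^\ast)$ together with the commutativity $[\mathcal T(\lambda),\mathcal T(\mu)]=0$ gives $[\mathcal T(\lambda),\mathcal T(\lambda)^\dagger]=0$, while for real $\lambda$ one has $\lambda^\ast=\lambda$ and for imaginary $\lambda$ one has $\lambda^\ast=-\lambda$ with $\mathcal T(-\lambda)=\mathcal T(\lambda)$ by \rf{even-transfer}, so $\mathcal T(\lambda)$ is self-adjoint in both cases.
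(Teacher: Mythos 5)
Your strategy coincides with the paper's: you conjugate the three building blocks $R$, $K_{\pm }$ and the bulk monodromy entrywise, reassemble them into $\mathcal{U}_{\pm }(\lambda )^{\dagger }=\hat{M}_{0}(\lambda )^{\dagger }K_{\pm }(\lambda )^{\dagger }M_{0}(\lambda )^{\dagger }$ (with the reverse ordering of the quantum-space generators that the paper calls q-reverse-order), and then obtain $\mathcal{T}(\lambda )^{\dagger }=\mathcal{T}(\lambda ^{\ast })$ from the trace manipulations together with the parity $\left( \ref{ADMFKeven-transfer}\right) $; the closing normality and self-adjointness argument is also exactly the paper's. The massive regime, where all entrywise conjugations are clean, goes through as you describe.

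The gap is in the massless regime, and it sits precisely at the step that constitutes the proposition. Your own entrywise computation yields $K_{\pm }(\lambda )^{\dagger }=\sigma _{0}^{z}K_{\pm }(-\lambda ^{\ast })\sigma _{0}^{z}$, i.e.\ a $\sigma _{0}^{z}$-twist on the off-diagonal entries, whereas the reassembly into $\left[ \mathcal{U}_{\pm }(-\lambda ^{\ast })\right] ^{t_{0}}$ requires the transpose form $\left( \ref{ADMFKml-Hermitian_K}\right) $ that the paper asserts; note these two candidate identities are genuinely different, since the transpose also exchanges $e^{\tau _{\pm }}\leftrightarrow e^{-\tau _{\pm }}$, so they differ by a factor $-e^{\pm 2\tau _{\pm }}$ on the off-diagonal entries and agree only for special $\tau _{\pm }$. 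You never reconcile the two: you only flag the combination of the $\sigma _{0}^{z}$, the $\sigma _{0}^{y}$'s from crossing and the transposes as ``the main obstacle''. Worse, the mechanism you propose for absorbing the twist --- pushing $\sigma _{0}^{z}$ through $M_{0}$ and $\hat{M}_{0}$ via the $\sigma ^{z}\otimes \sigma ^{z}$ invariance of $R$ --- does not make it disappear: it converts the auxiliary $\sigma _{0}^{z}$ into a conjugation of the whole monodromy by $\prod_{n}\sigma _{n}^{z}$, which flips the sign of $\mathcal{B}_{\pm }$ and $\mathcal{C}_{\pm }$ and therefore cannot reduce to $\left[ \mathcal{U}_{\pm }(-\lambda ^{\ast })\right] ^{t_{0}}$ for a non-diagonal $K_{\pm }$. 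Until you either rederive the conjugation of $K_{\pm }$ so as to land directly on the transpose form, or exhibit an explicit cancellation of the twist in the product $\hat{M}_{0}^{\dagger }K_{\pm }^{\dagger }M_{0}^{\dagger }$, the identities \rf{ml-Hermitian_U} --- and hence the massless half of the proposition --- are not established.
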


\begin{proof}
This proposition gives the generalization to the case of general
non-diagonal boundary conditions of the transformation properties under
Hermitian conjugation proven in \cite{ADMFKKKMNST07} for the diagonal case. The
proof is given by using the following transformation properties under
Hermitian conjugation:

\textsf{I)} For $\{i\eta ,i\tau _{\pm },i\kappa _{\pm },i\zeta _{\pm },\xi
_{1},...,\xi _{\mathsf{N}}\}\in \mathbb{R}^{\mathsf{N}+3}$ then it holds:%
\begin{eqnarray}
R_{0n}(\lambda -\xi _{1}-\eta /2)^{\dagger } &=&\sigma
_{0}^{y}R_{0n}(\lambda ^{\ast }-\xi _{1}-\eta /2)\sigma _{0}^{y}=-\left[
R_{0n}(-\lambda ^{\ast }+\xi _{1}-\eta /2)\right] ^{t_{0}}, \\
K_{\pm }(\lambda )^{\dagger } &=&\left[ K_{\pm }(-\lambda ^{\ast })\right]
^{t_{0}};  \label{ADMFKml-Hermitian_K}
\end{eqnarray}%
\textsf{II)} For $\{\eta ,\tau _{\pm },\kappa _{\pm },\zeta _{\pm },i\xi
_{1},...,i\xi _{\mathsf{N}}\}\in \mathbb{R}^{\mathsf{N}+3}$ then it holds:%
\begin{eqnarray}
R_{0n}(\lambda -\xi _{1}-\eta /2)^{\dagger } &=&-\sigma
_{0}^{y}R_{0n}(\lambda ^{\ast }-\xi _{1}-\eta /2)\sigma _{0}^{y}=\left[
R_{0n}(\lambda ^{\ast }+\xi _{1}-\eta /2)\right] ^{t_{0}} \\
K_{\pm }(\lambda )^{\dagger } &=&\left[ K_{\pm }(\lambda ^{\ast })\right]
^{t_{0}};  \label{ADMFKm-Hermitian_K}
\end{eqnarray}%
which can be verified by direct calculations. From these it follows:%
\begin{equation}
M(\lambda )^{\dagger }=\left( -\frac{\eta }{\eta ^{\ast }}\right) ^{\mathsf{N%
}}\sigma _{0}^{y}M(-\left( \frac{\eta }{\eta ^{\ast }}\right) \lambda ^{\ast
})\sigma _{0}^{y}=\left( \frac{\eta }{\eta ^{\ast }}\right) ^{\mathsf{N}}%
\left[ \hat{M}(\left( \frac{\eta }{\eta ^{\ast }}\right) \lambda ^{\ast })%
\right] ^{t_{0}},
\end{equation}%
for the (bulk) monodromy matrix and so:%
\begin{equation}
\mathcal{U}_{\pm }(\lambda )^{\dagger }=\left. \left\{ \left[ \hat{M}(\left( 
\frac{\eta }{\eta ^{\ast }}\right) \lambda ^{\ast })\right] ^{t_{0}}\left[
K_{\pm }(\left( \frac{\eta }{\eta ^{\ast }}\right) \lambda ^{\ast })\right]
^{t_{0}}\left[ M(\left( \frac{\eta }{\eta ^{\ast }}\right) \lambda ^{\ast })%
\right] ^{t_{0}}\right\} \right\vert _{\text{q-reverse-order}}
\label{ADMFKHermitian-U-0}
\end{equation}%
where the notation q-reverse-order is referred to the reverse order in the
generators of the Yang-Baxter algebras. More in details, the matrix elements
of the $2\times 2$\ matrix $\mathcal{U}_{\pm }(\lambda )^{\dagger }$ are
computed by using the normal matrix products in the auxiliary space $0$, as indicated
inside the brackets at the r.h.s. of (\ref{ADMFKHermitian-U-0}), then in each
element of the matrix $\mathcal{U}_{\pm }(\lambda )^{\dagger }$ the matrix
elements of $\left[ \hat{M}(\left( \frac{\eta }{\eta ^{\ast }}\right)
\lambda ^{\ast })\right] ^{t_{0}}$ are put to the right of those of $\left[
M(\left( \frac{\eta }{\eta ^{\ast }}\right) \lambda ^{\ast })\right]
^{t_{0}} $. It is then simple to verify that the r.h.s. of (\ref{ADMFKHermitian-U-0}) with this prescribed order coincides with:\ 
\begin{equation}
\left[ \mathcal{U}_{\pm }(\left( \frac{\eta }{\eta ^{\ast }}\right) \lambda
^{\ast })\right] ^{t_{0}},
\end{equation}%
which proves both (\ref{ADMFKml-Hermitian_U}) and (\ref{ADMFKm-Hermitian_U}). Then by
using these last two formulae and the transformation properties (\ref{ADMFKml-Hermitian_K}) and (\ref{ADMFKm-Hermitian_K}), we get:%
\begin{equation}
\mathcal{T}(\lambda )^{\dagger }=tr_{0}\left\{ \left[ K_{\mp }(\left( \frac{%
\eta }{\eta ^{\ast }}\right) \lambda ^{\ast })\right] ^{t_{0}}\left[ 
\mathcal{U}_{\pm }(\left( \frac{\eta }{\eta ^{\ast }}\right) \lambda ^{\ast
})\right] ^{t_{0}}\right\} =\mathcal{T}(\left( \frac{\eta }{\eta ^{\ast }}%
\right) \lambda ^{\ast })\underset{(\ref{ADMFKeven-transfer})}{=}\mathcal{T}%
(\lambda ^{\ast }).
\end{equation}
\end{proof}

\section{SOV representations for $\mathcal{T}(\protect\lambda )$-spectral
problem}

\label{ADMFKSOV-Gen}The method to construct quantum separation of variable (SOV)
representations for the spectral problem of the transfer matrices associated
to the representations of the Yang-Baxter algebra has been defined by
Sklyanin in \cite{ADMFKSk1,ADMFKSk2,ADMFKSk3}. Here, we
show that there are quite general representations of the reflection algebra
with non-diagonal boundary matrices for which the quantum SOV
representations can be constructed by adapting Sklyanin's method. More
in details the following theorem holds:

\begin{theorem}
\label{ADMFKTh1}Let the inhomogeneities $\{\xi
_{1},...,\xi _{\mathsf{N}}\}\in \mathbb{C}$ $^{\mathsf{N}}$ satisfy the following conditions:
\begin{equation}
\xi _{a}\neq \xi _{b}+r\eta\text{ \ }\forall a\neq b\in \{1,...,\mathsf{N}\}\,\,\text{and\,\,} r\in\{-1,0,1\},
\label{ADMFKE-SOV}
\end{equation}%
then the commuting families of generators of the reflection algebra $\mathcal{B}_{\epsilon}(\lambda )$\ and $\mathcal{C}_{\epsilon}(\lambda )$ are diagonalizable
and with simple spectrum, respectively, if $b_{\epsilon}\left( \lambda \right) \neq 0$ and $c_{\epsilon}\left( \lambda \right) \neq 0$, where  $\epsilon \in \{+,-\}$.

Moreover, the following statements hold: 
\newline
\vspace{-0.3cm}

\textsf{I)} The representations for which the commuting family $\mathcal{B}%
_{\epsilon }(\lambda )$\ is diagonal define the quantum SOV representations
for the spectral problem of the transfer matrix $\mathcal{T}_\epsilon(\lambda )$\
associated to the boundary matrix $K_{-\epsilon }\left( \lambda \right) $
diagonal or lower triangular while $K_{\epsilon }\left( \lambda \right) $ is
a general non-lower triangular; i.e. the SOV representations
for:%
\begin{equation}
\mathcal{T}_\epsilon(\lambda )\equiv \mathcal{T}_{\setminus }^{(\epsilon )}(\lambda
)+c_{-\epsilon }\left( \lambda \right) \mathcal{B}_{\epsilon }(\lambda ),%
\text{ \ for }b_{-\epsilon }\left( \lambda \right) =0\text{ and }b_{\epsilon
}\left( \lambda \right) \neq 0.
\end{equation}%
\textsf{II)} The representations for which the commuting family $\mathcal{C}%
_{\epsilon }(\lambda )$\ is diagonal define the quantum SOV representations
for the spectral problem of the transfer matrix $\bar{\mathcal{T}}_\epsilon(\lambda )$\
associated to the boundary matrix $K_{-\epsilon }\left( \lambda \right) $
diagonal or upper triangular while $K_{\epsilon }\left( \lambda \right) $ is
a general non-upper triangular one; i.e. the SOV representations for:%
\begin{equation}
\bar{\mathcal{T}}_\epsilon(\lambda )\equiv \mathcal{T}_{\setminus }^{(\epsilon )}(\lambda
)+b_{-\epsilon }\left( \lambda \right) \mathcal{C}_{\epsilon }(\lambda ),%
\text{ \ for }c_{-\epsilon }\left( \lambda \right) =0\text{ and }c_{\epsilon
}\left( \lambda \right) \neq 0.
\end{equation}
\end{theorem}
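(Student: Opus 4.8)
The plan is to prove first the core statement — that each commuting family $\mathcal{B}_{\pm}(\lambda)$, $\mathcal{C}_{\pm}(\lambda)$ is diagonalisable with simple spectrum — and then obtain I) and II) as direct corollaries. It suffices to treat $\mathcal{B}_{-}(\lambda)$ in detail: the case of $\mathcal{C}_{-}(\lambda)$ is the mirror image (with the all-spin-up reference state replacing the all-spin-down one and the parity relation (\ref{ADMFKSym-A-D-}) used to trade $\mathcal{D}_{-}$ for $\mathcal{A}_{-}$), while the two $+$ families reduce to the $-$ ones through the fact that $\mathcal{V}_{+}(\lambda)=\mathcal{U}_{+}^{t_0}(-\lambda)$ solves the same reflection equation as $\mathcal{U}_{-}(\lambda)$. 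Throughout I use that the boundaries are genuinely non-diagonal ($\kappa_{\pm}\neq 0$), so $b_{\pm}(\lambda),c_{\pm}(\lambda)\neq 0$. The first step is to fix the functional form of $\mathcal{B}_{-}(\lambda)$: expanding $\mathcal{U}_{-}=M K_{-}\hat M$ gives $\mathcal{B}_{-}(\lambda)=(-1)^{\mathsf{N}}\bigl[b_{-}(\lambda)A(\lambda)A(-\lambda)+d_{-}(\lambda)B(\lambda)A(-\lambda)-a_{-}(\lambda)A(\lambda)B(-\lambda)-c_{-}(\lambda)B(\lambda)B(-\lambda)\bigr]$ in terms of the bulk entries, a trigonometric polynomial whose parity (\ref{ADMFKSym-B-C-}) forces the factorisation $\mathcal{B}_{-}(\lambda)=\sinh(2\lambda-\eta)\,h(\lambda)$ with $h$ even, i.e. a degree-$\mathsf{N}$ polynomial in $\cosh 2\lambda$ carrying $\mathsf{N}$ pairs of operator-zeros — the separate variables.

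The heart of the argument is the Sklyanin ladder mechanism. On the all-down reference vector $|\bar\Omega\rangle=\otimes_{n}|{\downarrow}\rangle$ the bulk operator $B(\pm\lambda)$ vanishes and $A(\pm\lambda)$ is diagonal, so $\mathcal{B}_{-}(\lambda)\,|\bar\Omega\rangle=(-1)^{\mathsf{N}}b_{-}(\lambda)\,d(\lambda)d(-\lambda)\,|\bar\Omega\rangle$; since $b_{-}\neq 0$ this exhibits $|\bar\Omega\rangle$ as a genuine, nonzero joint eigenvector of the family, whose eigenvalue has its $\mathsf{N}$ pairs of zeros at $\pm(\xi_{n}+\eta/2)$. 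I would then extract from the reflection equation (\ref{ADMFKbYB}) the commutation relations of $\mathcal{A}_{-}(\mu)$ (equivalently $\mathcal{D}_{-}(\mu)$, by (\ref{ADMFKSym-A-D-})) with the family $\mathcal{B}_{-}(\lambda)$ and show, together with $[\mathcal{B}_{-}(\lambda),\mathcal{B}_{-}(\mu)]=0$, that $\mathcal{A}_{-}$ evaluated at a current zero acts as a ladder operator moving that separate variable by $\eta$; spin-$1/2$ caps each variable at two admissible values, so repeatedly applying these ladders to $|\bar\Omega\rangle$ produces a family of $2^{\mathsf{N}}$ common eigenvectors $|\mathbf{h}\rangle$, $\mathbf{h}\in\{0,1\}^{\mathsf{N}}$, with explicitly shifted eigenvalues $b_{\mathbf{h}}(\lambda)$.

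This last step is where hypothesis (\ref{ADMFKE-SOV}) is decisive, and it is the point I expect to be the main obstacle. The case $r=0$ (distinct inhomogeneities) makes the reference eigenvalue non-degenerate; the cases $r=\pm 1$ (no coincidence $\xi_{a}=\xi_{b}\pm\eta$) guarantee that the ladder shifts never collide, so the structure constants governing the ladders are nonvanishing, none of the generated vectors is null, and the $2^{\mathsf{N}}$ eigenvalue functions $b_{\mathbf{h}}(\lambda)$ are pairwise distinct as functions of $\lambda$. Pairwise distinctness is exactly simplicity of the spectrum; simplicity together with the count $2^{\mathsf{N}}=\dim\mathcal{R}_{\mathsf{N}}$ forces the $|\mathbf{h}\rangle$ to be linearly independent, hence a basis, which is diagonalisability. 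The delicate verifications are that the ladder relations close without producing components outside the span of the $|\mathbf{h}\rangle$, and that the degree count genuinely yields $\mathsf{N}$ (and not fewer) independent separate variables, so the family is exactly $2^{\mathsf{N}}$-dimensional.

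Finally, I) and II) follow immediately. For I), take the representation in which $\mathcal{B}_{\epsilon}(\lambda)$ is diagonal and impose $b_{-\epsilon}(\lambda)=0$ (i.e. $K_{-\epsilon}$ lower triangular); the general decomposition $\mathcal{T}(\lambda)=\mathcal{T}_{\setminus}^{(\epsilon)}(\lambda)+b_{-\epsilon}(\lambda)\mathcal{C}_{\epsilon}(\lambda)+c_{-\epsilon}(\lambda)\mathcal{B}_{\epsilon}(\lambda)$ then collapses to $\mathcal{T}(\lambda)=\mathcal{T}_{\setminus}^{(\epsilon)}(\lambda)+c_{-\epsilon}(\lambda)\mathcal{B}_{\epsilon}(\lambda)$, and in the $\mathcal{B}_{\epsilon}$-eigenbasis the operators $\mathcal{A}_{\epsilon},\mathcal{D}_{\epsilon}$ entering $\mathcal{T}_{\setminus}^{(\epsilon)}$ act as the ladders above, so the transfer-matrix spectral problem separates variable by variable — the defining feature of an SOV representation; the hypothesis $b_{\epsilon}\neq 0$ is precisely what makes $\mathcal{B}_{\epsilon}$ non-degenerately diagonalisable in the first place. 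Statement II) is the upper-triangular mirror: one takes $\mathcal{C}_{\epsilon}(\lambda)$ diagonal and imposes $c_{-\epsilon}(\lambda)=0$, running the identical argument with $\mathcal{C}$ in place of $\mathcal{B}$.
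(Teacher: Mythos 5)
Your plan coincides with the paper's proof in all essentials: the paper likewise establishes the theorem by explicitly constructing the left and right $\mathcal{B}_{\pm}(\lambda )$-eigenbases out of the reference covector/vector by applying products of $\mathcal{A}_{-}$- or $\mathcal{D}_{\pm}$-type operators evaluated at the zeros of the current eigenvalue, uses the reflection-algebra commutation relations together with condition $\left( \ref{ADMFKE-SOV}\right) $ to obtain $2^{\mathsf{N}}$ nonzero eigenvectors with pairwise distinct eigenvalue functions (hence diagonalizability and simple spectrum), and then obtains points \textsf{I)} and \textsf{II)} from the collapse of $\mathcal{T}(\lambda )$ when $K_{-\epsilon }$ is triangular. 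The only differences are organizational --- you reduce the $\mathcal{C}_{\pm}$ and $+$ families to the $\mathcal{B}_{-}$ case by spin-flip and transposition symmetry, whereas the paper constructs the $\mathcal{B}_{+}$-eigenbasis explicitly and derives the $\mathcal{C}_{\pm}$-eigenbases from the Hermitian-conjugation properties of Proposition \ref{ADMFKHerm-conj} --- and the ``delicate verifications'' you defer (closure of the ladder relations and the interpolation/degree count) are precisely the computations carried out in the paper's two eigenbasis theorems.
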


The proof of the theorem will be given in the following sections by the
explicit construction of $\mathcal{B}_{\pm }(\lambda )$-eigenbasis and the
solution by quantum separation of variables of the spectral problem for the
transfer matrix $\mathcal{T}(\lambda )$ in the class of representations of
the reflection algebra listed in point \textsf{I)} of the theorem. The
Hermitian conjugation properties lead to the construction of the $\mathcal{C}%
_{\pm }(\lambda )$-eigenbasis and then imply that the theorem holds also for
the class of representations listed in point \textsf{II)}\ of the theorem.

\subsection{Left and right representations of the reflection algebras}

Let us give some more details on the space of the representation of the spin
1/2 quantum chain. As this is the same representation space used in \cite{ADMFKN12-0} for the 6-vertex Yang Baxter algebra similar notation will be given here.

Let us introduce the standard spin basis for the $2$-dimensional linear
space R$_{n}$, the quantum space in the site $n$ of the chain, whose
elements are the $\sigma _{n}^{z}$-eigenvectors $|k,n\rangle $,
characterized by:%
\begin{equation}
\sigma _{n}^{z}|k,n\rangle =k|k,n\rangle ,\text{ \ }k\in \{-1,1\}.
\end{equation}%
Similarly, the $\sigma _{n}^{z}$-eigencovectors $\langle k,n|$,
characterized by:%
\begin{equation}
\sigma _{n}^{z}|k,n\rangle =k|k,n\rangle ,\text{ \ }k\in \{-1,1\},
\end{equation}%
define a basis in L$_{n}$ the dual space of R$_{n}$. Then, 2$^{\mathsf{N}}$%
-dimensional representations with $\mathsf{N}+6$ parameters (the
inhomogeneities and the boundary parameters) of the reflection algebra are
defined in the \textit{left} (covectors) and \textit{right} (vectors) linear
spaces:%
\begin{equation}
\mathcal{L}_{\mathsf{N}}\equiv \otimes _{n=1}^{\mathsf{N}}\text{L}_{n},\text{
\ \ \ \ }\mathcal{R}_{\mathsf{N}}\equiv \otimes _{n=1}^{\mathsf{N}}\text{R}%
_{n}.
\end{equation}%
Moreover, $\mathcal{R}_{\mathsf{N}}$ is naturally provided with the
structure of Hilbert space by introducing the scalar product characterized
by the following action on the spin basis:%
\begin{equation}
( \otimes _{n=1}^{\mathsf{N}}|k_{n},n\rangle , \otimes
_{n=1}^{\mathsf{N}}|k_{n}^{\prime },n\rangle  )\equiv \prod_{n=1}^{%
\mathsf{N}}\delta _{k_{n},k_{n}^{\prime }}\text{ \ \ }\forall
k_{n},k_{n}^{\prime }\in \{-1,1\}.
\end{equation}

\subsection{$\mathcal{B}_{-}$-SOV representations of reflection algebra}

In this subsection we construct the left and right SOV-representations of
reflection algebra generated by $\mathcal{U}_{-}(\lambda )$ by constructing
the left and right $\mathcal{B}_{-}(\lambda )$-eigenbasis.

\begin{theorem}
\textsf{I)} \underline{Left $\mathcal{B}_{-}(\lambda )$ SOV-representations}
\ If $\left( \ref{ADMFKE-SOV}\right) $ is satisfied and $b_{-}\left( \lambda
\right) \neq 0$, then the states:%
\begin{equation}
\langle -,h_{1},...,h_{\mathsf{N}}|\equiv \frac{1}{\text{\textsc{n}}_{-}}%
\langle 0|\prod_{n=1}^{\mathsf{N}}\left( \frac{\mathcal{A}_{-}(\eta /2-\xi
_{n})}{\mathsf{A}_{-}(\eta /2-\xi _{n})}\right) ^{h_{n}},
\label{ADMFKD-left-eigenstates}
\end{equation}%
where:%
\begin{equation}
\langle 0|\equiv \otimes _{n=1}^{\mathsf{N}}\langle 1,n|,\text{ \ \textsc{n}}%
_{-}=\left[ \frac{\prod_{1\leq b<a\leq \mathsf{N}}(\eta _{a}^{(1)}-\eta
_{a}^{(1)})}{\langle 0|\left( \prod_{n=1}^{\mathsf{N}}\mathcal{A}_{-}(\eta
/2-\xi _{n})/\mathsf{A}_{-}(\eta /2-\xi _{n})\right) \overline{|0\rangle }}%
\right] ^{1/2},  \label{ADMFKNorm-def}
\end{equation}%
and 
\begin{equation}
\eta _{a}^{(h_{a})}\equiv \cosh 2\left[ (\xi _{n}+(h_{n}-\frac{1}{2})\eta %
\right],
\end{equation}%
\ $h_{n}\in \{0,1\},$ $n\in \{1,...,\mathsf{N}\}$, define a $\mathcal{B}%
_{-}(\lambda )$-eigenbasis of $\mathcal{L}_{\mathsf{N}}$:%
\begin{equation}
\langle -,\text{\textbf{h}}|\mathcal{B}_{-}(\lambda )=\text{\textsc{b}}_{-,%
\text{\textbf{h}}}(\lambda )\langle -,\text{\textbf{h}}|,
\label{ADMFKright-B-eigen-cond}
\end{equation}%
where $\langle -$, \textbf{h}$|\equiv \langle -,h_{1},...,h_{\mathsf{N}}|$
for \textbf{h}$\equiv (h_{1},...,h_{\mathsf{N}})$ and%
\begin{equation}
\text{\textsc{b}}_{-,\text{\textbf{h}}}(\lambda )\equiv \kappa _{-}e^{\tau
_{-}}\frac{\sinh (2\lambda -\eta )}{\sinh \zeta _{-}}a_{\text{\textbf{h}}%
}(\lambda )a_{\text{\textbf{h}}}(-\lambda ),  \label{ADMFKEigenValue-D}
\end{equation}%
with%
\begin{equation}
a_{\text{\textbf{h}}}(\lambda )\equiv \prod_{n=1}^{\mathsf{N}}\sinh (\lambda
-\xi _{n}-(h_{n}-\frac{1}{2})\eta ).
\end{equation} 
On the generic state $\langle -$, \textbf{h}$|$, the action of the remaining reflection algebra generators follows by:
\begin{align}
\langle -,\text{\textbf{h}}|\mathcal{A}_{-}(\lambda )& =\sum_{a=1}^{2\mathsf{%
N}}\frac{\sinh (2\lambda -\eta )\sinh (\lambda +\zeta _{a}^{(h_{a})})}{\sinh
(2\zeta _{a}^{(h_{a})}-\eta )\sinh 2\zeta _{a}^{(h_{a})}}\prod_{\substack{ %
b=1  \\ b\neq a\text{ mod}\mathsf{N}}}^{\mathsf{N}}\frac{\cosh 2\lambda
-\cosh 2\zeta _{b}^{(h_{b})}}{\cosh 2\zeta _{a}^{(h_{a})}-\cosh 2\zeta
_{b}^{(h_{b})}}\mathsf{A}_{-}(\zeta _{a}^{(h_{a})})  \notag \\
& \times \langle -,\text{\textbf{h}}|\text{T}_{a}^{-\varphi
_{a}}+\det_{q}M(0)\cosh (\lambda -\eta /2)\prod_{b=1}^{\mathsf{N}}\frac{%
\cosh 2\lambda -\cosh 2\zeta _{b}^{(h_{b})}}{\cosh \eta -\cosh 2\zeta
_{b}^{(h_{b})}}\langle -,\text{\textbf{h}}|  \notag \\
& +(-1)^{\mathsf{N}}\coth \zeta _{-}\det_{q}M(i\pi /2)\sinh (\lambda -\eta
/2)\prod_{b=1}^{\mathsf{N}}\frac{\cosh 2\lambda -\cosh 2\zeta _{b}^{(h_{b})}%
}{\cosh \eta +\cosh 2\zeta _{b}^{(h_{b})}}\langle -,\text{\textbf{h}}|,
\label{ADMFKL-SOV A-}
\end{align}%
where:%
\begin{eqnarray}
\zeta _{n}^{(h_{n})} &=&\varphi _{n}\left[ \xi _{n}+(h_{n}-\frac{1}{2})\eta %
\right] \text{ \ \ for \ }h_{n}\in \{0,1\}\text{\ \ and\ \ }\forall n\in
\{1,...,2\mathsf{N}\}, \\
\varphi _{a} &=&1-2\theta (a-\mathsf{N})\text{ \ \ with \ }\theta (x)=\{0%
\text{ for }x\leq 0,\text{ }1\text{ for }x>0\},
\end{eqnarray}%
and:%
\begin{equation}
\langle -,h_{1},...,h_{a},...,h_{\mathsf{N}}|\text{T}_{a}^{\pm }=\langle
-,h_{1},...,h_{a}\pm 1,...,h_{\mathsf{N}}|.
\end{equation}%
Indeed, the representation of $\,\mathcal{D}_{-}(\lambda )$ follows from the
identity (\ref{ADMFKSym-A-D-}) while $\mathcal{C}_{-}(\lambda )$ is uniquely
defined by the quantum determinant relation.\smallskip

\textsf{II)} \underline{Right $\mathcal{B}_{-}(\lambda )$ SOV-representations%
} \ If $\left( \ref{ADMFKE-SOV}\right) $ is satisfied and $b_{-}\left( \lambda
\right) \neq 0$, the states:%
\begin{equation}
|-,h_{1},...,h_{\mathsf{N}}\rangle \equiv \frac{1}{\text{\textsc{n}}_{-}}%
\prod_{n=1}^{\mathsf{N}}\left( \frac{\mathcal{D}_{-}(\xi _{n}+\eta /2)}{%
k_{n}^{(-)}\mathsf{A}_{-}(\eta /2-\xi _{n})}\right) ^{(1-h_{n})}\overline{%
|0\rangle },  \label{ADMFKD-right-eigenstates}
\end{equation}%
where:%
\begin{equation}
\overline{|0\rangle }\equiv \otimes _{n=1}^{\mathsf{N}}|-1,n\rangle ,\text{
\ }k_{n}^{(-)}=\frac{\sinh (2\xi _{n}+\eta )}{\sinh (2\xi _{n}-\eta )},
\end{equation}%
\ $h_{n}\in \{0,1\},$ $n\in \{1,...,\mathsf{N}\}$, define a $\mathcal{B}%
_{-}(\lambda )$-eigenbasis of $\mathcal{R}_{\mathsf{N}}$:%
\begin{equation}
\mathcal{B}_{-}(\lambda )|-,\text{\textbf{h}}\rangle =|-,\text{\textbf{h}}%
\rangle \text{\textsc{b}}_{-,\text{\textbf{h}}}(\lambda ).
\label{ADMFKleft-B-eigen-cond}
\end{equation}%
On the generic state $|-$, \textbf{h}$\rangle $, the action of the remaining reflection algebra generators follows by:
\begin{align}
\mathcal{D}_{-}(\lambda )|-,\text{\textbf{h}}\rangle & =\sum_{a=1}^{2\mathsf{%
N}}\text{T}_{a}^{-\varphi _{a}}|-,\text{\textbf{h}}\rangle \frac{\sinh
(2\lambda -\eta )\sinh (\lambda +\zeta _{a}^{(h_{a})})}{\sinh (2\zeta
_{a}^{(h_{a})}-\eta )\sinh 2\zeta _{a}^{(h_{a})}}\prod_{\substack{ b=1  \\ %
b\neq a\text{ mod}\mathsf{N}}}^{\mathsf{N}}\frac{\cosh 2\lambda -\cosh
2\zeta _{b}^{(h_{b})}}{\cosh 2\zeta _{a}^{(h_{a})}-\cosh 2\zeta
_{b}^{(h_{b})}}\mathsf{D}_{-}(\zeta _{a}^{(h_{a})})  \notag \\
& +|-,\text{\textbf{h}}\rangle \det_{q}M(0)\cosh (\lambda -\eta
/2)\prod_{b=1}^{\mathsf{N}}\frac{\cosh 2\lambda -\cosh 2\zeta _{b}^{(h_{b})}%
}{\cosh \eta -\cosh 2\zeta _{b}^{(h_{b})}}  \notag \\
& +(-1)^{\mathsf{N}+1}|-,\text{\textbf{h}}\rangle \coth \zeta
_{-}\det_{q}M(i\pi /2)\sinh (\lambda -\eta /2)\prod_{b=1}^{\mathsf{N}}\frac{%
\cosh 2\lambda -\cosh 2\zeta _{b}^{(h_{b})}}{\cosh \eta +\cosh 2\zeta
_{b}^{(h_{b})}},  \label{ADMFKR-SOV D-}
\end{align}%
where:%
\begin{equation}
\mathsf{D}_{-}(\zeta _{a}^{(h_{a})})=(k_{a}^{(-)})^{\varphi _{a}}\mathsf{A}%
_{-}(\zeta _{a}^{(h_{a})}-2\varphi _{a}\xi _{a}),\text{ \ \ \ \ \ T}%
_{a}^{\pm }|-,h_{1},...,h_{a},...,h_{\mathsf{N}}\rangle
=|-,h_{1},...,h_{a}\pm 1,...,h_{\mathsf{N}}\rangle .
\end{equation}%
Indeed, the representation of $\mathcal{A}_{-}(\lambda )$ follows from the
identity (\ref{ADMFKSym-A-D-}) while $\mathcal{C}_{-}(\lambda )$ is uniquely
defined by the quantum determinant relation.\smallskip
\end{theorem}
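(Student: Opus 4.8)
The plan is to realize Sklyanin's SOV construction by exhibiting the reference covector $\langle 0|=\otimes_{n=1}^{\mathsf{N}}\langle 1,n|$ as a distinguished $\mathcal{B}_{-}(\lambda)$-eigencovector and the operators $\mathcal{A}_{-}(\eta/2-\xi_{n})$ as shift (creation) operators that build the whole eigenbasis. First I would compute the action of the generators of $\mathcal{U}_{-}(\lambda)$ on $\langle 0|$. Using the factorization $\mathcal{U}_{-}(\lambda)=M_{0}(\lambda)K_{-}(\lambda)\hat{M}_{0}(\lambda)$ together with the triangular (highest-weight) action of the bulk Yang--Baxter generators on $\langle 0|$, namely $\langle 0|A(\lambda)=a(\lambda)\langle 0|$, $\langle 0|D(\lambda)=d(\lambda)\langle 0|$ and the annihilation of $\langle 0|$ by the bulk lowering entry, and inserting the explicit non-diagonal entries of $K_{-}(\lambda)$, I would show that $\langle 0|$ is a $\mathcal{B}_{-}(\lambda)$-eigencovector with eigenvalue $\text{\textsc{b}}_{-,\mathbf{0}}(\lambda)$ given by (\ref{ADMFKEigenValue-D}) at $\mathbf{h}=\mathbf{0}$ (the overall factor $\kappa_{-}e^{\tau_{-}}\sinh(2\lambda-\eta)/\sinh\zeta_{-}$ being precisely the off-diagonal boundary datum $b_{-}(\lambda)$), and record the resulting triangular action of $\mathcal{A}_{-}(\lambda)$ on $\langle 0|$.

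Next I would extract from the reflection equation (\ref{ADMFKbYB}) the commutation relation between $\mathcal{A}_{-}(\mu)$ and $\mathcal{B}_{-}(\lambda)$. The core mechanism of the construction is that, upon specializing $\mu$ to a separate-variable point $\eta/2-\xi_{n}$, the structure coefficients multiplying the unwanted terms (those mixing distinct $\mathcal{B}_{-}$-eigenvalues) vanish thanks to the zeros carried by the bulk function $a(\lambda)$; what survives is a clean relation in which $\mathcal{A}_{-}(\eta/2-\xi_{n})$ multiplies the $\mathcal{B}_{-}(\lambda)$-eigenvalue by a rational factor implementing $h_{n}\!:0\to1$, i.e.\ replacing the factor $\sinh(\lambda-\xi_{n}+\eta/2)$ in $a_{\mathbf{h}}(\lambda)$ by $\sinh(\lambda-\xi_{n}-\eta/2)$. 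The spin-$1/2$ nature forces $(\mathcal{A}_{-}(\eta/2-\xi_{n}))^{2}$ to act trivially along this line, so $h_{n}\in\{0,1\}$, and the ordered product over $n$ in (\ref{ADMFKD-left-eigenstates}) reproduces exactly the eigenvalue (\ref{ADMFKEigenValue-D}). I would then argue that the $2^{\mathsf{N}}$ covectors $\langle -,\mathbf{h}|$ are linearly independent: being $\mathcal{B}_{-}(\lambda)$-eigencovectors, and since condition (\ref{ADMFKE-SOV}) guarantees that the values $\cosh 2\zeta_{n}^{(h_{n})}$ are pairwise distinct, the eigenvalues $\text{\textsc{b}}_{-,\mathbf{h}}(\lambda)$ are distinct for distinct $\mathbf{h}$; hence $\mathcal{B}_{-}(\lambda)$ is diagonalizable with simple spectrum and these covectors form a basis of $\mathcal{L}_{\mathsf{N}}$.

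To obtain the action (\ref{ADMFKL-SOV A-}) of $\mathcal{A}_{-}(\lambda)$, I would exploit that, as a function of $\lambda$, $\mathcal{A}_{-}(\lambda)$ is a trigonometric polynomial of controlled degree in $\cosh 2\lambda$ whose even part is fixed by the parity relation (\ref{ADMFKSym-A-D-}). Its values at the $2\mathsf{N}$ points $\zeta_{a}^{(h_{a})}$ are pinned down by the shift relation above (yielding the $\mathrm{T}_{a}^{-\varphi_{a}}$ terms with coefficient $\mathsf{A}_{-}(\zeta_{a}^{(h_{a})})$), while the two remaining interpolation data are fixed by evaluation at the special points $\cosh 2\lambda=\cosh\eta$ and $\cosh 2\lambda=-\cosh\eta$, where the central quantum determinant (\ref{ADMFKq-detU_-exp}) and the bulk central elements $\det_{q}M(0)$, $\det_{q}M(i\pi/2)$ determine the coefficients of the two non-shift contributions; Lagrange interpolation through these points then reassembles precisely (\ref{ADMFKL-SOV A-}).

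Finally, the action of $\mathcal{D}_{-}(\lambda)$ follows immediately by applying the parity relation (\ref{ADMFKSym-A-D-}) to the reconstructed $\mathcal{A}_{-}(\lambda)$, and $\mathcal{C}_{-}(\lambda)$ is uniquely fixed by the central quantum-determinant identity (\ref{ADMFKq-detU_2}) once $\mathcal{A}_{-},\mathcal{D}_{-},\mathcal{B}_{-}$ are known and $\mathcal{B}_{-}(\lambda)$ is invertible away from its zeros. Part \textsf{II)} (right representations) I would establish by the mirror scheme, with the roles of $\mathcal{A}_{-}$ and $\mathcal{D}_{-}$ interchanged, starting from $\overline{|0\rangle}=\otimes_{n=1}^{\mathsf{N}}|-1,n\rangle$ and using $\mathcal{D}_{-}(\xi_{n}+\eta/2)$ as creation operators, the bridge between the two pictures being the identity $\mathsf{D}_{-}(\zeta_{a}^{(h_{a})})=(k_{a}^{(-)})^{\varphi_{a}}\mathsf{A}_{-}(\zeta_{a}^{(h_{a})}-2\varphi_{a}\xi_{a})$, with the common normalization $\text{\textsc{n}}_{-}$ ensuring consistency of the left/right pairing. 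The main obstacle I anticipate lies in the non-diagonal boundary: verifying that at the points $\eta/2-\xi_{n}$ the unwanted terms of the $\mathcal{A}_{-}$--$\mathcal{B}_{-}$ commutator genuinely cancel (this is exactly where the parameters $\kappa_{-},\tau_{-}$ enter nontrivially) and, in the reconstruction step, correctly matching the two non-shift terms of (\ref{ADMFKL-SOV A-}) to the quantum-determinant data together with their precise $\lambda$-dependence, which is the delicate computational heart of the argument.
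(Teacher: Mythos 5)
Your proposal follows essentially the same route as the paper: identify $\langle 0|$ (resp. $\overline{|0\rangle}$) as a reference $\mathcal{B}_{-}$-eigencovector via the boundary--bulk decomposition of $\mathcal{B}_{-}(\lambda)$, use the reflection-algebra commutation relations at the separate points $\eta/2-\xi_{n}$ (where the unwanted exchange terms drop because the $\mathcal{B}_{-}$-eigenvalue inherits the zeros of $a(\lambda)$) to generate the eigenbasis and prove simplicity under (\ref{ADMFKE-SOV}), and then reconstruct $\mathcal{A}_{-}(\lambda)$ by Lagrange interpolation over the $2\mathsf{N}$ points $\zeta_{a}^{(h_{a})}$ supplemented by the two central-element identities (\ref{ADMFKU-identities}) at $\eta/2$ and $\eta/2+i\pi/2$, with $\mathcal{D}_{-}$ and $\mathcal{C}_{-}$ fixed by (\ref{ADMFKSym-A-D-}) and the quantum determinant. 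This is exactly the paper's argument (which defers the commutation bookkeeping to \cite{ADMFKN12-0}), so the proposal is correct and not a genuinely different approach.
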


\begin{proof}[Proof of \textsf{I)}]
It is worth writing explicitly the (boundary-bulk) decomposition of the
reflection algebra generator:%
\begin{equation}
\mathcal{B}_{-}(\lambda )=-a_{-}(\lambda )A(\lambda )B(-\lambda
)+b_{-}(\lambda )A(\lambda )A(-\lambda )-c_{-}(\lambda )B(\lambda
)B(-\lambda )+d_{-}(\lambda )B(\lambda )A(-\lambda ),
\end{equation}%
in terms of the generators of the Yang-Baxter algebra. Then, the following
well know properties:%
\begin{equation}
\langle 0|A(\lambda )=a(\lambda )\langle 0|,\text{ \ \ \ }\langle
0|D(\lambda )=d(\lambda )\langle 0|,\text{ \ \ \ }\langle 0|B(\lambda )=%
\text{\b{0}},\text{ \ \ \ }\langle 0|C(\lambda )\neq \text{\b{0}},
\label{ADMFKLeft-reference}
\end{equation}%
with:%
\begin{equation}
a(\lambda )\equiv \prod_{n=1}^{\mathsf{N}}\sinh (\lambda -\xi _{n}+\eta /2),%
\text{ \ \ \ }d(\lambda )\equiv \prod_{n=1}^{\mathsf{N}}\sinh (\lambda -\xi
_{n}-\eta /2),
\end{equation}%
imply that $\langle 0|$ is a $\mathcal{B}_{-}(\lambda )$-eigenstate with
non-zero eigenvalue:%
\begin{equation}
\langle 0|\mathcal{B}_{-}(\lambda )\equiv \text{\textsc{b}}_{-,\text{\textbf{%
0}}}(\lambda )\langle 0|
\end{equation}%
where:%
\begin{equation}
\text{\textsc{b}}_{-,\text{\textbf{0}}}(\lambda )\equiv \kappa _{-}e^{\tau
_{-}}\frac{\sinh (2\lambda -\eta )}{\sinh \zeta _{-}}a(\lambda )a(-\lambda ).
\end{equation}%
Now by using the reflection algebra commutation relations:%
\begin{eqnarray}
\mathcal{A}_{-}(\lambda _{2})\mathcal{B}_{-}(\lambda _{1}) &=&\frac{\sinh
(\lambda _{1}-\lambda _{2}+\eta )\sinh (\lambda _{2}+\lambda _{1}-\eta )}{%
\sinh (\lambda _{1}-\lambda _{2})\sinh (\lambda _{1}+\lambda _{2})}\mathcal{B%
}_{-}(\lambda _{1})\mathcal{A}_{-}(\lambda _{2})  \notag \\
&&+\frac{\sinh (2\lambda _{1}-\eta )\sinh \eta }{\sinh (\lambda _{2}-\lambda
_{1})\sinh 2\lambda _{1}}\mathcal{B}_{-}(\lambda _{2})\mathcal{A}%
_{-}(\lambda _{1})  \notag \\
&&-\frac{\sinh \eta }{\sinh (\lambda _{1}+\lambda _{2})\sinh 2\lambda _{1}}%
\mathcal{B}_{-}(\lambda _{2})\mathcal{\tilde{D}}_{-}(\lambda _{1})
\label{ADMFKbYB-AB}
\end{eqnarray}%
we can follow step by step the proof given in \cite{ADMFKN12-0} to prove the
validity of $(\ref{ADMFKright-B-eigen-cond})$ and $(\ref{ADMFKleft-B-eigen-cond})$.
Under the condition $\left( \ref{ADMFKE-SOV}\right) $, these relations also imply
that each set of states $\langle -$, \textbf{h}$|$ and $|-$, \textbf{h}$%
\rangle $ form a set of 2$^{\mathsf{N}}$ independent states, i.e. a $%
\mathcal{B}_{-}(\lambda )$-eigenbasis of $\mathcal{L}_{\mathsf{N}}$ and $%
\mathcal{R}_{\mathsf{N}}$, respectively.

The action of $\mathcal{A}_{-}(\zeta _{b}^{(h_{b})})$ for $b\in \{1,...,2%
\mathsf{N}\}$ follows by the definition of the states $\langle -$, \textbf{h}$%
|$, the reflection algebra commutation relations $\left( \ref{ADMFKbYB-AB}\right) 
$ and the quantum determinant relations. Moreover, by using the identities:%
\begin{equation}
\mathcal{U}_{-}(\eta /2)=\det_{q}M(0)\text{ }I_{0},\text{ \ \ }\mathcal{U}%
_{-}(\eta /2+i\pi /2)=i\coth \zeta _{-}\det_{q}M(i\pi /2)\text{ }\sigma
_{0}^{z},  \label{ADMFKU-identities}
\end{equation}%
and remarking that $\mathcal{A}_{-}(\lambda )$ has the following functional
dependence w.r.t. $\lambda $:%
\begin{equation}
\mathcal{A}_{-}(\lambda )=\sum_{a=0}^{2\mathsf{N}+1}e^{\left( 2a-2\mathsf{N}%
+1\right) \lambda }\mathcal{A}_{-,a}
\end{equation}%
we get the following interpolation formula for the action on $\langle -$, \textbf{h}$|$:%
\begin{align}
\langle -,\text{\textbf{h}}|\mathcal{A}_{-}(\lambda )& =\sum_{a=1}^{2\mathsf{%
N}}\prod_{\substack{ b=1  \\ b\neq a}}^{2\mathsf{N}+2}\frac{\sinh (\lambda
-\zeta _{b}^{(h_{b})})}{\sinh (\zeta _{a}^{(h_{a})}-\zeta _{b}^{(h_{b})})}%
\mathsf{A}_{-}(\zeta _{a}^{(h_{a})})\langle -,\text{\textbf{h}}|\text{T}%
_{a}^{-\varphi _{a}}  \notag \\
& +\det_{q}M(0)\prod_{\substack{ b=1  \\ b\neq 2\mathsf{N}+1}}^{2\mathsf{N}%
+2}\frac{\sinh (\lambda -\zeta _{b}^{(h_{b})})}{\sinh (\zeta _{2\mathsf{N}%
+1}^{(1)}-\zeta _{b}^{(h_{b})})}\langle -,\text{\textbf{h}}|  \notag \\
& +i\coth \zeta _{-}\det_{q}M(i\pi /2)\prod_{b=1}^{2\mathsf{N}+1}\frac{\sinh
(\lambda -\zeta _{b}^{(h_{b})})}{\sinh (\zeta _{2\mathsf{N}+2}^{(1)}-\zeta
_{b}^{(h_{b})})}\langle -,\text{\textbf{h}}|,
\end{align}%
where:%
\begin{equation}
\zeta _{2\mathsf{N}+1}^{(1)}=\eta /2,\text{ \ \ \ \ }\zeta _{2\mathsf{N}%
+2}^{(1)}=\eta /2+i\pi /2,\text{ \ \ }\det_{q}M(\lambda )=a(\lambda +\eta
/2)d(\lambda -\eta /2),
\end{equation}%
and we have denoted $\zeta _{2\mathsf{N}+b}^{(h_{2\mathsf{N}+b})}=\zeta _{2%
\mathsf{N}+b}^{(1)}$ for $b=1,2$ for any $\mathcal{B}_{-}$-eigenstate. Then,
it is a simple exercise to rewrite this in the form $\left( \ref{ADMFKL-SOV A-}%
\right) $.
\end{proof}

\begin{proof}[Proof of \textsf{II)}]
The proof is given along the same line delineated for the point \textsf{I)}
of the theorem, we just need to make the following remarks. First of all
being:%
\begin{equation}
A(\lambda )\overline{|0\rangle }=d(\lambda )\overline{|0\rangle },\text{ \ \
\ }D(\lambda )\overline{|0\rangle }=a(\lambda )\overline{|0\rangle },\text{
\ \ \ }B(\lambda )\overline{|0\rangle }=\text{\b{0}},\text{ \ \ \ }C(\lambda
)\overline{|0\rangle }\neq \text{\b{0}},  \label{ADMFKRight-reference}
\end{equation}%
then $\overline{|0\rangle }$ is a $\mathcal{B}_{-}(\lambda )$-eigenstate
with non-zero eigenvalue:%
\begin{equation}
\mathcal{B}_{-}(\lambda )\overline{|0\rangle }\equiv \text{\textsc{b}}_{-,%
\text{\textbf{1}}}(\lambda )\overline{|0\rangle }
\end{equation}%
where:%
\begin{equation}
\text{\textsc{b}}_{-,\text{\textbf{1}}}(\lambda )\equiv \kappa _{-}e^{\tau
_{-}}\frac{\sinh (2\lambda -\eta )}{\sinh \zeta _{-}}d(\lambda )d(-\lambda ).
\end{equation}%
Now all we need are the following reflection algebra commutation relations:%
\begin{align}
\mathcal{B}_{-}(\lambda _{1})\mathcal{D}_{-}(\lambda _{2})& =\frac{\sinh
(\lambda _{1}-\lambda _{2}+\eta )\sinh (\lambda _{2}+\lambda _{1}-\eta )}{%
\sinh (\lambda _{1}-\lambda _{2})\sinh (\lambda _{1}+\lambda _{2})}\mathcal{D%
}_{-}(\lambda _{2})\mathcal{B}_{-}(\lambda _{1})  \notag \\
& -\frac{\sinh \eta \sinh (\lambda _{2}+\lambda _{1}-\eta )}{\sinh (\lambda
_{1}-\lambda _{2})\sinh (\lambda _{2}+\lambda _{1})}\mathcal{D}_{-}(\lambda
_{1})\mathcal{B}_{-}(\lambda _{2})  \notag \\
& -\frac{\sinh \eta }{\sinh (\lambda _{1}+\lambda _{2})}\mathcal{A}%
_{-}(\lambda _{1})\mathcal{B}_{-}(\lambda _{2}).
\end{align}%
By using them we get the following interpolation formula for the action on $%
|-$, \textbf{h}$\rangle $:%
\begin{eqnarray}
\mathcal{D}_{-}(\lambda )|-,\text{\textbf{h}}\rangle &=&\sum_{a=1}^{2\mathsf{%
N}}\text{T}_{a}^{-\varphi _{a}}|-,\text{\textbf{h}}\rangle \prod _{\substack{
b=1  \\ b\neq a}}^{2\mathsf{N}+2}\frac{\sinh (\lambda -\zeta _{b}^{(h_{b})})%
}{\sinh (\zeta _{a}^{(h_{a})}-\zeta _{b}^{(h_{b})})}\mathsf{D}_{-}(\zeta
_{a}^{(h_{a})})  \notag \\
&&+|-,\text{\textbf{h}}\rangle \det_{q}M(0)\prod_{\substack{ b=1  \\ b\neq 2%
\mathsf{N}+1}}^{2\mathsf{N}+2}\frac{\sinh (\lambda -\zeta _{b}^{(h_{b})})}{%
\sinh (\zeta _{2\mathsf{N}+1}^{(1)}-\zeta _{b}^{(h_{b})})}  \notag \\
&&-i|-,\text{\textbf{h}}\rangle \coth \zeta _{-}\det_{q}M(i\pi
/2)\prod_{b=1}^{2\mathsf{N}+1}\frac{\sinh (\lambda -\zeta _{b}^{(h_{b})})}{%
\sinh (\zeta _{2\mathsf{N}+2}^{(1)}-\zeta _{b}^{(h_{b})})},
\end{eqnarray}%
which can be rewritten in the form $\left( \ref{ADMFKR-SOV D-}\right) $.
\end{proof}

\subsection{$\mathcal{B}_{+}$-SOV representations of reflection algebra}

In this subsection we construct the left and right SOV-representations of
the reflection algebra generated by $\mathcal{U}_{+}(\lambda )$ by constructing
the left and right $\mathcal{B}_{+}(\lambda )$-eigenbasis.

\begin{theorem}
\textsf{I)} \underline{Left $\mathcal{B}_{+}(\lambda )$ SOV-representations}
\ If $\left( \ref{ADMFKE-SOV}\right) $ is satisfied and $b_{+}\left( \lambda
\right) \neq 0$, then the states:%
\begin{equation}
\langle +,h_{1},...,h_{\mathsf{N}}|\equiv \frac{1}{\text{\textsc{n}}_{+}}%
\langle 0|\prod_{n=1}^{\mathsf{N}}\left( \frac{\mathcal{D}_{+}(-\zeta
_{n}^{(1)})}{\mathsf{D}_{+}(-\zeta _{n}^{(1)})}\right) ^{(1-h_{n})},
\label{ADMFKD-left-eigenstates+}
\end{equation}%
where%
\begin{equation}
\text{\textsc{n}}_{+}=\left[ \frac{\prod_{1\leq b<a\leq \mathsf{N}}(\eta
_{a}^{(0)}-\eta _{a}^{(0)})}{\langle 0|\left( \prod_{n=1}^{\mathsf{N}}%
\mathcal{D}_{+}(-\zeta _{n}^{(1)})/\mathsf{D}_{+}(-\zeta _{n}^{(1)})\right) 
\overline{|0\rangle }}\right] ^{1/2},
\end{equation}%
\ $h_{n}\in \{0,1\},$ $n\in \{1,...,\mathsf{N}\}$, define a $\mathcal{B}%
_{+}(\lambda )$-eigenbasis of $\mathcal{L}_{\mathsf{N}}$:%
\begin{equation}
\langle +,\text{\textbf{h}}|\mathcal{B}_{+}(\lambda )=\text{\textsc{b}}_{+,%
\text{\textbf{h}}}(\lambda )\langle +,\text{\textbf{h}}|,
\label{ADMFKD-L-EigenV+}
\end{equation}%
where $\langle +$, \textbf{h}$|\equiv \langle +,h_{1},...,h_{\mathsf{N}}|$
for \textbf{h}$\equiv (h_{1},...,h_{\mathsf{N}})$ and%
\begin{equation}
\text{\textsc{b}}_{+,\text{\textbf{h}}}(\lambda )\equiv \kappa _{+}e^{\tau
_{+}}\frac{\sinh (2\lambda +\eta )}{\sinh \zeta _{+}}a_{\text{\textbf{h}}%
}(\lambda )a_{\text{\textbf{h}}}(-\lambda ).  \label{ADMFKEigenValue-D+}
\end{equation}%
On the generic state $\langle +$, \textbf{h}$|$, the action of the remaining reflection algebra generators follows by:
\begin{align}
\langle +,\text{\textbf{h}}|\mathcal{D}_{+}(\lambda )& =\sum_{a=1}^{2\mathsf{%
N}}\frac{\sinh (2\lambda +\eta )\sinh (\lambda +\zeta _{a}^{(h_{a})})}{\sinh
(2\zeta _{a}^{(h_{a})}+\eta )\sinh 2\zeta _{a}^{(h_{a})}}\prod_{\substack{ %
b=1  \\ b\neq a\text{ mod}\mathsf{N}}}^{\mathsf{N}}\frac{\cosh 2\lambda
-\cosh 2\zeta _{b}^{(h_{b})}}{\cosh 2\zeta _{a}^{(h_{a})}-\cosh 2\zeta
_{b}^{(h_{b})}}\mathsf{D}_{+}(\zeta _{a}^{(h_{a})})  \notag \\
& \times \langle +,\text{\textbf{h}}|\text{T}_{a}^{\varphi
_{a}}+\det_{q}M(0)\cosh (\lambda +\eta /2)\prod_{b=1}^{\mathsf{N}}\frac{%
\cosh 2\lambda -\cosh 2\zeta _{b}^{(h_{b})}}{\cosh \eta -\cosh 2\zeta
_{b}^{(h_{b})}}\langle +,\text{\textbf{h}}|  \notag \\
& +(-1)^{\mathsf{N}+1}\coth \zeta _{+}\det_{q}M(i\pi /2)\sinh (\lambda +\eta
/2)\prod_{b=1}^{\mathsf{N}}\frac{\cosh 2\lambda -\cosh 2\zeta _{b}^{(h_{b})}%
}{\cosh \eta +\cosh 2\zeta _{b}^{(h_{b})}}\langle +,\text{\textbf{h}}|,
\label{ADMFKL-SOV D+}
\end{align}%
where:%
\begin{equation}
\langle +,h_{1},...,h_{a},...,h_{\mathsf{N}}|\text{T}_{a}^{\pm }=\langle
+,h_{1},...,h_{a}\pm 1,...,h_{\mathsf{N}}|.
\end{equation}%
Indeed, the representation of $\mathcal{A}_{+}(\lambda )$ follows from the
identity (\ref{ADMFKSym-A-D+}) while $\mathcal{C}_{+}(\lambda )$ is uniquely
defined by the quantum determinant relation.\smallskip

\textsf{II)} \underline{Right $\mathcal{B}_{+}(\lambda )$ SOV-representations%
} \ If $\left( \ref{ADMFKE-SOV}\right) $ is satisfied and $b_{+}\left( \lambda
\right) \neq 0$, then the states:%
\begin{equation}
|+,h_{1},...,h_{\mathsf{N}}\rangle \equiv \frac{1}{\text{\textsc{n}}_{+}}%
\prod_{n=1}^{\mathsf{N}}\left( \frac{\mathcal{A}_{+}(\zeta _{n}^{(0)})}{%
k_{n}^{(+)}\mathsf{D}_{+}(-\zeta _{n}^{(1)})}\right) ^{h_{n}}\overline{%
|0\rangle },  \label{ADMFKD-right-eigenstates+}
\end{equation}%
where:%
\begin{equation}
k_{n}^{(+)}=\frac{\sinh (2\xi _{n}-\eta )}{\sinh (2\xi _{n}+\eta )},
\end{equation}%
\ $h_{n}\in \{0,1\},$ $n\in \{1,...,\mathsf{N}\}$, define a $\mathcal{B}%
_{+}(\lambda )$-eigenbasis of $\mathcal{R}_{\mathsf{N}}$:%
\begin{equation}
\mathcal{B}_{+}(\lambda )|+,\text{\textbf{h}}\rangle =|+,\text{\textbf{h}}%
\rangle \text{\textsc{b}}_{+,\text{\textbf{h}}}(\lambda ).
\end{equation}%
On the generic state $|+$, \textbf{h}$\rangle $, the action of the remaining reflection algebra generators follows by:
\begin{align}
\mathcal{A}_{+}(\lambda )|+,\text{\textbf{h}}\rangle & =\sum_{a=1}^{2\mathsf{%
N}}\text{T}_{a}^{\varphi _{a}}|+,\text{\textbf{h}}\rangle \frac{\sinh
(2\lambda +\eta )\sinh (\lambda +\zeta _{a}^{(h_{a})})}{\sinh (2\zeta
_{a}^{(h_{a})}+\eta )\sinh 2\zeta _{a}^{(h_{a})}}\prod_{\substack{ b=1  \\ %
b\neq a\text{ mod}\mathsf{N}}}^{\mathsf{N}}\frac{\cosh 2\lambda -\cosh
2\zeta _{b}^{(h_{b})}}{\cosh 2\zeta _{a}^{(h_{a})}-\cosh 2\zeta
_{b}^{(h_{b})}}\mathsf{A}_{+}(\zeta _{a}^{(h_{a})})  \notag \\
& +|+,\text{\textbf{h}}\rangle \det_{q}M(0)\cosh (\lambda +\eta
/2)\prod_{b=1}^{\mathsf{N}}\frac{\cosh 2\lambda -\cosh 2\zeta _{b}^{(h_{b})}%
}{\cosh \eta -\cosh 2\zeta _{b}^{(h_{b})}}  \notag \\
& +(-1)^{\mathsf{N}+1}|+,\text{\textbf{h}}\rangle \coth \zeta
_{+}\det_{q}M(i\pi /2)\sinh (\lambda +\eta /2)\prod_{b=1}^{\mathsf{N}}\frac{%
\cosh 2\lambda -\cosh 2\zeta _{b}^{(h_{b})}}{\cosh \eta +\cosh 2\zeta
_{b}^{(h_{b})}},  \label{ADMFKR-SOV A+}
\end{align}%
where:%
\begin{equation}
\mathsf{A}_{+}(\zeta _{a}^{(h_{a})})=\left( k_{a}^{(+)}\right) ^{\varphi
_{a}}\mathsf{D}_{+}(\zeta _{a}^{(h_{a})}-2\varphi _{a}\xi _{a}),\text{\ \ \
\ \ T}_{a}^{\pm }|+,h_{1},...,h_{a},...,h_{\mathsf{N}}\rangle
=|+,h_{1},...,h_{a}\pm 1,...,h_{\mathsf{N}}\rangle .
\end{equation}%
Indeed, the representation of $\mathcal{D}_{+}(\lambda )$ follows from the
identity (\ref{ADMFKSym-A-D+}) while $\mathcal{C}_{+}(\lambda )$ is uniquely
defined by the quantum determinant relation.\smallskip
\end{theorem}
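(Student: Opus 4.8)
The plan is to observe that Theorem 2.4 (the $\mathcal{B}_+$ case) is not proved from scratch but follows by a duality argument from the already-established $\mathcal{B}_-$ construction of Theorem 2.3, exactly as the preamble to the theorem suggests. The key structural fact is the one recorded just before Proposition 2.2: the combination $\mathcal{V}_+(\lambda)\equiv\mathcal{U}_+^{t_0}(-\lambda)$ satisfies \emph{the same} reflection equation $(\ref{ADMFKbYB})$ as $\mathcal{U}_-(\lambda)$. Thus the reflection algebra generated by $\mathcal{U}_+$ is isomorphic to that generated by $\mathcal{U}_-$, with the dictionary read off from the transposed monodromy matrices: under $t_0$ the off-diagonal entries $\mathcal{B}$ and $\mathcal{C}$ are swapped, so that the role played by $\mathcal{B}_-$ in Theorem 2.3 is taken over by $\mathcal{B}_+$, the role of $\mathcal{A}_-$ by $\mathcal{D}_+$, and the role of $\mathcal{D}_-$ by $\mathcal{A}_+$. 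Correspondingly the quantum-determinant data $\mathsf{A}_-$ is replaced by $\mathsf{D}_+$ and $g_-$ by $g_+$, consistent with Proposition 2.2.

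Concretely, I would first write down the boundary-bulk decomposition of $\mathcal{B}_+(\lambda)$ in terms of the Yang-Baxter generators $A,B,C,D$, in complete analogy with the decomposition of $\mathcal{B}_-(\lambda)$ given at the start of the proof of Theorem 2.3. From the reference-state relations $(\ref{ADMFKLeft-reference})$ and $(\ref{ADMFKRight-reference})$ one checks directly that $\langle 0|$ and $\overline{|0\rangle}$ are $\mathcal{B}_+(\lambda)$-eigencovectors/eigenvectors, with eigenvalue $\text{\textsc{b}}_{+,\mathbf{h}}(\lambda)$ evaluated at $\mathbf{h}=\mathbf{0}$ and $\mathbf{h}=\mathbf{1}$ respectively; the prefactor $\kappa_+ e^{\tau_+}\sinh(2\lambda+\eta)/\sinh\zeta_+$ in $(\ref{ADMFKEigenValue-D+})$ is read off from the $(1,2)$-entry of $K_+$. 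The generating covectors $(\ref{ADMFKD-left-eigenstates+})$ and vectors $(\ref{ADMFKD-right-eigenstates+})$ are built by repeatedly applying $\mathcal{D}_+(-\zeta_n^{(1)})$ and $\mathcal{A}_+(\zeta_n^{(0)})$; that these shift the eigenvalue by one unit in the $n$-th site follows from the $\mathcal{U}_+$-reflection-algebra commutation relations, the $\mathcal{B}_+$ analogues of $(\ref{ADMFKbYB-AB})$. Under condition $(\ref{ADMFKE-SOV})$ the $2^{\mathsf N}$ resulting eigenvalues are distinct, so the states form a basis; the parity relations $(\ref{ADMFKSym-A-D+})$ and $(\ref{ADMFKSym-B-C+})$ then recover $\mathcal{D}_+$, and the quantum-determinant relation of Proposition 2.2 fixes $\mathcal{C}_+$.

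The action formulae $(\ref{ADMFKL-SOV D+})$ and $(\ref{ADMFKR-SOV A+})$ I would obtain by interpolation, mirroring the $\mathcal{B}_-$ proof: since $\mathcal{D}_+(\lambda)$ has the same Laurent-polynomial structure in $e^\lambda$ as $\mathcal{A}_-(\lambda)$, it is determined by its values at the $2\mathsf{N}$ separate points $\zeta_a^{(h_a)}$ together with the two special evaluations coming from the analogues of the identities $(\ref{ADMFKU-identities})$, namely $\mathcal{U}_+(\cdot)$ at the points forcing $\det_q M(0)$ and $\det_q M(i\pi/2)$. The shift operators $\text{T}_a^{\pm\varphi_a}$ and the relation $\mathsf{A}_+(\zeta_a^{(h_a)})=(k_a^{(+)})^{\varphi_a}\mathsf{D}_+(\zeta_a^{(h_a)}-2\varphi_a\xi_a)$ are the images under the duality dictionary of their $\mathcal{B}_-$ counterparts. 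The only genuine obstacle is bookkeeping: the $\mathcal{U}_+$ algebra carries $+\eta/2$ shifts and $\sinh(2\lambda+\eta)$ factors where $\mathcal{U}_-$ carried $-\eta/2$ and $\sinh(2\lambda-\eta)$, and the sign $(-1)^{\mathsf N+1}$ in the $\coth\zeta_+$ term (versus $(-1)^{\mathsf N}$ for $\mathcal{B}_-$) must be tracked carefully through the transpose; once the dictionary $\mathcal{A}_-\leftrightarrow\mathcal{D}_+$, $\mathsf{A}_-\leftrightarrow\mathsf{D}_+$, $g_-\leftrightarrow g_+$ is fixed, everything else is a direct transcription of the Theorem 2.3 argument with no new analytic input required.
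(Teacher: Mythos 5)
Your overall strategy coincides with the paper's: the paper's proof of the $\mathcal{B}_{+}$ theorem is indeed ``along the same line'' as the $\mathcal{B}_{-}$ one, starting from the boundary--bulk decomposition $\mathcal{B}_{+}(\lambda )=B(\lambda )D(-\lambda )a_{+}(\lambda )+D(\lambda )D(-\lambda )b_{+}(\lambda )-B(\lambda )B(-\lambda )c_{+}(\lambda )-D(\lambda )B(-\lambda )d_{+}(\lambda )$, the reference-state properties (\ref{ADMFKLeft-reference}) and (\ref{ADMFKRight-reference}), the $\mathcal{D}_{+}\mathcal{B}_{+}$ and $\mathcal{B}_{+}\mathcal{A}_{+}$ commutation relations, and the interpolation formula anchored at the $2\mathsf{N}$ points $\zeta _{a}^{(h_{a})}$ together with the two extra evaluations coming from $\mathcal{U}_{+}(-\eta /2)=\det_{q}M(0)\,I_{0}$ and $\mathcal{U}_{+}(-\eta /2+i\pi /2)=i\coth \zeta _{+}\det_{q}M(i\pi /2)\,\sigma _{0}^{z}$. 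So the architecture of your argument is the right one and matches the paper's.

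There is, however, one concrete error precisely in the bookkeeping you flag as the ``only genuine obstacle'': you assign $\langle 0|$ the eigenvalue \textsc{b}$_{+,\mathbf{h}}$ at $\mathbf{h}=\mathbf{0}$ and $\overline{|0\rangle }$ the one at $\mathbf{h}=\mathbf{1}$, but for $\mathcal{B}_{+}$ it is the other way around. Since $\langle 0|B(\lambda )=0$ and $\langle 0|D(\lambda )=d(\lambda )\langle 0|$, the decomposition above gives $\langle 0|\mathcal{B}_{+}(\lambda )=b_{+}(\lambda )d(\lambda )d(-\lambda )\langle 0|=$ \textsc{b}$_{+,\mathbf{1}}(\lambda )\langle 0|$, while $D(\lambda )\overline{|0\rangle }=a(\lambda )\overline{|0\rangle }$ gives $\mathcal{B}_{+}(\lambda )\overline{|0\rangle }=$ \textsc{b}$_{+,\mathbf{0}}(\lambda )\overline{|0\rangle }$; this is consistent with the exponents $(1-h_{n})$ and $h_{n}$ in (\ref{ADMFKD-left-eigenstates+}) and (\ref{ADMFKD-right-eigenstates+}), and with the shift operators being T$_{a}^{+\varphi _{a}}$ rather than T$_{a}^{-\varphi _{a}}$. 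If you keep the $\mathcal{B}_{-}$ assignment, the lowering/raising structure of the generating operators gets inverted and the stated formulas no longer come out. Relatedly, your opening appeal to the isomorphism $\mathcal{V}_{+}(\lambda )=\mathcal{U}_{+}^{t_{0}}(-\lambda )$ should be used with care: under $t_{0}$ the $(1,2)$ entry of $\mathcal{V}_{+}(\lambda )$ is $\mathcal{C}_{+}(-\lambda )$, not $\mathcal{B}_{+}(\lambda )$, so the literal image of the $\mathcal{B}_{-}$-diagonalization under that map is a $\mathcal{C}_{+}$-eigenbasis; what actually makes your transcription work is the structural similarity of the $(\mathcal{D}_{+},\mathcal{B}_{+})$ commutation relations to the $(\mathcal{A}_{-},\mathcal{B}_{-})$ ones, which is the route the paper takes by direct computation.
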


\begin{proof}[Proof of \textsf{I)}]
The proof is given along the same line delineated in the previous theorem,
we just need to do the following remarks. First of all let us give the
(boundary-bulk) decomposition of the reflection algebra generator:%
\begin{equation}
\mathcal{B}_{+}(\lambda )=B(\lambda )D(-\lambda )a_{+}(\lambda )+D(\lambda
)D(-\lambda )b_{+}(\lambda )-B(\lambda )B(-\lambda )c_{+}(\lambda
)-D(\lambda )B(-\lambda )d_{+}(\lambda ),  \label{ADMFKBB-dec-B+}
\end{equation}%
in terms of the generators of the Yang-Baxter algebra. Then, the properties (%
\ref{ADMFKLeft-reference}) imply that $\langle 0|$ is a $\mathcal{B}_{+}(\lambda
) $-eigenstate with non-zero eigenvalue:%
\begin{equation}
\text{\textsc{b}}_{+,\text{\textbf{1}}}(\lambda )\equiv \kappa _{+}e^{\tau
_{+}}\frac{\sinh (2\lambda +\eta )}{\sinh \zeta _{+}}d(\lambda )d(-\lambda ).
\end{equation}%
The proof of the point \textsf{I)} is based on the following reflection
algebra commutation relations:%
\begin{eqnarray}
\mathcal{D}_{+}(\lambda _{1})\mathcal{B}_{+}(\lambda _{2}) &=&\frac{\sinh
(\lambda _{1}-\lambda _{2}+\eta )\sinh (\lambda _{2}+\lambda _{1}+\eta )}{%
\sinh (\lambda _{1}-\lambda _{2})\sinh (\lambda _{1}+\lambda _{2})}\mathcal{B%
}_{+}(\lambda _{2})\mathcal{D}_{+}(\lambda _{1})  \notag \\
&&-\frac{\sinh (2\lambda _{2}+\eta )\sinh \eta }{\sinh (\lambda _{1}-\lambda
_{2})\sinh 2\lambda _{2}}\mathcal{B}_{+}(\lambda _{1})\mathcal{D}%
_{+}(\lambda _{2})  \notag \\
&&-\frac{\sinh \eta \sinh (2\lambda _{2}+\eta )}{\sinh (\lambda _{1}+\lambda
_{2})\sinh 2\lambda _{2}}\mathcal{B}_{+}(\lambda _{1})\mathcal{D}%
_{+}(-\lambda _{2}),
\end{eqnarray}%
and on the identities:%
\begin{equation}
\mathcal{U}_{+}(-\eta /2)=\det_{q}M(0)\text{ }I_{0},\text{ \ \ }\mathcal{U}%
_{+}(-\eta /2+i\pi /2)=i\coth \zeta _{+}\det_{q}M(i\pi /2)\text{ }\sigma
_{0}^{z}.  \label{ADMFKU+identities}
\end{equation}%
By using them and the fact that $\mathcal{D}_{+}(\lambda )$ has the
following functional dependence w.r.t. $\lambda $:%
\begin{equation}
\mathcal{D}_{+}(\lambda )=\sum_{a=0}^{2\mathsf{N}+1}e^{\left( 2a-2\mathsf{N}%
+1\right) \lambda }\mathcal{D}_{+,a}
\end{equation}%
we get the following interpolation formula for the action on $\langle +$, \textbf{h}$|$:%
\begin{align}
\langle +,\text{\textbf{h}}|\mathcal{D}_{+}(\lambda )& =\sum_{a=1}^{2\mathsf{%
N}}\prod_{\substack{ b=1  \\ b\neq a}}^{2\mathsf{N}+2}\frac{\sinh (\lambda
-\zeta _{b}^{(h_{b})})}{\sinh (\zeta _{a}^{(h_{a})}-\zeta _{b}^{(h_{b})})}%
\mathsf{D}_{+}(\zeta _{a}^{(h_{a})})\langle +,\text{\textbf{h}}|\text{T}%
_{a}^{\varphi _{a}}  \notag \\
& +\det_{q}M(0)\prod_{\substack{ b=1  \\ b\neq 2\mathsf{N}+1}}^{2\mathsf{N}%
+2}\frac{\sinh (\lambda -\zeta _{b}^{(h_{b})})}{\sinh (\zeta _{2\mathsf{N}%
+1}^{(1)}-\zeta _{b}^{(h_{b})})}\langle +,\text{\textbf{h}}|  \notag \\
& -i\coth \zeta _{+}\det_{q}M(i\pi /2)\prod_{b=1}^{2\mathsf{N}+1}\frac{\sinh
(\lambda -\zeta _{b}^{(h_{b})})}{\sinh (\zeta _{2\mathsf{N}+2}^{(1)}-\zeta
_{b}^{(h_{b})})}\langle +,\text{\textbf{h}}|,
\end{align}%
where we have denoted $\zeta _{2\mathsf{N}+b}(h_{2\mathsf{N}+b})=\zeta _{2%
\mathsf{N}+b}^{(1)}$ for $b=1,2$ and%
\begin{equation}
\zeta _{2\mathsf{N}+1}^{(1)}=-\eta /2,\text{ \ \ \ \ }\zeta _{2\mathsf{N}%
+2}^{(1)}=-\eta /2+i\pi /2.
\end{equation}%
Then, it is a simple exercise to rewrite this in the form $\left( \ref{ADMFKL-SOV
D+}\right) $.
\end{proof}

\begin{proof}[Proof of \textsf{II)}]
Similarly, by using (\ref{ADMFKBB-dec-B+}) and (\ref{ADMFKRight-reference}), it
follows that $\overline{|0\rangle }$ is a $\mathcal{B}_{+}(\lambda )$%
-eigenstate with non-zero eigenvalue:%
\begin{equation}
\text{\textsc{b}}_{+,\text{\textbf{0}}}(\lambda )\equiv \kappa _{+}e^{\tau
_{+}}\frac{\sinh (2\lambda +\eta )}{\sinh \zeta _{+}}a(\lambda )a(-\lambda ).
\end{equation}%
Then the proof of the point \textsf{II)} is based on the following
reflection algebra commutation relations:%
\begin{eqnarray}
\mathcal{B}_{+}(\lambda _{2})\mathcal{A}_{+}(\lambda _{1}) &=&\frac{\sinh
(\lambda _{1}-\lambda _{2}+\eta )\sinh (\lambda _{2}+\lambda _{1}+\eta )}{%
\sinh (\lambda _{1}-\lambda _{2})\sinh (\lambda _{1}+\lambda _{2})}\mathcal{A%
}_{+}(\lambda _{1})\mathcal{B}_{+}(\lambda _{2})  \notag \\
&&+\frac{\sinh \eta \sinh (2\lambda _{2}+\eta )}{\sinh (\lambda _{2}-\lambda
_{1})\sinh 2\lambda _{2}}\mathcal{A}_{+}(\lambda _{2})\mathcal{B}%
_{+}(\lambda _{1})  \notag \\
&&+\frac{\sinh \eta \sinh (2\lambda _{2}+\eta )}{\sinh (\lambda _{1}+\lambda
_{2})\sinh 2\lambda _{2}}\mathcal{A}_{+}(-\lambda _{2})\mathcal{B}%
_{+}(\lambda _{1}).
\end{eqnarray}%
By using them we get the following interpolation formula for the action on $%
|+,\text{\textbf{h}}\rangle $:%
\begin{align}
\mathcal{A}_{+}(\lambda )|+,\text{\textbf{h}}\rangle =& \sum_{a=1}^{2%
\mathsf{N}}\text{T}_{a}^{\varphi _{a}}|+,\text{\textbf{h}}\rangle \prod 
_{\substack{ b=1  \\ b\neq a}}^{2\mathsf{N}+2}\frac{\sinh (\lambda -\zeta
_{b}^{(h_{b})})}{\sinh (\zeta _{a}^{(h_{a})}-\zeta _{b}^{(h_{b})})}\mathsf{A}%
_{+}(\zeta _{a}^{(h_{a})})  \notag \\
& +|+,\text{\textbf{h}}\rangle \det_{q}M(0)\prod_{\substack{ b=1  \\ b\neq 2%
\mathsf{N}+1}}^{2\mathsf{N}+2}\frac{\sinh (\lambda -\zeta _{b}^{(h_{b})})}{%
\sinh (\zeta _{2\mathsf{N}+1}^{(1)}-\zeta _{b}^{(h_{b})})}  \notag \\
& +i|+,\text{\textbf{h}}\rangle \coth \zeta _{+}\det_{q}M(i\pi
/2)\prod_{b=1}^{2\mathsf{N}+1}\frac{\sinh (\lambda -\zeta _{b}^{(h_{b})})}{%
\sinh (\zeta _{2\mathsf{N}+2}^{(1)}-\zeta _{b}^{(h_{b})})},
\end{align}%
which can be rewritten in the form $\left( \ref{ADMFKR-SOV A+}\right) $.
\end{proof}

\section{SOV-decomposition of the identity}

The action of a generic left $\mathcal{B}_{\pm }$-eigenstate on a generic
right $\mathcal{B}_{\pm }$-eigenstate are here compute in this way allowing
to write the decomposition of the identity in the corresponding basis. It is
worth remarking that for the Hermitian conjugation properties of Proposition %
\ref{ADMFKHerm-conj} these results correspond to the computations of
scalar products between $\mathcal{B}_{\pm }$-eigenvectors and $\mathcal{C}%
_{\pm }$-eigenvectors. We show that up to an overall constant these are
completely fixed by the left and right SOV-representations of the
Yang-Baxter algebras when the gauge in the SOV-representations are chosen.

\subsection{Change of basis properties}
Let us present the main properties of the $2^{\mathsf{N}}\times 2^{\mathsf{N}}$ matrices $%
U^{(L,\epsilon )}$ and $U^{(R,\epsilon )}$:%
\begin{equation}
\langle \epsilon ,\text{\textbf{h}}|=\langle \text{\textbf{h}}%
|U^{(L,\epsilon )}=\sum_{i=1}^{2^{\mathsf{N}}}U_{\varkappa \left( \text{%
\textbf{h}}\right) ,i}^{(L,\epsilon )}\langle \varkappa ^{-1}\left( i\right)
|\text{ \ \ and\ \ \ }|\epsilon ,\text{\textbf{h}}\rangle =U^{(R,\epsilon )}|%
\text{\textbf{h}}\rangle =\sum_{i=1}^{2^{\mathsf{N}}}U_{i,\varkappa \left( 
\text{\textbf{h}}\right) }^{(R,\epsilon )}|\varkappa ^{-1}\left( i\right)
\rangle ,
\end{equation}%
which define the change of basis to the SOV-basis starting from the original
spin basis:%
\begin{equation}
\langle \text{\textbf{h}}|\equiv \otimes _{n=1}^{\mathsf{N}}\langle
2h_{n}-1,n|\text{ \ \ \ \ and \ \ \ }|\text{\textbf{h}}\rangle \equiv
\otimes _{n=1}^{\mathsf{N}}|2h_{n}-1,n\rangle ,
\end{equation}%
where $\varkappa $ is the following natural isomorphism between the sets $%
\{0,1\}^{\mathsf{N}}$ and $\{1,...,2^{\mathsf{N}}\}$: 
\begin{equation}
\varkappa :\text{\textbf{h}}\in \{0,1\}^{\mathsf{N}}\rightarrow \varkappa
\left( \text{\textbf{h}}\right) \equiv 1+\sum_{a=1}^{\mathsf{N}%
}2^{(a-1)}h_{a}\in \{1,...,2^{\mathsf{N}}\}.  \label{ADMFKcorrisp}
\end{equation}%
Note that the matrices $U^{(L,\epsilon )}$ and $U^{(R,\epsilon )}$\ are
invertible matrices for the diagonalizability of $\mathcal{B}_{\epsilon
}(\lambda )$:
\begin{equation}
U^{(L,\epsilon )}\mathcal{B}_{\epsilon }(\lambda )=\Delta _{\mathcal{B}%
_{\epsilon }}(\lambda )U^{(L,\epsilon )},\text{ \ \ }\mathcal{B}_{\epsilon
}(\lambda )U^{(R,\epsilon )}=U^{(R,\epsilon )}\Delta _{\mathcal{B}_{\epsilon
}}(\lambda ).
\end{equation}
Here $\Delta _{\mathcal{B}_{\epsilon }}(\lambda )$ is the $2^{\mathsf{N}}\times 2^{\mathsf{N}}$ diagonal matrix whose elements, for the simplicity of the $\mathcal{B}_{\epsilon }$-spectrum, read:
\begin{equation}
\left( \Delta _{\mathcal{B}_{\epsilon }}(\lambda )\right) _{i,j}\equiv
\delta _{i,j}\text{\textsc{b}}_{\epsilon ,\varkappa ^{-1}\left( i\right)
}(\lambda )\text{ \ }\forall i,j\in \{1,...,2^{\mathsf{N}}\}.
\end{equation}%
Moreover, it holds:

\begin{proposition}
The $2^{\mathsf{N}}\times 2^{\mathsf{N}}$ matrix:%
\begin{equation}
M^{(\epsilon )}\equiv U^{(L,\epsilon )}U^{(R,\epsilon )}
\end{equation}%
is diagonal and it is characterized by:%
\begin{equation}
M_{\varkappa \left( \text{\textbf{h}}\right) \varkappa \left( \text{\textbf{h%
}}\right) }^{(\epsilon )}=\langle \epsilon ,\text{\textbf{h}}|\epsilon ,%
\text{\textbf{h}}\rangle =\prod_{1\leq b<a\leq \mathsf{N}}\frac{1}{\eta
_{a}^{(h_{a})}-\eta _{b}^{(h_{b})}}.  \label{ADMFKM_jj}
\end{equation}
\end{proposition}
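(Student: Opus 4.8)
The plan is to read off the matrix elements of $M^{(\epsilon)}$ as SOV covector--vector overlaps and then evaluate them. Since the spin covectors $\langle\mathbf{h}|$ and vectors $|\mathbf{h}\rangle$ are dual bases, the defining relations $\langle\epsilon,\mathbf{h}|=\langle\mathbf{h}|U^{(L,\epsilon)}$ and $|\epsilon,\mathbf{h}\rangle=U^{(R,\epsilon)}|\mathbf{h}\rangle$ give at once $M^{(\epsilon)}_{\varkappa(\mathbf{h}),\varkappa(\mathbf{h}')}=\langle\mathbf{h}|U^{(L,\epsilon)}U^{(R,\epsilon)}|\mathbf{h}'\rangle=\langle\epsilon,\mathbf{h}|\epsilon,\mathbf{h}'\rangle$, so the statement splits into the vanishing of the off-diagonal overlaps and the evaluation of the diagonal ones. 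The diagonality I would obtain from the simplicity of the $\mathcal{B}_{\epsilon}$-spectrum established in Theorem \ref{ADMFKTh1}: sandwiching $\mathcal{B}_{\epsilon}(\lambda)$ between a left and a right eigenstate and using (\ref{ADMFKright-B-eigen-cond})--(\ref{ADMFKleft-B-eigen-cond}) yields $(\mathsf{b}_{\epsilon,\mathbf{h}}(\lambda)-\mathsf{b}_{\epsilon,\mathbf{h}'}(\lambda))\langle\epsilon,\mathbf{h}|\epsilon,\mathbf{h}'\rangle=0$. Because the eigenvalue (\ref{ADMFKEigenValue-D}) is fixed by the zeros of $a_{\mathbf{h}}$, condition (\ref{ADMFKE-SOV}) forces $\mathsf{b}_{\epsilon,\mathbf{h}}\neq\mathsf{b}_{\epsilon,\mathbf{h}'}$ as soon as $\mathbf{h}\neq\mathbf{h}'$, hence the overlap vanishes; this is the easy half.

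For the diagonal entries I would set up a one-step recursion in $\mathbf{h}$. Let $\mathbf{h}$ and $\mathbf{h}^{(n)}$ differ only in the $n$-th digit. To relate $\langle\epsilon,\mathbf{h}^{(n)}|\epsilon,\mathbf{h}^{(n)}\rangle$ to $\langle\epsilon,\mathbf{h}|\epsilon,\mathbf{h}\rangle$ (taking $\epsilon=-$ for definiteness) I would compute the single quantity $\langle-,\mathbf{h}^{(n)}|\mathcal{D}_{-}(\lambda)|-,\mathbf{h}\rangle$ in two ways. Expanding $\mathcal{D}_{-}(\lambda)$ to the right by (\ref{ADMFKR-SOV D-}), the orthogonality just proven kills every shift term except the one reaching $\mathbf{h}^{(n)}$, leaving a coefficient times $\langle-,\mathbf{h}^{(n)}|\epsilon,\mathbf{h}^{(n)}\rangle$; expanding it to the left via the parity relation (\ref{ADMFKSym-A-D-}) together with (\ref{ADMFKL-SOV A-}), only the shift term raising the $n$-th digit back to $\mathbf{h}$ survives, leaving another coefficient times $\langle-,\mathbf{h}|\epsilon,\mathbf{h}\rangle$. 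Equating the two evaluations produces a $\lambda$-independent ratio $\langle-,\mathbf{h}^{(n)}|\epsilon,\mathbf{h}^{(n)}\rangle/\langle-,\mathbf{h}|\epsilon,\mathbf{h}\rangle$, expressed through $\mathsf{A}_{-},\mathsf{D}_{-}$ evaluated at the separate points $\zeta_a^{(h_a)}$ and the interpolation factors $\cosh 2\lambda-\cosh 2\zeta_b^{(h_b)}$, which are precisely the ingredients of $\eta_a^{(h_a)}=\cosh 2\zeta_a^{(h_a)}$.

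I would then verify that this per-site ratio equals the ratio of the claimed products $\prod_{1\le b<a\le\mathsf{N}}(\eta_a^{(h_a)}-\eta_b^{(h_b)})^{-1}$ under the flip $h_n\to h_n\mp1$, and fix the single overall constant through the reference overlap $\langle 0|\prod_{n}(\mathcal{A}_{-}/\mathsf{A}_{-})\overline{|0\rangle}$ entering the normalisation $\mathsf{n}_{\epsilon}$, so that iterating the recursion from $\overline{|0\rangle}$ (the configuration $\mathbf{h}=\mathbf{1}$) reproduces (\ref{ADMFKM_jj}). The main obstacle is the bookkeeping in this last step: one must check that the per-site ratios telescope into the fully symmetric product independently of the order in which the digits are flipped, which requires the interpolation coefficients in (\ref{ADMFKL-SOV A-}) and (\ref{ADMFKR-SOV D-})---including the two extra points $\zeta_{2\mathsf{N}+1},\zeta_{2\mathsf{N}+2}$ carrying the $\det_{q}M$ terms---to collapse correctly at $\lambda=\zeta_a^{(h_a)}$, and that $\mathsf{n}_{\epsilon}$ is exactly the constant clearing every $\mathbf{h}$-independent factor. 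The $\epsilon=+$ case follows from the same computation with the roles of $\mathcal{A}$ and $\mathcal{D}$ and the parity relation (\ref{ADMFKSym-A-D+}) interchanged.
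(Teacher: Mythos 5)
Your proposal follows essentially the same route as the paper: diagonality from the separation of the $\mathcal{B}_{\epsilon}$-eigenvalues under (\ref{ADMFKE-SOV}), then a two-sided evaluation of a single generator matrix element (left SOV action versus right action obtained through the parity relation (\ref{ADMFKSym-A-D-}) and the annihilation of $\mathcal{D}_{-}$ at the separate point) to get the one-digit-flip ratio (\ref{ADMFKF1}), followed by telescoping and fixing the constant via \textsc{n}$_{\epsilon}$. The only cosmetic difference is that the paper evaluates $\mathcal{A}_{-}$ at the special point $\xi_{a}+\eta/2$, where each side collapses to a single term, rather than keeping $\lambda$ generic in $\mathcal{D}_{-}(\lambda)$.
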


\begin{proof}
Note that being the action of a left $\mathcal{B}_{\epsilon }$-eigenstate on a
right $\mathcal{B}_{\epsilon }$-eigenstate, corresponding to different $%
\mathcal{B}_{\epsilon }$-eigenvalues, zero this implies that the matrix $%
M^{(\epsilon )}$ is diagonal; then to compute its diagonal elements we
compute the matrix elements $\theta _{a}^{\left( -\right) }\equiv \langle
-,h_{1},...,h_{a}=0,...,h_{\mathsf{N}}|\mathcal{A}_{-}(\xi _{a}+\eta
/2)|-,h_{1},...,h_{a}=1,...,h_{\mathsf{N}}\rangle $, where $a\in \{1,...,%
\mathsf{N}\}$. Using the left action of the operator $\mathcal{A}%
_{-}(\xi _{a}+\eta /2)$ we get:%
\begin{eqnarray}
\theta _{a}^{\left( -\right) } &=&\mathsf{A}_{-}(\eta /2-\xi _{a})\frac{%
\sinh \eta }{\sinh (2\xi _{a}-\eta )}\prod_{\substack{ b=1  \\ b\neq a}}^{%
\mathsf{N}}\frac{\cosh 2\zeta _{a}^{(1)}-\cosh 2\zeta _{b}^{(h_{b})}}{\cosh
2\zeta _{a}^{(0)}-\cosh 2\zeta _{b}^{(h_{b})}}  \notag \\
&&\times \left. \langle -,h_{1},...,h_{a}=1,...,h_{\mathsf{N}%
}|-,h_{1},...,h_{a}=1,...,h_{\mathsf{N}}\rangle \right.
\end{eqnarray}%
while using the decomposition (\ref{ADMFKSym-A-D-}) and the fact that:%
\begin{equation}
\mathcal{D}_{-}(-\xi _{a}-\eta /2)|-,h_{1},...,h_{a}=1,...,h_{\mathsf{N}%
}\rangle =\text{\b{0}}
\end{equation}%
it holds:%
\begin{eqnarray}
\mathcal{A}_{-}(\xi _{a}+\eta /2)|-,h_{1},...,h_{a}=1,...,h_{\mathsf{N}%
}\rangle &=&\frac{k_{a}\sinh \eta }{\sinh (2\xi _{a}+\eta )}\mathsf{A}%
_{-}(\eta /2-\xi _{a})  \notag \\
&\times &|-,h_{1},...,h_{a}=0,...,h_{\mathsf{N}}\rangle ,
\end{eqnarray}%
and then we get:%
\begin{equation}
\theta _{a}^{\left( -\right) }=\frac{\sinh \eta }{\sinh (2\xi _{a}-\eta )}%
\mathsf{A}_{-}(\eta /2-\xi _{a})\langle -,h_{1},...,h_{a}=0,...,h_{\mathsf{N}%
}|-,h_{1},...,h_{a}=0,...,h_{\mathsf{N}}\rangle .
\end{equation}%
so that it holds:%
\begin{equation}
\frac{\langle -,h_{1},...,h_{a}=1,...,h_{\mathsf{N}%
}|-,h_{1},...,h_{a}=1,...,h_{\mathsf{N}}\rangle }{\langle
-,h_{1},...,h_{a}=0,...,h_{\mathsf{N}}|-,h_{1},...,h_{a}=0,...,h_{\mathsf{N}%
}\rangle }=\prod_{\substack{ b=1  \\ b\neq a}}^{\mathsf{N}}\frac{\cosh
2\zeta _{a}^{(0)}-\cosh 2\zeta _{b}^{(h_{b})}}{\cosh 2\zeta _{a}^{(1)}-\cosh
2\zeta _{b}^{(h_{b})}},  \label{ADMFKF1}
\end{equation}%
from which one can prove:%
\begin{equation}
\frac{\langle -,h_{1},...,h_{\mathsf{N}}|-,h_{1},...,h_{\mathsf{N}}\rangle }{%
\langle -,1,...,1|-,1,...,1\rangle }=\prod_{1\leq b<a\leq \mathsf{N}}\frac{%
\eta _{a}^{\left( 1\right) }-\eta _{b}^{\left( 1\right) }}{\eta _{a}^{\left(
h_{a}\right) }-\eta _{b}^{\left( h_{b}\right) }}.  \label{ADMFKF2}
\end{equation}%
This prove the proposition for $\epsilon =-$, being
\begin{equation}
\langle -,1,...,1|-,1,...,1\rangle =\prod_{1\leq b<a\leq \mathsf{N}}\frac{1}{%
\eta _{a}^{\left( 1\right) }-\eta _{b}^{\left( 1\right) }},
\end{equation}
 by our definition of the normalization \textsc{n}$_{-}$. Similarly for $\epsilon =+$, we compute the matrix elements $\theta
_{a}^{\left( +\right) }\equiv \langle +,h_{1},...,h_{a}=1,...,h_{\mathsf{N}}|%
\mathcal{D}_{+}(\xi _{a}-\eta /2)|+,h_{1},...,h_{a}=0,...,h_{\mathsf{N}%
}\rangle $, where $a\in \{1,...,\mathsf{N}\}$. Then using the left action of
the operator $\mathcal{D}_{+}(\xi _{a}-\eta /2)$ we get:%
\begin{eqnarray}
\theta _{a}^{\left( +\right) } &=&-\mathsf{D}_{+}(-\xi _{a}-\eta /2)\frac{%
\sinh \eta }{\sinh (2\xi _{a}+\eta )}\prod_{\substack{ b=1  \\ b\neq a}}^{%
\mathsf{N}}\frac{\cosh 2\zeta _{a}^{(0)}-\cosh 2\zeta _{b}^{(h_{b})}}{\cosh
2\zeta _{a}^{(1)}-\cosh 2\zeta _{b}^{(h_{b})}}  \notag \\
&&\times \left. \langle +,h_{1},...,h_{a}=0,...,h_{\mathsf{N}%
}|+,h_{1},...,h_{a}=0,...,h_{\mathsf{N}}\rangle \right.
\end{eqnarray}%
while using the decomposition (\ref{ADMFKSym-A-D+}) and the fact that:%
\begin{equation}
\mathcal{A}_{+}(-\xi _{a}+\eta /2)|+,h_{1},...,h_{a}=0,...,h_{\mathsf{N}%
}\rangle =\text{\b{0}}
\end{equation}%
it holds:%
\begin{eqnarray}
\mathcal{D}_{+}(\xi _{a}-\eta /2)|+,h_{1},...,h_{a}=0,...,h_{\mathsf{N}%
}\rangle &=&-\frac{k_{a}^{(+)}\sinh \eta }{\sinh (2\xi _{a}-\eta )}\mathsf{D}%
_{+}(-\xi _{a}-\eta /2)  \notag \\
&\times &|+,h_{1},...,h_{a}=1,...,h_{\mathsf{N}}\rangle ,
\end{eqnarray}%
and then we get:%
\begin{equation}
\theta _{a}=-\frac{\sinh \eta }{\sinh (2\xi _{a}+\eta )}\mathsf{D}_{+}(-\xi
_{a}-\eta /2)\langle +,h_{1},...,h_{a}=1,...,h_{\mathsf{N}%
}|+,h_{1},...,h_{a}=1,...,h_{\mathsf{N}}\rangle
\end{equation}%
and so:%
\begin{equation}
\frac{\langle +,h_{1},...,h_{a}=1,...,h_{\mathsf{N}%
}|+,h_{1},...,h_{a}=1,...,h_{\mathsf{N}}\rangle }{\langle
+,h_{1},...,h_{a}=0,...,h_{\mathsf{N}}|+,h_{1},...,h_{a}=0,...,h_{\mathsf{N}%
}\rangle }=\prod_{\substack{ b=1  \\ b\neq a}}^{\mathsf{N}}\frac{\cosh
2\zeta _{a}^{(0)}-\cosh 2\zeta _{b}^{(h_{b})}}{\cosh 2\zeta _{a}^{(1)}-\cosh
2\zeta _{b}^{(h_{b})}},  \label{ADMFKF1+}
\end{equation}%
from which we have:%
\begin{equation}
\frac{\langle +,h_{1},...,h_{\mathsf{N}}|+,h_{1},...,h_{\mathsf{N}}\rangle }{%
\langle +,0,...,0|+,0,...,0\rangle }=\prod_{1\leq b<a\leq \mathsf{N}}\frac{%
\eta _{a}^{\left( 0\right) }-\eta _{b}^{\left( 0\right) }}{\eta _{a}^{\left(
h_{a}\right) }-\eta _{b}^{\left( h_{b}\right) }},
\end{equation}%
which proves the proposition, being
\begin{equation}
\langle +,0,...,0|+,0,...,0\rangle =\prod_{1\leq b<a\leq \mathsf{N}}\frac{1}{%
\eta _{a}^{\left( 0\right) }-\eta _{b}^{\left( 0\right) }}.
\end{equation}
 by our definition of the normalization 
\textsc{n}$_{+}$.
\end{proof}

\subsection{SOV-decomposition of the identity}

The following spectral decomposition of the identity $\mathbb{I}$:%
\begin{equation}
\mathbb{I}\equiv \sum_{i=1}^{2^{\mathsf{N}}}\mu _{i}|\epsilon ,\varkappa
^{-1}\left( i\right) \rangle \langle \epsilon ,\varkappa ^{-1}\left(
i\right) |,
\end{equation}%
can be given in terms of the left and right SOV-basis, where the $\mu
_{i}\equiv \left( \langle \epsilon ,\varkappa ^{-1}\left( i\right) |\epsilon
,\varkappa ^{-1}\left( i\right) \rangle \right) ^{-1}$ is the analogous of
the so-called Sklyanin's measure\footnote{%
Sklyanin's measure has been first introduced by Sklyanin in the quantum Toda
chain \cite{ADMFKSk1}, see also \cite{ADMFKSm98} and \cite{ADMFKBT06} for further
discussions.} in our 6-vertex reflection algebra representations. Now using
the result of the previous section we can explicitly write:%
\begin{equation}
\mathbb{I}\equiv \sum_{h_{1},...,h_{\mathsf{N}}=0}^{1}\prod_{1\leq b<a\leq 
\mathsf{N}}(\eta _{a}^{(h_{a})}-\eta _{a}^{(h_{a})})|\epsilon ,h_{1},...,h_{%
\mathsf{N}}\rangle \langle \epsilon ,h_{1},...,h_{\mathsf{N}}|.
\label{ADMFKDecomp-Id}
\end{equation}

\section{SOV characterization of $\mathcal{T}_\epsilon(\protect\lambda )$-spectrum}

Here we construct  in the $\mathcal{B}_{\pm }$-SOV-representations the spectrum of the transfer matrix $\mathcal{T}_\pm(\lambda )$
for the class of boundary conditions listed in Theorem \ref{ADMFKTh1}. Let us start giving the following
characterization:

\begin{lemma}
Let us denote with $\Sigma _{\mathcal{T}}$ the set of the eigenvalue
functions of the transfer matrix $\mathcal{T}_\epsilon(\lambda )$, then any $\tau_\epsilon
(\lambda )\in \Sigma _{\mathcal{T}}$ is an even function of $\lambda $\ of
the form:%
\begin{align}
\tau_\epsilon(\lambda )& =2\sinh (\lambda -\eta /2)\sinh (\lambda +\eta /2)\coth
\zeta _{-}\coth \zeta _{+}\det_{q}M(i\pi /2)  \notag \\
& +2\cosh (\lambda -\eta /2)\cosh (\lambda +\eta /2)\det_{q}M(0)  \notag \\
& +\sinh (2\lambda -\eta )\sinh (2\lambda +\eta )\sum_{b=1}^{\mathsf{N}+1%
}c_{b}^{\tau_\epsilon}(\cosh 2\lambda )^{b-1}.  \label{ADMFKset-t}
\end{align}
\end{lemma}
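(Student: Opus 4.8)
The plan is to determine the analytic class of the operator $\mathcal{T}(\lambda)$, transfer it to every eigenvalue, and then pin down the two non‑free coefficients by evaluating $\mathcal{T}$ at the two points where the central identities (\ref{ADMFKU-identities}) apply. The only genuinely free data will be the $\mathsf{N}$ coefficients $c_{b}^{\tau}$.

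First I would fix the functional class. Evenness, $\mathcal{T}(-\lambda)=\mathcal{T}(\lambda)$, is already (\ref{ADMFKeven-transfer}). To it I would add $i\pi$‑periodicity: a direct check gives $R_{0n}(\lambda+i\pi)=-\sigma_0^{z}R_{0n}(\lambda)\sigma_0^{z}$ and $K_{\pm}(\lambda+i\pi)=-\sigma_0^{z}K_{\pm}(\lambda)\sigma_0^{z}$ (the $\sinh 2\lambda$ off‑diagonal $K$‑entries are $i\pi$‑invariant, the diagonal ones flip sign), whence $\mathcal{U}_{-}(\lambda+i\pi)=-\sigma_0^{z}\mathcal{U}_{-}(\lambda)\sigma_0^{z}$ and, by cyclicity of the trace, $\mathcal{T}(\lambda+i\pi)=\mathcal{T}(\lambda)$. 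A Laurent polynomial in $e^{\lambda}$ that is both even and $i\pi$‑periodic is a polynomial in $\cosh 2\lambda$. Since the family commutes and is normal (Proposition \ref{ADMFKHerm-conj}), it is simultaneously diagonalizable, and restricting to a common eigenvector shows that every $\tau(\lambda)\in\Sigma_{\mathcal{T}}$ is likewise a polynomial in $\cosh 2\lambda$.

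Next I would bound the degree. From $\mathcal{U}_{-}(\lambda)=M_0(\lambda)K_{-}(\lambda)\hat M_0(\lambda)$ the entries are Laurent polynomials in $e^{\lambda}$; the diagonal generators $\mathcal{A}_{\pm},\mathcal{D}_{\pm}$ carry only odd exponents, so that in the even diagonal form (\ref{ADMFKT-diag-pm-A}) of $\mathcal{T}_{\setminus}^{(\pm)}$ built from (\ref{ADMFKTr-pm-Diag}) the first‑order prefactors $\mathsf{a}_{\mp}$ produce only even exponents up to $2(\mathsf{N}+1)$, and the same bound holds for the off‑diagonal part $c_{-}\mathcal{B}_{+}$ once the $\mathcal{B}_{+}$‑eigenvalue (\ref{ADMFKEigenValue-D+}) is inserted. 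Hence $\mathcal{T}(\lambda)$, and therefore $\tau(\lambda)$, is a polynomial of degree at most $\mathsf{N}+1$ in $\cosh 2\lambda$, i.e. it lives in an $(\mathsf{N}+2)$‑dimensional space. I expect this degree count — excluding the would‑be top power $(\cosh 2\lambda)^{\mathsf{N}+2}$ coming from the crude product bound — to be the delicate bookkeeping step.

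Finally I would parametrize and fix the two distinguished coefficients. Writing $x=\cosh 2\lambda$ and using $x-\cosh\eta=2\sinh(\lambda-\eta/2)\sinh(\lambda+\eta/2)$, $x+\cosh\eta=2\cosh(\lambda-\eta/2)\cosh(\lambda+\eta/2)$ and $x^{2}-\cosh^{2}\eta=\sinh(2\lambda-\eta)\sinh(2\lambda+\eta)$, any even polynomial of degree $\le\mathsf{N}+1$ in $x$ decomposes uniquely as $\alpha(x-\cosh\eta)+\beta(x+\cosh\eta)+(x^{2}-\cosh^{2}\eta)\sum_{b=1}^{\mathsf{N}}c_{b}^{\tau}x^{b-1}$. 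At $\lambda=\eta/2$ (so $x=\cosh\eta$) and at $\lambda=\eta/2+i\pi/2$ (so $x=-\cosh\eta$) the $c_{b}^{\tau}$‑part vanishes since $\sinh(2\lambda-\eta)=0$. Evaluating the operator there via $\mathcal{T}(\lambda)=\text{tr}_0\{K_{+}(\lambda)\mathcal{U}_{-}(\lambda)\}$ and the identities (\ref{ADMFKU-identities}), together with the elementary traces $\text{tr}_0 K_{+}(\eta/2)=2\cosh\eta$ and $\text{tr}_0\{K_{+}(\eta/2+i\pi/2)\sigma_0^{z}\}=2i\coth\zeta_{+}\cosh\eta$, gives $\mathcal{T}(\eta/2)=2\cosh\eta\,\det_{q}M(0)$ and $\mathcal{T}(\eta/2+i\pi/2)=-2\cosh\eta\,\coth\zeta_{-}\coth\zeta_{+}\det_{q}M(i\pi/2)$; as $\det_{q}M(\lambda)=a(\lambda+\eta/2)d(\lambda-\eta/2)$ is a c‑number these are scalar conditions on $\tau$. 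They force $\beta=\det_{q}M(0)$ and $\alpha=\coth\zeta_{-}\coth\zeta_{+}\det_{q}M(i\pi/2)$, which is precisely (\ref{ADMFKset-t}). Thus, once the functional class and the degree are secured, the remainder is interpolation at the two central points.
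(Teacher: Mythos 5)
Your overall route is the paper's route: evenness, the functional class of $\mathcal{T}(\lambda )$ in the variable $\cosh 2\lambda $, and the two central evaluations at $\lambda =\eta /2$ and $\lambda =\eta /2+i\pi /2$ obtained from (\ref{ADMFKU-identities}). The paper compresses the first two ingredients into the phrase ``the known functional form of $\mathcal{T}(\lambda )$'', and your $i\pi $-periodicity check, the reduction to a polynomial in $x=\cosh 2\lambda $, the decomposition against $x\mp \cosh \eta $ and $x^{2}-\cosh ^{2}\eta $, and the two interpolation conditions are correct and are exactly the details the paper leaves implicit.

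The gap sits precisely at the step you flagged as delicate. The assertion that the off-diagonal part $c_{-\epsilon }(\lambda )\mathcal{B}_{\epsilon }(\lambda )$ stays within degree $2(\mathsf{N}+1)$ in $e^{\lambda }$ ``once the $\mathcal{B}_{+}$-eigenvalue (\ref{ADMFKEigenValue-D+}) is inserted'' is not true: $c_{-\epsilon }(\lambda )\propto \sinh (2\lambda -\epsilon \eta )$ has degree $2$ while $\text{\textsc{b}}_{\epsilon ,\text{\textbf{h}}}(\lambda )\propto \sinh (2\lambda +\epsilon \eta )\,a_{\text{\textbf{h}}}(\lambda )a_{\text{\textbf{h}}}(-\lambda )$ has degree $2\mathsf{N}+2$, so that
\[
c_{-\epsilon }(\lambda )\,\text{\textsc{b}}_{\epsilon ,\text{\textbf{h}}}(\lambda )=\frac{\kappa _{+}\kappa _{-}\,e^{\epsilon (\tau _{+}-\tau _{-})}}{\sinh \zeta _{+}\sinh \zeta _{-}}\bigl(\cosh ^{2}2\lambda -\cosh ^{2}\eta \bigr)\Bigl(-\tfrac{1}{2}\Bigr)^{\mathsf{N}}\prod_{a=1}^{\mathsf{N}}\bigl(\cosh 2\lambda -\eta _{a}^{(h_{a})}\bigr)
\]
is of degree $\mathsf{N}+2$ in $\cosh 2\lambda $. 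Its top coefficient does not depend on \textbf{h}, i.e. it contributes a multiple of the identity (consistent with the asymptotics $\mathcal{T}(\lambda )\sim \kappa _{+}\kappa _{-}\cosh (\tau _{+}-\tau _{-})\,e^{\pm (2\mathsf{N}+4)\lambda }$ up to a fixed numerical constant), and nothing in $\mathcal{T}_{\setminus }^{(\epsilon )}$ --- which really is of degree $\mathsf{N}+1$ --- can cancel it. This coefficient vanishes only when the ``triangular'' boundary matrix is actually diagonal, i.e. $c_{-\epsilon }=0$. Hence your degree bound, and with it the ansatz (\ref{ADMFKset-t}), is established only in the diagonal sub-case; in the strictly triangular case the function class must carry an additional fixed (central) term proportional to $(\cosh 2\lambda )^{\mathsf{N}}\sinh (2\lambda -\eta )\sinh (2\lambda +\eta )$, equivalently the sum should run to $b=\mathsf{N}+1$ with $c_{\mathsf{N}+1}^{\tau }$ pinned to the value above rather than left free --- after which your interpolation argument goes through with three fixed coefficients instead of two. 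Be aware that the paper's own one-line proof does not supply this step either, so the missing piece is not something you could have copied from the text; but as written your justification of the degree count does not hold.
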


\begin{proof}
The transfer matrix $\mathcal{T}_\epsilon(\lambda )$\ is an even function of $\lambda 
$ of maximal degree $\mathsf{N}+2$ in cosh$2\lambda$ so the same is true for the $\tau_\epsilon(\lambda )\in \Sigma _{\mathcal{T}}$ .
Moreover, from the identities (\ref{ADMFKU-identities}) and (\ref{ADMFKU+identities})
after some simple computation the following identities are derived:%
\begin{align}
\mathcal{T}_\epsilon(\pm \eta /2)&=\mathsf{a}_{\epsilon }(\epsilon \eta
/2)\det_{q}M(0)= 2\cosh \eta \det_{q}M(0), \\
\mathcal{T}_\epsilon(\pm (\eta /2-i\pi /2))&=i\mathsf{a}_{\epsilon }(\epsilon \eta
/2-\epsilon i\pi /2)\coth \zeta _{+}\det_{q}M(i\pi /2)  \notag \\
&=-2\cosh \eta \coth\zeta _{-}\coth \zeta _{+}\det_{q}M(i\pi /2),
\end{align}%
both for $\epsilon =+,-$. This identities together with the known functional
form of $\mathcal{T}_\epsilon(\lambda )$ w.r.t. $\lambda $ imply the statement in the
lemma.
\end{proof}

\textbf{Remark: }It is worth writing out explicitly how the triangular
transfer matrices $\mathcal{T}_{\epsilon }(\lambda )$, here considered, are
defined within the original representations of the boundary matrices.

a) In the case $\mathcal{T}_{\epsilon }(\lambda )=\mathcal{T}_{\backslash
}^{\left( \epsilon \right) }(\lambda )$, we have that the boundary matrix $%
K_{-\epsilon }(\lambda )$ is just diagonal, which corresponds to impose the
boundary parameter $k_{-\epsilon }=0$ . Then, the transfer matrix $\mathcal{T%
}_{\epsilon }(\lambda )$ is a degree  $\mathsf{N}+1$ polynomial in cosh\,2$\lambda $ so
that the coefficient $c_{ \mathsf{N}+1}^{\tau_\epsilon}$ in $\left( \ref{ADMFKset-t}%
\right) $ is zero. Moreover, in this case the boundary terms in the
Hamiltonian read:%
\begin{eqnarray}
&&\frac{\sinh \eta }{\sinh \zeta _{\epsilon }}\left[ \sigma _{\epsilon
}^{z}\cosh \zeta _{\epsilon }+2\kappa _{\epsilon }(\sigma _{\epsilon
}^{x}\cosh \tau _{\epsilon }+i\sigma _{\epsilon }^{y}\sinh \tau _{\epsilon })%
\right]   \notag \\
&&+\frac{\sinh \eta }{\sinh \zeta _{-\epsilon }}\sigma _{-\epsilon
}^{z}\cosh \zeta _{-\epsilon }.
\end{eqnarray}

b) In the case $\mathcal{T}_{\epsilon }(\lambda )\neq \mathcal{T}%
_{\backslash }^{\left( \epsilon \right) }(\lambda )$, we have that the
boundary matrix $K_{-\epsilon }(\lambda )$ is properly lower triangular,
which corresponds to impose $k_{-\epsilon }=h_{-\epsilon }e^{\tau
_{-\epsilon }}$ with $h_{-\epsilon }$ finite while sending $\tau _{-\epsilon
}\longrightarrow -\infty $, so that it holds%
\begin{equation}
K_{-\epsilon }(\lambda )=\frac{1}{\sinh \zeta _{-\epsilon }}%
\begin{pmatrix}
\sinh (\zeta _{-\epsilon }+\lambda -\eta /2) & 0 \\ 
h_{-\epsilon }\sinh (2\lambda -\eta ) & \sinh (\zeta _{-\epsilon }-\lambda
+\eta /2)%
\end{pmatrix}%
_{0}.
\end{equation}%
Then, the transfer matrix $\mathcal{T}_{\epsilon }(\lambda )$ is a degree
 $\mathsf{N}+2$ polynomial in cosh\,2$\lambda $ and the leading coefficient is central and
can be easily computed to be:%
\begin{equation}
\lim_{\lambda \rightarrow \pm \infty }\mathcal{T}_{\epsilon }(\lambda
)e^{\mp 2(\mathsf{N}+2)\lambda }=\frac{h_{-\epsilon }\kappa _{\epsilon }\,e^{\tau
_{\epsilon }}}{2^{2\mathsf{N}+2}\,\sinh \zeta _{+}\sinh \zeta _{-}}\,,
\end{equation}%
so that the coefficient in $\left( \ref{ADMFKset-t}\right) $ reads:%
\begin{equation}
c_{\mathsf{N}+1}^{\tau_\epsilon}=\frac{h_{-\epsilon }\kappa _{\epsilon }\,e^{\tau
_{\epsilon }}}{2^{\mathsf{N}}\,\sinh \zeta _{+}\sinh \zeta _{-}},
\end{equation}%
and in this case the boundary terms in the Hamiltonian read:%
\begin{eqnarray}
&&\frac{\sinh \eta }{\sinh \zeta _{\epsilon }}\left[ \sigma _{\epsilon
}^{z}\cosh \zeta _{\epsilon }+2\kappa _{\epsilon }(\sigma _{\epsilon
}^{x}\cosh \tau _{\epsilon }+i\sigma _{\epsilon }^{y}\sinh \tau _{\epsilon })%
\right]   \notag \\
&&+\frac{\sinh \eta }{\sinh \zeta _{-\epsilon }}[\sigma _{-\epsilon
}^{z}\cosh \zeta _{-\epsilon }+h_{-\epsilon }(\sigma _{-\epsilon
}^{x}-i\sigma _{-\epsilon }^{y}{})].
\end{eqnarray}%
where we have used the notations 
\begin{equation}
\sigma _{\epsilon }^{l}=\left\{ 
\begin{array}{c}
\sigma _{1}^{l}\text{ \ for }\epsilon =- \\ 
\sigma _{+}^{l}\text{ \ for }\epsilon =+%
\end{array}%
\right. \text{ for any }l=\{x,y,z\}.
\end{equation}

\subsection{Transfer matrix spectrum in $\mathcal{B}_{-}$-SOV-representations%
}

In this section we characterize the spectrum of the transfer matrix: 
\begin{equation}
\mathcal{T}_{-}(\lambda )\equiv \mathcal{T}_{\setminus }^{(-)}(\lambda
)+c_{+}\left( \lambda \right) \mathcal{B}_{-}(\lambda ),  \label{ADMFKT-}
\end{equation}%
associated to the representations of the reflection algebra under the
following class of boundary parameters:%
\begin{equation}
b_{+}\left( \lambda \right) =0\text{ and }b_{-}\left( \lambda \right) \neq 0.
\label{ADMFKboundary-}
\end{equation}

\begin{theorem}
\label{ADMFKC:T-eigenstates-}If the condition $\left( \ref{ADMFKE-SOV}\right) $ is
satisfied, then $\mathcal{T}_{-}(\lambda )$ has simple spectrum and $\Sigma
_{\mathcal{T}_{-}}$ coincides with the solutions of the discrete system of
equations:%
\begin{equation}
\tau _{-}(\pm \zeta _{a}^{(0)})\tau _{-}(\pm \zeta _{a}^{(1)})=\text{\textsc{%
a}}_{-}(\zeta _{n}^{(1)})\text{\textsc{a}}_{-}(-\zeta _{n}^{(0)}),\text{ \ \ 
}\forall a\in \{1,...,\mathsf{N}\},  \label{ADMFKI-Functional-eq}
\end{equation}%
in the class of functions of the form $(\ref{ADMFKset-t})$, where the coefficient 
\textsc{a}$_{-}(\lambda )$ is defined by: 
\begin{equation}
\text{\textsc{a}}_{-}(\lambda )\equiv \mathsf{a}_{+}(\lambda )\mathsf{A}%
_{-}(\lambda ),
\end{equation}%
and satisfies the quantum determinant condition:%
\begin{equation}
\mathsf{a}_{+}(\lambda )\mathsf{a}_{+}(-\lambda +\eta )\det_{q}\mathcal{U}%
_{-}(\lambda -\eta /2)=\sinh (2\lambda -\eta )\text{\textsc{a}}_{-}(\lambda )%
\text{\textsc{a}}_{-}(-\lambda +\eta ).
\end{equation}

\begin{itemize}
\item[\textsf{I)}] The vector:%
\begin{equation}
|\tau _{-}\rangle =\sum_{h_{1},...,h_{\mathsf{N}}=0}^{1}\prod_{a=1}^{\mathsf{%
N}}Q_{\tau _{-}}(\zeta _{a}^{(h_{a})})\prod_{1\leq b<a\leq \mathsf{N}}(\eta
_{a}^{(h_{a})}-\eta _{b}^{(h_{b})})|-,h_{1},...,h_{\mathsf{N}}\rangle ,
\label{ADMFKeigenT-r-D}
\end{equation}%
defines, uniquely up to an overall normalization, the right $\mathcal{T}_{-}$%
-eigenstate corresponding to $\tau _{-}(\lambda )\in \Sigma _{\mathcal{T}%
_{-}}$. The coefficients in $(\ref{ADMFKeigenT-r-D})$ are characterized by:%
\begin{equation}
Q_{\tau _{-}}(\zeta _{a}^{(1)})/Q_{\tau _{-}}(\zeta _{a}^{(0)})=\tau
_{-}(\zeta _{a}^{(0)})/\text{\textsc{a}}_{-}(-\zeta _{a}^{(0)}).
\label{ADMFKt-Q-relation}
\end{equation}

\item[\textsf{II)}] The covector%
\begin{equation}
\langle \tau _{-}|=\sum_{h_{1},...,h_{\mathsf{N}}=0}^{1}\prod_{a=1}^{\mathsf{%
N}}\bar{Q}_{\tau _{-}}(\zeta _{a}^{(h_{a})})\prod_{1\leq b<a\leq \mathsf{N}%
}(\eta _{a}^{(h_{a})}-\eta _{b}^{(h_{b})})\langle -,h_{1},...,h_{\mathsf{N}%
}|,  \label{ADMFKeigenT-l-D}
\end{equation}%
defines, uniquely up to an overall normalization, the left $\mathcal{T}_{-}$%
-eigenstate corresponding to $\tau _{-}(\lambda )\in \Sigma _{\mathcal{T}%
_{-}}$. The coefficients in $(\ref{ADMFKeigenT-l-D})$ are characterized by:%
\begin{equation}
\bar{Q}_{\tau _{-}}(\zeta _{a}^{(1)})/\bar{Q}_{\tau _{-}}(\zeta
_{a}^{(0)})=\alpha _{a}^{(-)}k_{a}^{(-)}\tau _{-}(\zeta _{a}^{(0)})/\text{%
\textsc{a}}_{-}(\zeta _{a}^{(1)}),  \label{ADMFKt-Qbar-relation}
\end{equation}%
where:%
\begin{equation}
\alpha _{n}^{(-)}=\frac{\mathsf{a}_{+}(\zeta _{a}^{(1)})}{\mathsf{d}%
_{+}(-\zeta _{a}^{(0)})}=\frac{\mathsf{d}_{+}(\zeta _{a}^{(1)})}{\mathsf{a}%
_{+}(-\zeta _{a}^{(0)})}=\frac{\sinh (2\xi _{n}+2\eta )}{k_{n}^{(-)}\sinh
(2\xi _{n}-2\eta )}.
\end{equation}
\end{itemize}
\end{theorem}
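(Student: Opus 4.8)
The plan is to pass to the $\mathcal{B}_-$-SOV basis and use Sklyanin's separation mechanism to turn the operator eigenvalue equation into $\mathsf{N}$ decoupled two-term relations, one per quantum separate variable. The starting observation is that the $\mathcal{B}_-(\lambda)$-eigenvalue $\text{\textsc{b}}_{-,\mathbf{h}}(\lambda)$ of \rf{EigenValue-D} vanishes exactly at $\lambda=\zeta_a^{(h_a)}$, since the factor $a_{\mathbf{h}}(\zeta_a^{(h_a)})$ contains $\sinh(\zeta_a^{(h_a)}-\xi_a-(h_a-\tfrac12)\eta)=0$. Hence, evaluating $\mathcal{T}_-(\lambda)=\mathcal{T}_{\setminus }^{(-)}(\lambda)+c_+(\lambda)\mathcal{B}_-(\lambda)$ at $\lambda=\zeta_a^{(h_a)}$ on the SOV basis kills the $\mathcal{B}_-$ contribution and leaves only the diagonal-boundary part $\mathcal{T}_{\setminus }^{(-)}$. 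By the Corollary this part equals $\mathsf{a}_+(\lambda)\mathcal{A}_-(\lambda)+\mathsf{a}_+(-\lambda)\mathcal{A}_-(-\lambda)$, whose action on $|-,\mathbf{h}\rangle$, read off from the interpolation formula \rf{R-SOV D-}, shifts only the single index $h_a$. Writing $|\tau_-\rangle=\sum_{\mathbf{h}}\Psi(\mathbf{h})|-,\mathbf{h}\rangle$, this produces a relation linking $\Psi$ at $h_a=0$ and $h_a=1$ (all other entries frozen) with coefficients $\text{\textsc{a}}_-(\pm\zeta_a^{(h_a)})=\mathsf{a}_+(\pm\zeta_a^{(h_a)})\mathsf{A}_-(\pm\zeta_a^{(h_a)})$, which is how the function $\text{\textsc{a}}_-$ of the theorem enters.

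Second, I would extract the factorized form of the eigenvector. Because the relations decouple index by index, their solution is a product $\Psi(\mathbf{h})=\prod_a Q_{\tau_-}(\zeta_a^{(h_a)})$ times the universal factor $\prod_{b<a}(\eta_a^{(h_a)}-\eta_b^{(h_b)})$ dictated by the SOV measure of \rf{Decomp-Id}; matching the two-term relation fixes the ratio $Q_{\tau_-}(\zeta_a^{(1)})/Q_{\tau_-}(\zeta_a^{(0)})$ to be $\tau_-(\zeta_a^{(0)})/\text{\textsc{a}}_-(-\zeta_a^{(0)})$, i.e. exactly \rf{t-Q-relation}, and yields the vector \rf{eigenT-r-D}. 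Under \rf{E-SOV} all denominators $\eta_a^{(h_a)}-\eta_b^{(h_b)}$ and all relevant shift coefficients are nonzero, so for each admissible $\tau_-$ the components $\Psi(\mathbf{h})$ are determined up to a single overall constant; this is what yields simplicity of the spectrum, each eigenspace being one-dimensional.

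Third comes the quantization condition. Evaluating the eigenvalue equation at the two points $\zeta_a^{(0)}$ and $\zeta_a^{(1)}$ gives two expressions for the ratio $Q_{\tau_-}(\zeta_a^{(1)})/Q_{\tau_-}(\zeta_a^{(0)})$; their compatibility, using the evenness $\tau_-(-\lambda)=\tau_-(\lambda)$, forces $\tau_-(\zeta_a^{(0)})\tau_-(\zeta_a^{(1)})=\text{\textsc{a}}_-(\zeta_a^{(1)})\text{\textsc{a}}_-(-\zeta_a^{(0)})$, that is the discrete system \rf{I-Functional-eq}. The quantum-determinant identity stated in the theorem is precisely what guarantees that the data multiplying the up- and down-shifts are the boundary values of one and the same function $\text{\textsc{a}}_-$, hence that this compatibility can be realized. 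Necessity is then immediate; for sufficiency I would start from any $\tau_-$ of the form \rf{set-t} solving \rf{I-Functional-eq}, define $Q_{\tau_-}$ through the now-consistent ratios, and verify $\mathcal{T}_-(\lambda)|\tau_-\rangle=\tau_-(\lambda)|\tau_-\rangle$ for all $\lambda$. This last check reduces to finitely many points: componentwise both sides are even trigonometric polynomials of the form controlled by the Lemma, determined by their values at the $2\mathsf{N}$ separate points $\pm\zeta_a^{(h_a)}$ together with the special values at $\pm\eta/2$ and $\pm(\eta/2-i\pi/2)$; agreement at the separate points has been arranged, and agreement at the special points follows from the identities underlying the Lemma.

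Finally, for the left eigen-covector \rf{eigenT-l-D} I would repeat the argument in the left $\mathcal{B}_-$-SOV basis, using the left action \rf{L-SOV A-} of $\mathcal{A}_-$; the only change is that the shift coefficients carry a different normalization, producing the extra factor $\alpha_a^{(-)}k_a^{(-)}$ in the ratio \rf{t-Qbar-relation}. I expect the main obstacle to be the coefficient bookkeeping in the first and third steps: correctly deciding which of $\text{\textsc{a}}_-(\zeta_a^{(h_a)})$ and $\text{\textsc{a}}_-(-\zeta_a^{(h_a)})$ multiplies the up- versus the down-shift, and checking, via the quantum-determinant relation together with the parity identity \rf{Sym-A-D-}, that their ratio is exactly $\tau_-(\zeta_a^{(0)})/\text{\textsc{a}}_-(-\zeta_a^{(0)})$. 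Once these coefficients are pinned down, the factorization, the discrete characterization, and the simplicity of the spectrum all follow cleanly.
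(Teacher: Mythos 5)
Your proposal is correct and follows essentially the same route as the paper's proof: project onto the SOV basis to reduce the eigenvalue problem to the discrete two-term Baxter-like relations at $\pm\zeta_a^{(h_a)}$, use $\text{\textsc{a}}_{-}(\zeta_a^{(0)})=\text{\textsc{a}}_{-}(-\zeta_a^{(1)})=0$ to turn each pair into a homogeneous $2\times 2$ system whose vanishing determinant is the quantization condition \rf{I-Functional-eq} and whose rank one fixes the ratio \rf{t-Q-relation} (hence simplicity), then recover the eigenvalue identity for all $\lambda$ by interpolation within the class \rf{set-t}. The only bookkeeping slip is that you have the two SOV actions crossed: the right eigenvector has components $\Psi_{\tau_-}(\mathbf{h})=\langle -,\mathbf{h}|\tau_-\rangle$, so one uses the left action \rf{L-SOV A-} of $\mathcal{A}_{-}$ on the covectors (which is where your coefficients $\text{\textsc{a}}_{-}=\mathsf{a}_{+}\mathsf{A}_{-}$ actually come from), whereas the left eigenvector requires the right action \rf{R-SOV D-} of $\mathcal{D}_{-}$ on the vectors, producing $\text{\textsc{d}}_{-}=\mathsf{d}_{+}\mathsf{D}_{-}$ and thereby the factor $\alpha_a^{(-)}k_a^{(-)}$ in \rf{t-Qbar-relation} --- not the other way around.
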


\begin{proof}
In the $\mathcal{B}_{-}$-SOV representations the spectral problem for $%
\mathcal{T}_{-}(\lambda )$ is reduced to a discrete system of $2^{\mathsf{N}%
} $ Baxter-like equations 
\begin{equation}
\tau _{-}(\zeta _{n}^{(h_{n})})\Psi _{\tau _{-}}(\text{\textbf{h}})\,=\text{%
\textsc{a}}_{-}(\zeta _{n}^{(h_{n})})\Psi _{\tau _{-}}(\mathsf{T}_{n}^{-}(%
\text{\textbf{h}}))+\text{\textsc{a}}_{-}(-\zeta _{n}^{(h_{n})})\Psi _{\tau
_{-}}(\mathsf{T}_{n}^{+}(\text{\textbf{h}})),  \label{ADMFKSOVBax1}
\end{equation}%
for any$\,n\in \{1,...,\mathsf{N}\}$ and \textbf{h}$\in \{0,1\}^{\mathsf{N}}$%
, in the coefficients (\textit{wave-functions}):%
\begin{equation}
\Psi _{\tau _{-}}(\text{\textbf{h}})\equiv \langle -,h_{1},...,h_{\mathsf{N}%
}|\tau _{-}\rangle ,
\end{equation}%
of $|\tau _{-}\rangle $ the $\mathcal{T}_{-}$-eigenstate associated to $\tau
_{-}(\lambda )\in \Sigma _{\mathcal{T}_{-}}$; here, we have used the
notations:%
\begin{equation}
\mathsf{T}_{n}^{\pm }(\text{\textbf{h}})\equiv (h_{1},\dots ,h_{n}\pm
1,\dots ,h_{\mathsf{N}}).
\end{equation}%
Being:%
\begin{equation}
\text{\textsc{a}}_{-}(\zeta _{n}^{(0)})=\text{\textsc{a}}_{-}(-\zeta
_{n}^{(1)})=0,
\end{equation}%
the previous system of equations $\left( \ref{ADMFKSOVBax1}\right) $ is
equivalent to the following system of homogeneous equations:%
\begin{equation}
\left( 
\begin{array}{cc}
\tau _{-}(\zeta _{n}^{(0)}) & -\text{\textsc{a}}_{-}(-\zeta _{n}^{(0)}) \\ 
-\text{\textsc{a}}_{-}(\zeta _{n}^{(1)}) & \tau _{-}(\zeta _{n}^{(1)})%
\end{array}%
\right) \left( 
\begin{array}{c}
\Psi _{\tau -}(h_{1},...,h_{n}=0,...,h_{1}) \\ 
\Psi _{\tau -}(h_{1},...,h_{n}=1,...,h_{1})%
\end{array}%
\right) =\left( 
\begin{array}{c}
0 \\ 
0%
\end{array}%
\right) ,  \label{ADMFKhomo-system}
\end{equation}%
for any$\,n\in \{1,...,\mathsf{N}\}$ with $h_{m\neq n}\in \{0,1\}$. The
condition $\tau _{-}(\lambda )\in \Sigma _{\mathcal{T}_{-}}$ implies that
the determinants of the $2\times 2$ matrices in $\left( \ref{ADMFKhomo-system}%
\right) $ must be zero for any$\,n\in \{1,...,\mathsf{N}\}$, which is
equivalent to (\ref{ADMFKI-Functional-eq}). Moreover, the rank of the matrices in 
$\left( \ref{ADMFKhomo-system}\right) $ is 1 being%
\begin{equation}
\text{\textsc{a}}_{-}(-\zeta _{n}^{(0)})\neq 0\text{\ \ and \ \textsc{a}}%
_{-}(\zeta _{n}^{(1)})\neq 0,  \label{ADMFKRank1}
\end{equation}%
and then (up to an overall normalization) the solution is unique:%
\begin{equation}
\frac{\Psi _{\tau -}(h_{1},...,h_{n}=1,...,h_{1})}{\Psi _{\tau
-}(h_{1},...,h_{n}=0,...,h_{1})}=\frac{\tau _{-}(\zeta _{a}^{(0)})}{\text{%
\textsc{a}}_{-}(-\zeta _{a}^{(0)})},
\end{equation}%
for any$\,n\in \{1,...,\mathsf{N}\}$ with $h_{m\neq n}\in \{0,1\}$. So fixed 
$\tau _{-}(\lambda )\in \Sigma _{\mathcal{T}_{-}}$ there exists (up to
normalization) one and only one corresponding $\mathcal{T}_{-}$-eigenstate $%
|\tau _{-}\rangle $ with coefficients of the factorized form given in $%
\left( \ref{ADMFKeigenT-r-D}\right) $-$\left( \ref{ADMFKt-Q-relation}\right) $; i.e.
the $\mathcal{T}_{-}$-spectrum is simple.

Vice versa, if $\tau _{-}(\lambda )$ is in the set of functions (\ref{ADMFKset-t}%
) and satisfies (\ref{ADMFKI-Functional-eq}), then the state $|\tau _{-}\rangle $
defined by $\left( \ref{ADMFKeigenT-r-D}\right) $-$\left( \ref{ADMFKt-Q-relation}%
\right) $ satisfies:%
\begin{equation}
\left\langle -,h_{1},...,h_{\mathsf{N}}\right\vert \mathcal{T}_{-}(\zeta
_{n}^{(h_{n})})|\tau _{-}\rangle =\tau _{-}(\zeta _{n}^{(h_{n})})\langle
-,h_{1},...,h_{\mathsf{N}}|\tau _{-}\rangle \text{ \ }\forall n\in \{1,...,%
\mathsf{N}\}
\end{equation}%
for any $\mathcal{B}_{-}$-eigenstate $\left\langle -,h_{1},...,h_{\mathsf{N}%
}\right\vert $ and this implies:%
\begin{equation}
\left\langle -,h_{1},...,h_{\mathsf{N}}\right\vert \mathcal{T}_{-}(\lambda
)|\tau _{-}\rangle =\tau _{-}(\lambda )\langle -,h_{1},...,h_{\mathsf{N}%
}|\tau _{-}\rangle \text{ \ \ }\forall \lambda \in \mathbb{C},
\end{equation}%
i.e. $\tau _{-}(\lambda )\in \Sigma _{\mathcal{T}_{-}}$\ and $|\tau
_{-}\rangle $ is the corresponding $\mathcal{T}_{-}$-eigenstate.

Concerning the left $\mathcal{T}_{-}$-eigenstates the proof is done
as above we have just to remark that in this case the matrix elements:%
\begin{equation}
\langle \tau _{-}|\mathcal{T}_{-}(\zeta _{n}^{(h_{n})})|-,h_{1},...,h_{%
\mathsf{N}}\rangle ,
\end{equation}%
can be computed by using the right $\mathcal{B}_{-}$-representation:%
\begin{equation}
\tau _{-}(\zeta _{n}^{(h_{n})})\bar{\Psi}_{\tau _{-}}(\text{\textbf{h}})\,=%
\text{\textsc{d}}_{-}(\zeta _{n}^{(h_{n})})\bar{\Psi}_{\tau _{-}}(\mathsf{T}%
_{n}^{-}(\text{\textbf{h}}))+\text{\textsc{d}}_{-}(-\zeta _{n}^{(h_{n})})%
\bar{\Psi}_{\tau _{-}}(\mathsf{T}_{n}^{+}(\text{\textbf{h}})),\text{ \ \ }%
\forall n\in \{1,...,\mathsf{N}\}
\end{equation}%
where:%
\begin{equation}
\bar{\Psi}_{\tau _{-}}(\text{\textbf{h}})\equiv \langle \tau
_{-}|-,h_{1},...,h_{\mathsf{N}}\rangle ,
\end{equation}%
and the coefficient \textsc{d}$_{-}(\lambda )$\ reads:%
\begin{equation}
\text{\textsc{d}}_{-}(\lambda )\equiv \mathsf{d}_{+}(\lambda )\mathsf{D}%
_{-}(\lambda ).
\end{equation}
\end{proof}

It is worth pointing out that for the analysis of the continuum limit it is
interesting to get a reformulation of this characterization by functional
equations. The construction of a Baxter Q-operator can play an important
role to achieve this aim. Let us recall that a Q-operator is of a one-parameter
operator family which satisfies properties of the type:%
\begin{equation}
\left[ \mathcal{T}_{-}(\lambda ),Q(\lambda )\right] =0,\text{ }\left[
Q(\lambda ),Q(\mu )\right] =0,\text{ \ }\mathcal{T}_{-}(\lambda )Q(\lambda
)=\alpha \left( \lambda \right) Q(\lambda /q)+\beta \left( \lambda \right)
Q(\lambda q)\,,  \label{ADMFKQ-op-ch}
\end{equation}%
where $\alpha \left( \lambda \right) $ and $\beta \left( \lambda \right) $
are some characteristic functions of the constructed Q-operator. Indeed, if
this functional equation coincides with the discrete system $\left( \ref{ADMFKSOVBax1}\right) $ in the spectrum of the $\mathcal{B}_{\pm }$-zeros we can
use the Q-operator to reformulate by functional equations the SOV spectrum
characterization. A Baxter Q-operator has been constructed in\footnote{%
See also \cite{ADMFKBBOY95} and \cite{ADMFKYB95} for the construction of the Q-operator in the spin 1/2
and higher spin XXZ quantum chain with twisted boundary conditions.} \cite{ADMFKN05,ADMFKMNS06} and \cite{ADMFKYNZ06} for some classes of representations of the
reflection algebra; then, it will be important to make a connection with the
present SOV analysis. Let us observe that for {\it root of unit} $\eta =2i\pi
p^{\prime }/p$ ($p$ and $p^{\prime }\in Z^{\geq 0}$) the existence of
non-trivial solutions of the Baxter equation leads to the functional
equation:%
\begin{equation}
\det_{p}D(\Lambda )=0,\text{ \ \ }\Lambda \in \mathbb{C}
\label{ADMFKfun-eq-T-eigen}
\end{equation}%
where $D(\lambda )$ is the following $p\times p$ matrix:
\begin{equation}
D(\lambda )\equiv 
\begin{pmatrix}
\tau _{-}(\lambda ) & -\beta (\lambda ) & 0 & \cdots & 0 & -\alpha (\lambda )
\\ 
-\alpha (q\lambda ) & \tau _{-}(q\lambda ) & -\beta (q\lambda ) & 0 & \cdots
& 0 \\ 
0 & {\quad }\ddots &  &  &  & \vdots \\ 
\vdots &  & \cdots &  &  & \vdots \\ 
\vdots &  &  & \cdots &  & \vdots \\ 
\vdots &  &  &  & \ddots {\qquad } & 0 \\ 
0 & \ldots & 0 & -\alpha (q^{p-2}\lambda ) & \tau _{-}(q^{p-2}\lambda ) & 
-\beta (q^{p-2}\lambda ) \\ 
-\beta (q^{p-1}\lambda ) & 0 & \ldots & 0 & -\alpha (q^{p-1}\lambda ) & \tau
_{-}(q^{p-1}\lambda )
\end{pmatrix},  \label{ADMFKD-matrix}
\end{equation}
written only in terms of the $\mathcal{T}_{-}$-eigenvalue $\tau _{-}(\lambda )$ and its determinant is clearly a function of $\Lambda =\lambda ^{p}$. Note that the method based on the combined use of the fusion of transfer
matrices \cite{ADMFKKRS,ADMFKKR}\ and the truncation identity for root of unit $\eta$, \cite{ADMFKBR89} and \cite{ADMFKN02, ADMFKN03}, should lead to the same equation $\left( %
\ref{ADMFKfun-eq-T-eigen}\right) $.

\subsection{Transfer matrix spectrum in $\mathcal{B}_{+}$-SOV-representations%
}

In this section we characterize the spectrum of the transfer matrix: 
\begin{equation}
\mathcal{T}_{+}(\lambda )\equiv \mathcal{T}_{\setminus }^{(+)}(\lambda
)+c_{-}\left( \lambda \right) \mathcal{B}_{+}(\lambda ),  \label{ADMFKT+}
\end{equation}%
associated to the representations of the reflection algebra under the
following class of boundary parameters:%
\begin{equation}
b_{-}\left( \lambda \right) =0\text{ and }b_{+}\left( \lambda \right) \neq 0.
\label{ADMFKboundary+}
\end{equation}%
Let us now write the left and right eigenstates of the transfer matrices (%
\ref{ADMFKT+}) in the $\mathcal{B}_{+}$-SOV-representation:

\begin{theorem}
\label{ADMFKC:T-eigenstates+}If $\left( \ref{ADMFKE-SOV}\right) $ is satisfied, then $%
\mathcal{T}_{+}(\lambda )$ has simple spectrum and $\Sigma _{\mathcal{T}%
_{+}} $ coincides with the solutions of the discrete system of equations: 
\begin{equation}
\tau _{+}(\pm \zeta _{a}^{(0)})\tau _{+}(\pm \zeta _{a}^{(1)})=\text{\textsc{%
d}}_{+}(-\zeta _{n}^{(1)})\text{\textsc{d}}_{+}(\zeta _{n}^{(0)}),\text{ \ \ 
}\forall a\in \{1,...,\mathsf{N}\},  \label{ADMFKI-Functional-eq+}
\end{equation}%
in the class of functions of the form (\ref{ADMFKset-t}), where the coefficient 
\textsc{d}$_{+}(\lambda )$ is defined by:%
\begin{equation}
\text{\textsc{d}}_{+}(\lambda )\equiv \mathsf{d}_{-}(\lambda )\mathsf{D}%
_{+}(\lambda ),
\end{equation}%
and satisfies the quantum determinant condition:%
\begin{equation}
\mathsf{d}_{-}(\lambda -\eta /2)\mathsf{d}_{-}(-\lambda -\eta /2)\det_{q}%
\mathcal{U}_{+}(\lambda )=\sinh (2\lambda +2\eta )\text{\textsc{d}}%
_{+}(\lambda -\eta /2)\text{\textsc{d}}_{+}(-\lambda -\eta /2).
\end{equation}

\begin{itemize}
\item[\textsf{I)}] The vector:%
\begin{equation}
|\tau _{+}\rangle =\sum_{h_{1},...,h_{\mathsf{N}}=0}^{1}\prod_{a=1}^{\mathsf{%
N}}Q_{\tau _{+}}(\zeta _{a}^{(h_{a})})\prod_{1\leq b<a\leq \mathsf{N}}(\eta
_{a}^{(h_{a})}-\eta _{b}^{(h_{b})})|+,h_{1},...,h_{\mathsf{N}}\rangle ,
\label{ADMFKeigenT-r-D+}
\end{equation}%
defines, uniquely up to an overall normalization, the right $\mathcal{T}_{+}$%
-eigenstate corresponding to $\tau _{+}(\lambda )\in \Sigma _{\mathcal{T}%
_{+}}$. The coefficients in $(\ref{ADMFKeigenT-r-D+})$ are characterized by:%
\begin{equation}
Q_{\tau _{+}}(\zeta _{a}^{(1)})/Q_{\tau _{+}}(\zeta _{a}^{(0)})=\tau
_{+}(\zeta _{a}^{(0)})/\text{\textsc{d}}_{+}(\zeta _{a}^{(0)}).
\label{ADMFKt-Q-relation+}
\end{equation}

\item[\textsf{II)}] The covector:%
\begin{equation}
\langle \tau _{+}|=\sum_{h_{1},...,h_{\mathsf{N}}=0}^{1}\prod_{a=1}^{\mathsf{%
N}}\bar{Q}_{\tau _{+}}(\zeta _{a}^{(h_{a})})\prod_{1\leq b<a\leq \mathsf{N}%
}(\eta _{a}^{(h_{a})}-\eta _{b}^{(h_{b})})\langle +,h_{1},...,h_{\mathsf{N}%
}|,  \label{ADMFKeigenT-l-D+}
\end{equation}%
defines, uniquely up to an overall normalization, the left $\mathcal{T}_{+}$%
-eigenstate corresponding to $\tau _{+}(\lambda )\in \Sigma _{\mathcal{T}%
_{+}}$. The coefficients in $(\ref{ADMFKeigenT-l-D+})$ are characterized by:%
\begin{equation}
\bar{Q}_{\tau _{+}}(\zeta _{a}^{(1)})/\bar{Q}_{\tau _{+}}(\zeta
_{a}^{(0)})=\tau _{+}(\zeta _{a}^{(0)})/\left( \alpha _{a}^{(+)}k_{a}^{(+)}%
\text{\textsc{d}}_{+}(-\zeta _{a}^{(1)})\right) ,  \label{ADMFKt-Qbar-relation+}
\end{equation}%
where:%
\begin{equation}
\alpha _{a}^{(+)}=\frac{\mathsf{a}_{-}(\zeta _{a}^{(0)})}{\mathsf{d}%
_{-}(-\zeta _{a}^{(1)})}=\frac{\mathsf{d}_{-}(\zeta _{a}^{(0)})}{\mathsf{a}%
_{-}(-\zeta _{a}^{(1)})}=\frac{\sinh (2\xi _{n}-2\eta )}{k_{n}^{(+)}\sinh
(2\xi _{n}+2\eta )}.
\end{equation}
\end{itemize}
\end{theorem}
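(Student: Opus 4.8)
The plan is to follow, step by step, the proof of Theorem~\ref{ADMFKC:T-eigenstates-}, replacing the $\mathcal{B}_{-}$-SOV data by the $\mathcal{B}_{+}$-SOV data of the previous theorem. The first observation is that at each SOV point $\zeta_{n}^{(h_{n})}=\xi_{n}+(h_{n}-1/2)\eta$ the product $a_{\text{\textbf{h}}}(\zeta_{n}^{(h_{n})})$ vanishes, so by $(\ref{ADMFKEigenValue-D+})$ one has $\text{\textsc{b}}_{+,\text{\textbf{h}}}(\zeta_{n}^{(h_{n})})=0$; hence on any $\mathcal{B}_{+}$-eigenstate the term $c_{-}(\lambda)\mathcal{B}_{+}(\lambda)$ of $\mathcal{T}_{+}(\lambda)$ drops out at $\lambda=\zeta_{n}^{(h_{n})}$, and $\mathcal{T}_{+}(\zeta_{n}^{(h_{n})})$ reduces there to $\mathcal{T}_{\setminus}^{(+)}(\zeta_{n}^{(h_{n})})$. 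Using the even form $(\ref{ADMFKT-diag-pm-D})$ together with the left $\mathcal{D}_{+}$-action $(\ref{ADMFKL-SOV D+})$ I would then reduce the right spectral problem, written for the wave-functions $\Psi_{\tau_{+}}(\text{\textbf{h}})\equiv\langle +,\text{\textbf{h}}|\tau_{+}\rangle$, to a discrete Baxter-like system with coefficient $\text{\textsc{d}}_{+}(\lambda)=\mathsf{d}_{-}(\lambda)\mathsf{D}_{+}(\lambda)$ and shifts $\mathsf{T}_{n}^{\pm}$ inherited from the $\mathsf{T}_{a}^{\varphi_{a}}$ of the SOV action.

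The key computation is that $\text{\textsc{d}}_{+}(\zeta_{n}^{(1)})=0$ and $\text{\textsc{d}}_{+}(-\zeta_{n}^{(0)})=0$, both following from $\mathsf{D}_{+}(\lambda)=g_{+}(\lambda)a(-\lambda)d(\lambda)$ carrying the required zeros of $d$ and $a$, while $\text{\textsc{d}}_{+}(\zeta_{n}^{(0)})\neq 0$ and $\text{\textsc{d}}_{+}(-\zeta_{n}^{(1)})\neq 0$ by the SOV condition $(\ref{ADMFKE-SOV})$. These vanishings collapse the Baxter-like system, for each fixed $n$ and each choice of the remaining $h_{m\neq n}$, into a $2\times 2$ homogeneous system in $(\Psi_{\tau_{+}}(h_{n}=0),\Psi_{\tau_{+}}(h_{n}=1))$. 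Requiring a nontrivial solution forces the vanishing of its determinant, which is precisely $(\ref{ADMFKI-Functional-eq+})$; the two non-vanishing coefficients guarantee that the matrix has rank one, so the solution is unique up to normalization and factorizes, yielding $(\ref{ADMFKeigenT-r-D+})$ with the ratio $(\ref{ADMFKt-Q-relation+})$. This gives both the characterization of $\Sigma_{\mathcal{T}_{+}}$ and the simplicity of the spectrum. The converse, that every even $\tau_{+}$ of the form $(\ref{ADMFKset-t})$ satisfying $(\ref{ADMFKI-Functional-eq+})$ is an eigenvalue with eigenstate $(\ref{ADMFKeigenT-r-D+})$, follows as in the $-$ case by verifying the eigenvalue equation at the $2\mathsf{N}$ points $\pm\zeta_{n}^{(h_{n})}$ and invoking evenness and the known functional form of $\mathcal{T}_{+}(\lambda)$ to extend it to all $\lambda\in\mathbb{C}$.

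For the left eigenstates $\langle\tau_{+}|$ of part \textsf{II)} I would repeat the argument in the right $\mathcal{B}_{+}$-representation, writing the diagonal part through the even form $(\ref{ADMFKT-diag-pm-A})$ and using the right $\mathcal{A}_{+}$-action $(\ref{ADMFKR-SOV A+})$. This produces the analogous system for $\bar{\Psi}_{\tau_{+}}(\text{\textbf{h}})\equiv\langle\tau_{+}|+,\text{\textbf{h}}\rangle$ but with coefficient $\mathsf{a}_{-}(\lambda)\mathsf{A}_{+}(\lambda)$ in place of $\text{\textsc{d}}_{+}(\lambda)$. The only genuinely new step is to recast this in the $\mathsf{D}_{+}$-normalisation of the statement: using $\mathsf{A}_{+}(\zeta_{a}^{(h_{a})})=(k_{a}^{(+)})^{\varphi_{a}}\mathsf{D}_{+}(\zeta_{a}^{(h_{a})}-2\varphi_{a}\xi_{a})$ together with the identity $\zeta_{n}^{(h_{n})}-2\xi_{n}=-\zeta_{n}^{(1-h_{n})}$, one converts $\mathsf{A}_{+}(\zeta_{n}^{(h_{n})})$ into $k_{n}^{(+)}\mathsf{D}_{+}(-\zeta_{n}^{(1-h_{n})})$, and the quantum determinant condition stated in the theorem then supplies the ratio of $\mathsf{a}_{-}$ to $\mathsf{d}_{-}$ that produces exactly the factor $\alpha_{a}^{(+)}k_{a}^{(+)}$ of $(\ref{ADMFKt-Qbar-relation+})$. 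The associated $2\times 2$ determinant condition is again $(\ref{ADMFKI-Functional-eq+})$, consistently with the right case.

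I expect the main obstacle to be bookkeeping rather than conceptual. One must keep track of the shift directions $\mathsf{T}_{a}^{\varphi_{a}}$, which are opposite to those of the $\mathcal{B}_{-}$ case, and of precisely which of the four quantities $\text{\textsc{d}}_{+}(\pm\zeta_{n}^{(0)}),\,\text{\textsc{d}}_{+}(\pm\zeta_{n}^{(1)})$ vanish, so that the $2\times 2$ system is assembled correctly and its determinant reproduces $(\ref{ADMFKI-Functional-eq+})$ rather than a mirrored version. In the left-eigenstate case the delicate point is the passage between the $\mathsf{A}_{+}$- and $\mathsf{D}_{+}$-normalisations, where the factors $k_{a}^{(+)}$ and $\alpha_{a}^{(+)}$ must be tracked carefully in order to land on $(\ref{ADMFKt-Qbar-relation+})$; everything else is a direct transcription of the proof of Theorem~\ref{ADMFKC:T-eigenstates-}.
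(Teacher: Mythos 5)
Your proposal is correct and follows essentially the same route as the paper's own proof: reduction at the SOV points to the diagonal part, the discrete Baxter-like system with coefficient $\textsc{d}_{+}(\lambda)=\mathsf{d}_{-}(\lambda)\mathsf{D}_{+}(\lambda)$, the $2\times 2$ homogeneous systems whose vanishing determinants give (\ref{ADMFKI-Functional-eq+}) and whose rank-one property gives uniqueness and the ratio (\ref{ADMFKt-Q-relation+}), and the left-eigenstate analysis via the right $\mathcal{A}_{+}$-action with coefficient $\mathsf{a}_{-}(\lambda)\mathsf{A}_{+}(\lambda)$. You correctly identify the vanishing pattern $\textsc{d}_{+}(-\zeta_{n}^{(0)})=\textsc{d}_{+}(\zeta_{n}^{(1)})=0$ and the reversed shift directions, and your explicit handling of the passage from the $\mathsf{A}_{+}$- to the $\mathsf{D}_{+}$-normalisation merely fills in a step the paper leaves implicit.
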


\begin{proof}
Taken a $\mathcal{T}_{+}$-eigenstate $|\tau _{+}\rangle $ corresponding to
the eigenvalue $\tau _{+}(\lambda )\in \Sigma _{\mathcal{T}_{+}}$, it has
coefficients:%
\begin{equation}
\Psi _{\tau _{+}}(\text{\textbf{h}})\equiv \langle +,h_{1},...,h_{\mathsf{N}%
}|\tau _{+}\rangle ,
\end{equation}%
in the SOV-basis, which satisfies the following discrete system of
Baxter-like equations:%
\begin{equation}
\tau _{+}(\zeta _{n}^{(h_{n})})\Psi _{\tau _{+}}(\text{\textbf{h}})\,=\text{%
\textsc{d}}_{+}(\zeta _{n}^{(h_{n})})\Psi _{\tau _{+}}(\mathsf{T}_{n}^{+}(%
\text{\textbf{h}}))+\text{\textsc{d}}_{+}(-\zeta _{n}^{(h_{n})})\Psi _{\tau
_{+}}(\mathsf{T}_{n}^{-}(\text{\textbf{h}})),
\end{equation}%
for any$\,n\in \{1,...,\mathsf{N}\}$ and \textbf{h}$\in \{0,1\}^{\mathsf{N}}$%
. We can rewrite this as the following system of homogeneous equations:%
\begin{equation}
\left( 
\begin{array}{cc}
\tau _{+}(\zeta _{n}^{(0)}) & -\text{\textsc{d}}_{+}(\zeta _{n}^{(0)}) \\ 
-\text{\textsc{d}}_{+}(-\zeta _{n}^{(1)}) & \tau _{+}(\zeta _{n}^{(1)})%
\end{array}%
\right) \left( 
\begin{array}{c}
\Psi _{\tau _{+}}(h_{1},...,h_{n}=0,...,h_{1}) \\ 
\Psi _{\tau _{+}}(h_{1},...,h_{n}=1,...,h_{1})%
\end{array}%
\right) =\left( 
\begin{array}{c}
0 \\ 
0%
\end{array}%
\right) ,
\end{equation}%
being:%
\begin{equation}
\text{\textsc{d}}_{+}(-\zeta _{n}^{(0)})=\text{\textsc{d}}_{+}(\zeta
_{n}^{(1)})=0,\text{ \ \textsc{d}}_{+}(\zeta _{n}^{(0)})\neq 0\text{\ \ and
\ \textsc{d}}_{+}(-\zeta _{n}^{(1)})\neq 0.
\end{equation}%
From which the condition $\tau _{+}(\lambda )\in \Sigma _{\mathcal{T}_{+}}$
directly implies (\ref{ADMFKI-Functional-eq}) and moreover it has to hold:%
\begin{equation}
\frac{\Psi _{\tau _{+}}(h_{1},...,h_{n}=1,...,h_{1})}{\Psi _{\tau
_{+}}(h_{1},...,h_{n}=0,...,h_{1})}=\frac{\tau _{+}(\zeta _{a}^{(0)})}{\text{%
\textsc{d}}_{+}(\zeta _{a}^{(0)})},
\end{equation}%
for any$\,n\in \{1,...,\mathsf{N}\}$ with $h_{m\neq n}\in \{0,1\}$. This
fixes the factorized form given in $\left( \ref{ADMFKeigenT-r-D+}\right) $-$%
\left( \ref{ADMFKt-Q-relation+}\right) $ for the $\mathcal{T}_{+}$-eigenstate $%
|\tau _{+}\rangle $ and implies the simplicity of the $\mathcal{T}_{+}$%
-spectrum. Taken a $\tau _{+}(\lambda )$ solution of (\ref{ADMFKI-Functional-eq})
in the class of function (\ref{ADMFKset-t}) and constructed $|\tau _{+}\rangle $
by $\left( \ref{ADMFKeigenT-r-D+}\right) $-$\left( \ref{ADMFKt-Q-relation+}\right) $,
then the proof that $\tau _{+}(\lambda )\in \Sigma _{\mathcal{T}_{+}}$\ and $%
|\tau _{+}\rangle $ is the corresponding $\mathcal{T}_{+}$-eigenstate can be
given following the same steps presented in Theorem \ref{ADMFKC:T-eigenstates-}.
Concerning the left $\mathcal{T}_{+}$-eigenstates the construction is done
as above we have just to remark that in this case the matrix elements:%
\begin{equation}
\langle \tau _{+}|\mathcal{T}_{+}(\zeta _{n}^{(h_{n})})|+,h_{1},...,h_{%
\mathsf{N}}\rangle ,
\end{equation}%
can be computed by using the right $\mathcal{B}_{+}$-representation:%
\begin{equation}
\tau _{+}(\zeta _{n}^{(h_{n})})\bar{\Psi}_{\tau _{+}}(\text{\textbf{h}})\,=%
\text{\textsc{a}}_{+}(\zeta _{n}^{(h_{n})})\bar{\Psi}_{\tau _{+}}(\mathsf{T}%
_{n}^{+}(\text{\textbf{h}}))+\text{\textsc{a}}_{+}(-\zeta _{n}^{(h_{n})})%
\bar{\Psi}_{\tau _{+}}(\mathsf{T}_{n}^{-}(\text{\textbf{h}})),\text{ \ \ }%
\forall n\in \{1,...,\mathsf{N}\}
\end{equation}%
where:%
\begin{equation}
\bar{\Psi}_{\tau _{+}}(\text{\textbf{h}})\equiv \langle \tau
_{+}|-,h_{1},...,h_{\mathsf{N}}\rangle ,
\end{equation}%
and the coefficient \textsc{a}$_{+}(\lambda )$\ reads:%
\begin{equation}
\text{\textsc{a}}_{+}(\lambda )\equiv \mathsf{a}_{-}(\lambda )\mathsf{A}%
_{+}(\lambda ).
\end{equation}
\end{proof}

\section{Scalar Products}

The presentation will be done simultaneously for $\mathcal{B}_{\epsilon }$%
-SOV-representations with $\epsilon =+,-$.

\begin{proposition}
Let $\langle \alpha _{\epsilon }|$ an
arbitrary covector and let $|\beta _{\epsilon }\rangle $ be an arbitrary vector of separate forms:
\begin{align}
\langle \alpha _{\epsilon }|& =\sum_{h_{1},...,h_{\mathsf{N}%
}=0}^{1}\prod_{a=1}^{\mathsf{N}}\alpha _{\epsilon ,a}(\zeta
_{a}^{(h_{a})})\prod_{1\leq b<a\leq \mathsf{N}}(\eta _{a}^{(h_{a})}-\eta
_{b}^{(h_{b})})\langle \epsilon ,h_{1},...,h_{\mathsf{N}}|,
\label{ADMFKFact-left-SOV} \\
|\beta _{\epsilon }\rangle & =\sum_{h_{1},...,h_{\mathsf{N}%
}=0}^{1}\prod_{a=1}^{\mathsf{N}}\beta _{\epsilon ,a}(\zeta
_{a}^{(h_{a})})\prod_{1\leq b<a\leq \mathsf{N}}(\eta _{a}^{(h_{a})}-\eta
_{b}^{(h_{b})})|\epsilon ,h_{1},...,h_{\mathsf{N}}\rangle ,
\label{ADMFKFact-right-SOV}
\end{align}%
in the $\mathcal{B}_{\epsilon }$-eigenbasis, then the action of $\langle
\alpha _{\epsilon }|$ on $|\beta _{\epsilon }\rangle $ reads:%
\begin{equation}
\langle \alpha _{\epsilon }|\beta _{\epsilon }\rangle =\det_{\mathsf{N}}||%
\mathcal{M}_{a,b}^{\left( \alpha _{\epsilon },\beta _{\epsilon }\right) }||%
\text{ \ \ with \ }\mathcal{M}_{a,b}^{\left( \alpha _{\epsilon },\beta
_{\epsilon }\right) }\equiv \sum_{h=0}^{1}\alpha _{\epsilon ,a}(\zeta
_{a}^{(h)})\beta _{\epsilon ,a}(\zeta _{a}^{(h)})(\eta _{a}^{(h)})^{(b-1)}.
\label{ADMFKScalar-p1}
\end{equation}%
Moreover, if $\tau _{\epsilon }(\lambda )\neq \tau _{\epsilon }^{\prime }(\lambda )\in
\Sigma _{\mathcal{T}_{\epsilon }}$, the action of the $\mathcal{T}%
_{\epsilon }$-eigencovector $\langle \tau _{\epsilon }|$\ on the $\mathcal{T}%
_{\epsilon }$-eigenvector $|\tau _{\epsilon }^{\prime }\rangle $\ is zero,
and in particular\ it holds:%
\begin{equation}
\sum_{b=1}^{\mathsf{N}}\mathcal{M}_{a,b}^{\left( \tau _{\epsilon },\tau
_{\epsilon }^{\prime }\right) }c_{b}^{\left( \tau _{\epsilon },\tau
_{\epsilon }^{\prime }\right) }=0\text{ \ \ \ \ }\forall a\in \{1,...,%
\mathsf{N}\},  \label{ADMFKzero-eigenvector}
\end{equation}%
where the $c_{b}^{\left( \tau _{\epsilon },\tau _{\epsilon }^{\prime
}\right) }$ are defined by:%
\begin{equation}
\tau _{\epsilon }(\lambda )-\tau _{\epsilon }^{\prime }(\lambda )\equiv
\sinh (2\lambda -\eta )\sinh (2\lambda +\eta )\sum_{b=1}^{\mathsf{N}%
}c_{b}^{\left( \tau _{\epsilon },\tau _{\epsilon }^{\prime }\right) }(\cosh
2\lambda )^{b-1}.
\end{equation}
\end{proposition}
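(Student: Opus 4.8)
The plan is to establish the determinant formula (\ref{ADMFKScalar-p1}) first, and then read off the orthogonality and the linear system (\ref{ADMFKzero-eigenvector}) from it. The starting point for (\ref{ADMFKScalar-p1}) is the biorthogonality of the left and right $\mathcal{B}_{\epsilon}$-eigenbases recorded in the preceding proposition, i.e. $\langle \epsilon,\text{\textbf{h}}|\epsilon,\text{\textbf{h}}'\rangle=\delta_{\text{\textbf{h}},\text{\textbf{h}}'}\prod_{1\leq b<a\leq \mathsf{N}}(\eta_a^{(h_a)}-\eta_b^{(h_b)})^{-1}$ from (\ref{ADMFKM_jj}). Substituting the separate forms (\ref{ADMFKFact-left-SOV})--(\ref{ADMFKFact-right-SOV}) collapses the double sum over $\text{\textbf{h}},\text{\textbf{h}}'$ to a single sum over $\text{\textbf{h}}$, and exactly one power of $\prod_{b<a}(\eta_a^{(h_a)}-\eta_b^{(h_b)})$ cancels against $\langle \epsilon,\text{\textbf{h}}|\epsilon,\text{\textbf{h}}\rangle$, leaving $\langle \alpha_{\epsilon}|\beta_{\epsilon}\rangle=\sum_{\text{\textbf{h}}}\prod_{a}\alpha_{\epsilon,a}(\zeta_a^{(h_a)})\beta_{\epsilon,a}(\zeta_a^{(h_a)})\prod_{b<a}(\eta_a^{(h_a)}-\eta_b^{(h_b)})$.

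The key step is then to recognize the surviving product as a Vandermonde determinant, $\prod_{b<a}(\eta_a^{(h_a)}-\eta_b^{(h_b)})=\det_{\mathsf{N}}\|(\eta_a^{(h_a)})^{b-1}\|$, to absorb the scalar $\alpha_{\epsilon,a}(\zeta_a^{(h_a)})\beta_{\epsilon,a}(\zeta_a^{(h_a)})$ into row $a$, and to interchange the sum over $\text{\textbf{h}}$ with the determinant by multilinearity in the rows. Since row $a$ depends on $\text{\textbf{h}}$ only through $h_a$, the sum factorizes site by site and each independent $\sum_{h=0}^{1}$ reproduces precisely $\mathcal{M}_{a,b}^{(\alpha_{\epsilon},\beta_{\epsilon})}$, giving (\ref{ADMFKScalar-p1}).

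For the second part, the orthogonality $\langle \tau_{\epsilon}|\tau_{\epsilon}'\rangle=0$ for $\tau_{\epsilon}\neq\tau_{\epsilon}'$ follows from the commuting-family argument: evaluating $\langle \tau_{\epsilon}|\mathcal{T}_{\epsilon}(\lambda)|\tau_{\epsilon}'\rangle$ by acting to the left and to the right yields $(\tau_{\epsilon}(\lambda)-\tau_{\epsilon}'(\lambda))\langle \tau_{\epsilon}|\tau_{\epsilon}'\rangle=0$, and the eigenvalues differ at some $\lambda$. To obtain the sharper statement (\ref{ADMFKzero-eigenvector}) I would compute the column combination $\sum_{b}\mathcal{M}_{a,b}^{(\tau_{\epsilon},\tau_{\epsilon}')}c_b^{(\tau_{\epsilon},\tau_{\epsilon}')}$ directly from (\ref{ADMFKScalar-p1}) with $\alpha_{\epsilon,a}=\bar{Q}_{\tau_{\epsilon}}$ and $\beta_{\epsilon,a}=Q_{\tau_{\epsilon}'}$. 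Using $\eta_a^{(h)}=\cosh 2\zeta_a^{(h)}$ and the definition of the $c_b$, the inner sum over $b$ evaluates the defining polynomial, so the column combination becomes $\sum_{h=0}^{1}\bar{Q}_{\tau_{\epsilon}}(\zeta_a^{(h)})Q_{\tau_{\epsilon}'}(\zeta_a^{(h)})(\tau_{\epsilon}(\zeta_a^{(h)})-\tau_{\epsilon}'(\zeta_a^{(h)}))/(\sinh(2\zeta_a^{(h)}-\eta)\sinh(2\zeta_a^{(h)}+\eta))$, a single per-site identity to be verified for each $a\in\{1,\dots,\mathsf{N}\}$.

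The main obstacle is exactly this per-site identity. I would discharge it by using the ratio relations (\ref{ADMFKt-Q-relation}) and (\ref{ADMFKt-Qbar-relation}) to rewrite the $h=1$ term as a multiple of the $h=0$ term; the explicit factor $\alpha_a^{(-)}k_a^{(-)}=\sinh(2\xi_a+2\eta)/\sinh(2\xi_a-2\eta)$ together with $\sinh(2\zeta_a^{(0)}-\eta)\sinh(2\zeta_a^{(0)}+\eta)=\sinh(2\xi_a-2\eta)\sinh 2\xi_a$ and $\sinh(2\zeta_a^{(1)}-\eta)\sinh(2\zeta_a^{(1)}+\eta)=\sinh 2\xi_a\,\sinh(2\xi_a+2\eta)$ conspires to give both terms the common prefactor $(\sinh(2\xi_a-2\eta)\sinh 2\xi_a)^{-1}$. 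The decisive cancellation then comes from the observation that both $\tau_{\epsilon}$ and $\tau_{\epsilon}'$ obey the same functional equation (\ref{ADMFKI-Functional-eq}) with a common right-hand side, so that $\tau_{\epsilon}'(\zeta_a^{(0)})\tau_{\epsilon}'(\zeta_a^{(1)})=\tau_{\epsilon}(\zeta_a^{(0)})\tau_{\epsilon}(\zeta_a^{(1)})$; after substitution the per-site bracket collapses to $\tau_{\epsilon}(\zeta_a^{(0)})-\tau_{\epsilon}(\zeta_a^{(0)})=0$. The analogous computation with $\textsc{d}_{+}$, $\alpha_a^{(+)}$ and $k_a^{(+)}$ in place of $\textsc{a}_{-}$, $\alpha_a^{(-)}$ and $k_a^{(-)}$ settles the case $\epsilon=+$, so (\ref{ADMFKzero-eigenvector}) holds for both signs.
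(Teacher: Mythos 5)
Your proposal is correct and follows essentially the same route as the paper: the determinant formula comes from the overlap formula $\left(\ref{ADMFKM_jj}\right)$ combined with the Vandermonde identity and multilinearity of the determinant in its rows, and $\left(\ref{ADMFKzero-eigenvector}\right)$ is reduced to the same per-site identity, which you discharge with the ratio relations $\left(\ref{ADMFKt-Q-relation}\right)$, $\left(\ref{ADMFKt-Qbar-relation}\right)$ and the functional equation $\left(\ref{ADMFKI-Functional-eq}\right)$, exactly as the paper does. The only organizational difference is that you establish the orthogonality $\langle\tau_{\epsilon}|\tau_{\epsilon}^{\prime}\rangle=0$ independently by the standard commuting-operator argument, whereas the paper deduces it from $\left(\ref{ADMFKzero-eigenvector}\right)$ as the vanishing of the determinant; both are valid.
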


\begin{proof}
The formula $\left( \ref{ADMFKM_jj}\right) $ and the SOV-decomposition of the
states $\langle \alpha _{\epsilon }|$ and $|\beta _{\epsilon }\rangle$ imples:%
\begin{equation}
\langle \alpha _{\epsilon }|\beta _{\epsilon }\rangle =\sum_{h_{1},...,h_{%
\mathsf{N}}=0}^{1}V(\eta _{1}^{(h_{1})},...,\eta _{\mathsf{N}}^{(h_{\mathsf{N%
}})})\prod_{a=1}^{\mathsf{N}}\alpha _{\epsilon,a}(\zeta _{a}^{(h_{a})})\beta
_{\epsilon,a}(\zeta _{a}^{(h_{a})}),
\end{equation}%
where $V(x_{1},...,x_{\mathsf{N}})\equiv \prod_{1\leq b<a\leq \mathsf{N}%
}(x_{a}-x_{b})$ is the Vandermonde determinant which for the multilinearity
of the determinant implies $\left( \ref{ADMFKScalar-p1}\right) $.

The $\mathcal{T}_{\epsilon }$-eigenstates $\langle \tau _{\epsilon }|$ and $%
|\tau _{\epsilon }^{\prime }\rangle $ are left and right separate states
then the action $\langle \tau _{\epsilon }|\tau _{\epsilon }^{\prime
}\rangle $ is also given by $\left( \ref{ADMFKScalar-p1}\right) $ and to prove $%
\langle \tau _{\epsilon }|\tau _{\epsilon }^{\prime }\rangle =0$ for $\tau
_{\epsilon }(\lambda )\neq \tau _{\epsilon }^{\prime }(\lambda )\in \Sigma _{%
\mathcal{T}_{\epsilon }}$ we have just to prove $\left( \ref{ADMFKzero-eigenvector}\right) $. It is simple to remark that:%
\begin{equation}
\sum_{b=1}^{\mathsf{N}}\mathcal{M} _{a,b}^{\left( \tau _{\epsilon },\tau _{\epsilon
}^{\prime }\right) }c_{b}^{\left( \tau _{\epsilon },\tau _{\epsilon
}^{\prime }\right) }=\sum_{h=0}^{1}\frac{Q_{\tau _{\epsilon }^{\prime
}}(\zeta _{a}^{(h)})\bar{Q}_{\tau _{\epsilon }}(\zeta _{a}^{(h)})(\tau
_{\epsilon }(\zeta _{a}^{(h)})-\tau _{\epsilon }^{\prime }(\zeta _{a}^{(h)}))%
}{\sinh (2\zeta _{a}^{(h)}-\eta )\sinh (2\zeta _{a}^{(h)}+\eta )},
\label{ADMFKzero-eigenvector-1}
\end{equation}%
so in the case $\epsilon =-$, we can use the equations (\ref{ADMFKt-Q-relation})
and (\ref{ADMFKt-Qbar-relation}) to rewrite:%
\begin{align}
Q_{\tau _{-}^{\prime }}(\zeta _{a}^{(1)})\bar{Q}_{\tau _{-}}(\zeta
_{a}^{(1)})(\tau _{-}(\zeta _{a}^{(1)})-\tau _{-}^{\prime }(\zeta
_{a}^{(1)}))& =k_{a}\alpha _{a}\text{\textsc{a}}_{-}(-\zeta
_{a}^{(0)})Q_{\tau _{-}^{\prime }}(\zeta _{a}^{(1)})\bar{Q}_{\tau
_{-}}(\zeta _{a}^{(0)})  \notag \\
& -\text{\textsc{a}}_{-}(\zeta _{a}^{(1)})Q_{\tau _{-}^{\prime }}(\zeta
_{a}^{(0)})\bar{Q}_{\tau _{-}}(\zeta _{a}^{(1)}),
\end{align}%
and%
\begin{align}
Q_{\tau _{-}^{\prime }}(\zeta _{a}^{(0)})\bar{Q}_{\tau _{-}}(\zeta
_{a}^{(0)})(\tau _{-}(\zeta _{a}^{(0)})-\tau _{-}^{\prime }(\zeta
_{a}^{(0)}))& =\left( k_{a}\alpha _{a}\right) ^{-1}[\text{\textsc{a}}%
_{-}(\zeta _{a}^{(1)})Q_{\tau _{-}^{\prime }}(\zeta _{a}^{(0)})\bar{Q}_{\tau
_{-}}(\zeta _{a}^{(1)})  \notag \\
& -k_{a}\alpha _{a}\text{\textsc{a}}_{-}(-\zeta _{a}^{(0)})Q_{\tau
_{-}^{\prime }}(\zeta _{a}^{(1)})\bar{Q}_{\tau _{-}}(\zeta _{a}^{(0)})].
\end{align}%
Then by substituting them in (\ref{ADMFKzero-eigenvector-1}) we get (\ref{ADMFKzero-eigenvector}). Similarly, in the case $\epsilon =+$, we can use the
equations (\ref{ADMFKt-Q-relation+}) and (\ref{ADMFKt-Qbar-relation+}) to rewrite:%
\begin{align}
Q_{\tau _{+}^{\prime }}(\zeta _{a}^{(1)})\bar{Q}_{\tau _{+}}(\zeta
_{a}^{(1)})(\tau _{+}(\zeta _{a}^{(1)})-\tau _{+}^{\prime }(\zeta
_{a}^{(1)}))& =\left( k_{a}^{(+)}\alpha _{a}^{(+)}\right) ^{-1}[\text{%
\textsc{d}}_{+}(\zeta _{a}^{(0)})Q_{\tau _{+}^{\prime }}(\zeta _{a}^{(1)})%
\bar{Q}_{\tau _{+}}(\zeta _{a}^{(0)})  \notag \\
& -k_{a}^{(+)}\alpha _{a}^{(+)}\text{\textsc{d}}_{+}(-\zeta
_{a}^{(1)})Q_{\tau _{+}^{\prime }}(\zeta _{a}^{(0)})\bar{Q}_{\tau
_{+}}(\zeta _{a}^{(1)})],
\end{align}%
and%
\begin{align}
Q_{\tau _{+}^{\prime }}(\zeta _{a}^{(0)})\bar{Q}_{\tau _{+}}(\zeta
_{a}^{(0)})(\tau _{+}(\zeta _{a}^{(0)})-\tau _{+}^{\prime }(\zeta
_{a}^{(0)}))& =k_{a}^{(+)}\alpha _{a}^{(+)}\text{\textsc{d}}_{+}(-\zeta
_{a}^{(1)})Q_{\tau _{+}^{\prime }}(\zeta _{a}^{(0)})\bar{Q}_{\tau
_{+}}(\zeta _{a}^{(1)})  \notag \\
& -\text{\textsc{d}}_{+}(\zeta _{a}^{(0)})Q_{\tau _{+}^{\prime }}(\zeta
_{a}^{(1)})\bar{Q}_{\tau _{+}}(\zeta _{a}^{(0)}).
\end{align}%
Then by substituting them in (\ref{ADMFKzero-eigenvector-1}) we get (\ref{ADMFKzero-eigenvector}).
\end{proof}

It is worth remarking that the vector $\left( \langle \epsilon ,\alpha
|\right) ^{\dagger }\in \mathcal{R}_{\mathsf{N}}$ is of separate form in the 
$\mathsf{C}_{\epsilon }$-eigenbasis thanks to the Hermitian conjugation
properties of the Yang-Baxter generators. Then the previous result describes
also the scalar products for these states. The determinant formulae obtained
here can then be considered as the SOV analogous of the Slavnov's scalar
product formula \cite{ADMFKSlav89,ADMFKSlav97}-\cite{ADMFKKitMT99} which holds in the
framework of the algebraic Bethe ansatz.

\section{Boundary reconstructions of strings of local operators}

Here, we present identities between couples of (boundary) generators of the
reflection algebra and (bulk) generators of the Yang-Baxter algebra which
can be used to reconstruct any local operator in terms of the boundary
operators. This program is explained and developed for simple strings of
local operators likes:%
\begin{equation}
\sigma _{1}^{\pm }\cdots \sigma _{n}^{\pm },\text{ \ \ \ }\sigma _{n}^{\pm
}\cdots \sigma _{\mathsf{N}}^{\pm }.
\end{equation}

\subsection{Mixed bulk and $\mathcal{U}_{\pm }$-boundary reconstructions}

Let us start recalling the bulk reconstruction formulae:

\begin{proposition}[\protect\cite{ADMFKKitMT99}]
\label{ADMFKRecBulk}Let $x_{n}\in \,$End$($R$_{n})$ be the generic local operator in the local quantum
space R$_{n}$, then it admits the following
reconstruction in terms of the generators of the Yang-Baxter algebra: 
\begin{align}
x_{n}& \equiv \prod_{a=1}^{n-1}T(\zeta _{a}^{(1)})\frac{T(\zeta _{n}^{(1)})}{%
\det_{q}M(\xi _{n})}tr_{0}(M_{0}(\zeta _{n}^{(0)})\sigma
_{0}^{y}x_{0}^{t_{0}}\sigma _{0}^{y})\prod_{a=1}^{n-1}T^{-1}(\zeta
_{a}^{(1)}) \\
& =\prod_{a=n+1}^{\mathsf{N}}T^{-1}(\zeta _{a}^{(1)})tr_{0}(M_{0}(\zeta
_{n}^{(0)})\sigma _{0}^{y}x_{0}^{t_{0}}\sigma _{0}^{y})\frac{T(\zeta
_{n}^{(1)})}{\det_{q}M(\xi _{n})}\prod_{a=n+1}^{\mathsf{N}}T(\zeta
_{a}^{(1)}) \\
& =\prod_{a=1}^{n-1}T(\zeta _{a}^{(1)})tr_{0}(M_{0}(\zeta _{n}^{(1)})x_{0})%
\frac{T(\zeta _{n}^{(0)})}{\det_{q}M(\xi _{n})}\prod_{a=1}^{n-1}T^{-1}(\zeta
_{a}^{(1)}) \\
& =\prod_{a=n+1}^{\mathsf{N}}T^{-1}(\zeta _{a}^{(1)})\frac{T(\zeta
_{n}^{(0)})}{\det_{q}M(\xi _{n})}tr_{0}(M_{0}(\zeta
_{n}^{(1)})x_{0})\prod_{a=n+1}^{\mathsf{N}}T(\zeta _{a}^{(1)}),
\end{align}%
where we have used the identities:%
\begin{equation}
T(\zeta _{a}^{(1)})T(\zeta _{n}^{(0)})=\det_{q}M(\xi _{n}).
\end{equation}
\end{proposition}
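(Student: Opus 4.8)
The plan is to reduce all four expressions to a single mechanism: the degeneration of the bulk $R$-matrix at special values of its argument. First I would check directly from the explicit $6$-vertex $R$-matrix that $R_{0n}(0)=\sinh \eta\, P_{0n}$, where $P_{0n}$ is the permutation exchanging the auxiliary space $0$ with the local space $\mathrm{R}_{n}$, and that $R_{0n}(-\eta)$ is proportional to the one-dimensional antisymmetric (singlet) projector on $\mathrm{R}_{0}\otimes \mathrm{R}_{n}$. Evaluating $M_{0}(\lambda)=R_{0\mathsf{N}}(\lambda-\xi_{\mathsf{N}}-\eta/2)\cdots R_{01}(\lambda-\xi_{1}-\eta/2)$ at $\zeta_{n}^{(1)}=\xi_{n}+\eta/2$ makes the $n$-th factor collapse to $\sinh\eta\,P_{0n}$, while evaluating at $\zeta_{n}^{(0)}=\xi_{n}-\eta/2$ makes it collapse to the singlet projector; writing $T(\lambda)=\mathrm{tr}_{0}M_{0}(\lambda)$ for the bulk transfer matrix, the former degeneration will produce the two forms containing $M_{0}(\zeta_{n}^{(1)})x_{0}$ and the latter the two forms containing $M_{0}(\zeta_{n}^{(0)})\sigma_{0}^{y}x_{0}^{t_{0}}\sigma_{0}^{y}$.

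I would next establish the left-grouped form built on the permutation point. Using the quantum determinant relation quoted at the end of the statement, $T(\zeta_{n}^{(1)})T(\zeta_{n}^{(0)})=\det_{q}M(\xi_{n})$, together with the fact that $\det_{q}M(\xi_{n})$ is a nonvanishing scalar under the hypothesis $(\ref{ADMFKE-SOV})$, I identify $T(\zeta_{n}^{(0)})/\det_{q}M(\xi_{n})=T^{-1}(\zeta_{n}^{(1)})$, so that the third listed expression reads $\prod_{a=1}^{n-1}T(\zeta_{a}^{(1)})\,\mathrm{tr}_{0}(M_{0}(\zeta_{n}^{(1)})x_{0})\,\prod_{a=1}^{n}T^{-1}(\zeta_{a}^{(1)})$. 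To prove this equals $x_{n}$ I would insert $R_{0n}(0)=\sinh\eta\,P_{0n}$, use $P_{0n}R_{0m}=R_{nm}P_{0n}$ for $m\neq n$ to migrate the permutation through the remaining factors, and then telescope: each $T(\zeta_{a}^{(1)})$ acts as a shift operator along the chain, so that conjugation by $\prod_{a=1}^{n-1}T(\zeta_{a}^{(1)})$ transports the auxiliary-space operator onto site $n$ while the final $\prod T^{-1}$ cancels the accompanying dressing, leaving exactly $x_{n}$.

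The remaining three forms would then follow without further hard work. The right-grouped form is obtained from the left-grouped one by observing that $\mathcal{S}\equiv\prod_{a=1}^{\mathsf{N}}T(\zeta_{a}^{(1)})$ is the full cyclic shift and hence a central (indeed scalar) operator; commuting $\mathcal{S}$ past the middle factor and splitting $\prod_{a=1}^{\mathsf{N}}=\prod_{a=1}^{n-1}\cdot\prod_{a=n}^{\mathsf{N}}$ converts the conjugation from one end of the chain to the other. The two forms built on $M_{0}(\zeta_{n}^{(0)})\sigma_{0}^{y}x_{0}^{t_{0}}\sigma_{0}^{y}$ follow from the two just proved by the crossing identity $\hat{M}(\lambda)=(-1)^{\mathsf{N}}\sigma_{0}^{y}M^{t_{0}}(-\lambda)\sigma_{0}^{y}$: at $\zeta_{n}^{(0)}$ the singlet projector plays the role of the permutation, the transpose $t_{0}$ and the conjugation by $\sigma_{0}^{y}$ implement the crossing of the local operator, and the accompanying power of $T$ is traded between $\zeta_{n}^{(1)}$ and $\zeta_{n}^{(0)}$ again through the quantum determinant relation.

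I expect the genuine obstacle to be the bookkeeping in the telescoping step: one must track how every permutation and every surviving $R$-factor migrates as the local operator is shifted to its target site, and verify that the leftover factors reassemble precisely into the stated strings of transfer matrices rather than into some reordered or shifted variant. The two degenerations and the invertibility of $T(\zeta_{n}^{(1)})$ are immediate; checking that the crossing step pairs the evaluation points $\zeta_{n}^{(0)}$ and $\zeta_{n}^{(1)}$ consistently across all four forms, and that $\mathcal{S}$ is genuinely central, is where the real care is required.
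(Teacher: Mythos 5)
This proposition is quoted from \cite{ADMFKKitMT99} and the paper supplies no proof of its own, so there is nothing internal to compare against; the nearest analogue is the proof of the subsequent boundary-reconstruction proposition, which uses exactly the two degenerations you start from, namely $L_{0n}(\zeta_{n}^{(1)})=\sinh\eta\,P_{0n}$ and $L_{0n}(\zeta_{n}^{(0)})=-\sinh\eta\,\sigma_{0}^{y}P_{0n}^{t_{0}}\sigma_{0}^{y}$. Your outline is the standard argument of \cite{ADMFKKitMT99,ADMFKMaiT00} and is correct in structure: permutation point, identification $T(\zeta_{n}^{(0)})/\det_{q}M(\xi_{n})=T^{-1}(\zeta_{n}^{(1)})$, telescoping, and crossing. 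Two steps are asserted with justifications that are only heuristically right and would need to be replaced by the actual computation. First, for the inhomogeneous chain $T(\zeta_{a}^{(1)})$ is \emph{not} a shift operator; moving $P_{0n}$ through $M_{0}(\zeta_{n}^{(1)})$ and tracing gives $tr_{0}(M_{0}(\zeta_{n}^{(1)})x_{0})=\sinh\eta\,R_{n,n-1}\cdots R_{n1}\,x_{n}\,R_{n\mathsf{N}}\cdots R_{n,n+1}$, and the cancellation of the residual $R$-dressing by the conjugating product $\prod_{a=1}^{n-1}T(\zeta_{a}^{(1)})$ is an induction using the analogous product form of each $T(\zeta_{a}^{(1)})$ — this is the bookkeeping you correctly flag as the real content. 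Second, $\mathcal{S}=\prod_{a=1}^{\mathsf{N}}T(\zeta_{a}^{(1)})$ is indeed proportional to the identity, but not ``because it is the full cyclic shift'' (that intuition is a homogeneous-limit statement); one proves it by telescoping the product of the explicit $R$-matrix factorizations of the $T(\zeta_{a}^{(1)})$ against the unitarity relation $R_{12}(\lambda)R_{21}(-\lambda)\propto\mathbb{I}$. With those two points made precise (and noting that passing from the form with $O_{n}T^{-1}(\zeta_{n}^{(1)})$ to the form with $T^{-1}(\zeta_{n}^{(1)})O_{n}$ genuinely uses the scalarity of $\mathcal{S}$ rather than a naive commutation), your proof goes through.
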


In \cite{ADMFKWang00} a reconstruction of local operators which use the
reconstruction of the propagator by the (bulk) transfer matrix $T(\zeta
_{a}^{(h_{a})})$ of the Yang-Baxter algebra and the elements of the
(boundary) reflection algebra $\mathcal{U}_{+}\left( \lambda \right) $ has
been derived. In this subsection we reproduce this result and provide other
three equivalent reconstructions:

\begin{proposition}
Let $x_{n}\in \,$End$($R$_{n})$ be the generic local operator in the local quantum
space R$_{n}$, then
it admits the following reconstructions in terms of the generators of the
(boundary) reflection algebra $\mathcal{U}_{+}\left( \lambda \right) $:%
\begin{align}
x_{n}& =\prod_{a=1}^{n-1}T(\zeta _{a}^{(1)})tr_{0}(\mathcal{U}_{+}(\zeta
_{n}^{(1)})x_{0})\frac{\mathcal{\bar{T}}_{+}(\zeta _{n}^{(0)})}{\det_{q}%
\mathcal{\bar{U}}_{+}(\xi _{n})}\prod_{a=1}^{n-1}T^{-1}(\zeta _{a}^{(1)})
\label{ADMFKRec-3} \\
& =\prod_{a=1}^{n-1}T(\zeta _{a}^{(1)})\frac{\mathcal{\bar{T}}_{+}(\zeta
_{n}^{(1)})}{\det_{q}\mathcal{\bar{U}}_{+}(\xi _{n})}tr_{0}(\mathcal{U}%
_{+}(-\zeta _{n}^{(0)})x_{0})\prod_{a=1}^{n-1}T^{-1}(\zeta _{a}^{(1)}),
\label{ADMFKRec-4}
\end{align}%
and the following ones in terms of the generators of the (boundary)
reflection algebra $\mathcal{U}_{-}\left( \lambda \right) $:%
\begin{align}
x_{n}& =\prod_{a=n+1}^{\mathsf{N}}T^{-1}(\zeta _{a}^{(1)})tr_{0}(\mathcal{U}%
_{-}(\zeta _{n}^{(0)})\sigma _{0}^{y}x_{0}^{t_{0}}\sigma _{0}^{y})\frac{%
\mathcal{\bar{T}}_{-}(\zeta _{n}^{(1)})}{\det_{q}\mathcal{\bar{U}}_{-}(\xi
_{n})}\prod_{a=n+1}^{\mathsf{N}}T(\zeta _{a}^{(1)})  \label{ADMFKRec-1} \\
& =\prod_{a=n+1}^{\mathsf{N}}T^{-1}(\zeta _{a}^{(1)})\frac{\mathcal{\bar{T}}%
_{-}(\zeta _{n}^{(0)})}{\det_{q}\mathcal{\bar{U}}_{-}(\xi _{n})}tr_{0}(%
\mathcal{U}_{-}(-\zeta _{n}^{(1)})\sigma _{0}^{y}x_{0}^{t_{0}}\sigma
_{0}^{y})\prod_{a=n+1}^{\mathsf{N}}T(\zeta _{a}^{(1)}).  \label{ADMFKRec-2}
\end{align}%
Here, we have denoted\footnote{%
Note that\begin{eqnarray*}
\mathcal{\bar{T}}_{-}(\lambda ) &\equiv &tr_{0}(\mathcal{U}_{-}(\lambda
)K_{+}(\lambda )|_{\zeta _{+}=i\pi /2})=\cosh (\lambda +\eta /2)(\mathcal{A}%
_{-}(\lambda )+\mathcal{D}_{-}(\lambda )), \\
\mathcal{\bar{T}}_{+}(\lambda ) &\equiv &tr_{0}(\mathcal{U}_{+}(\lambda
)K_{-}(\lambda )|_{\zeta _{-}=i\pi /2})=\cosh (\lambda -\eta /2)(\mathcal{A}%
_{+}(\lambda )+\mathcal{D}_{+}(\lambda )).
\end{eqnarray*}%
}:%
\begin{equation}
\mathcal{\bar{T}}_{\pm }(\lambda )=\bar{a}_{\mp }(\lambda )\mathcal{A}_{\pm
}(\lambda )+\bar{a}_{\mp }(-\lambda )\mathcal{A}_{\pm }(-\lambda ),
\end{equation}%
and%
\begin{eqnarray}
\bar{a}_{\pm }(\lambda ) &\equiv &\mathsf{a}_{\pm }(\lambda )|_{\zeta
_{+}=i\pi /2}=\mathsf{d}_{\pm }(\lambda )|_{\zeta _{+}=i\pi /2}=\cosh
(\lambda \mp \eta /2)\frac{\sinh (2\lambda \pm \eta )}{\sinh 2\lambda }, \\
\det_{q}\mathcal{\bar{U}}_{-}(\lambda ) &\equiv &\text{\textsc{a}}%
_{-}|_{\zeta _{+}=i\pi /2}(\lambda +\eta /2)\text{\textsc{a}}_{-}|_{\zeta
_{+}=i\pi /2}(-\lambda +\eta /2), \\
\det_{q}\mathcal{\bar{U}}_{+}(\lambda ) &\equiv &\text{\textsc{d}}%
_{+}|_{\zeta _{-}=i\pi /2}(-\lambda -\eta /2)\text{\textsc{d}}_{+}|_{\zeta
_{-}=i\pi /2}(\lambda -\eta /2).
\end{eqnarray}
\end{proposition}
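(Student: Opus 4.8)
The plan is to derive all four boundary reconstructions from the bulk reconstruction of Proposition~\ref{ADMFKRecBulk} without ever touching the bulk dressing $\prod_a T(\zeta_a^{(1)})$: since that dressing is literally the same in the bulk and boundary formulae and enters only by conjugation, it suffices to convert the \emph{local} factor attached to site $n$. Matching \eqref{ADMFKRec-3} against the third form of Proposition~\ref{ADMFKRecBulk} reduces the claim to the single operator identity
\begin{equation}
\text{tr}_0\big(\mathcal{U}_+(\zeta_n^{(1)})x_0\big)\,\frac{\mathcal{\bar{T}}_+(\zeta_n^{(0)})}{\det_q\mathcal{\bar{U}}_+(\xi_n)}=\text{tr}_0\big(M_0(\zeta_n^{(1)})x_0\big)\,\frac{T(\zeta_n^{(0)})}{\det_q M(\xi_n)},
\end{equation}
while \eqref{ADMFKRec-1} reduces, against the second bulk form, to the same statement with $\mathcal{U}_-(\zeta_n^{(0)})\,\sigma_0^y x_0^{t_0}\sigma_0^y$ and $\mathcal{\bar{T}}_-(\zeta_n^{(1)})$ in place of their $\mathcal{U}_+$ counterparts. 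Everything thus rests on proving these two conversion identities.

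To establish the conversion I would use the factorisations $\mathcal{U}_+(\lambda)=\hat{M}_0(\lambda)K_+(\lambda)M_0(\lambda)$ and $\mathcal{U}_-(\lambda)=M_0(\lambda)K_-(\lambda)\hat{M}_0(\lambda)$. In $\text{tr}_0(\mathcal{U}_+(\zeta_n^{(1)})x_0)=\text{tr}_0(\hat{M}_0(\zeta_n^{(1)})K_+(\zeta_n^{(1)})M_0(\zeta_n^{(1)})x_0)$ the rightmost bulk factor $M_0(\zeta_n^{(1)})$ carries the localisation of $x_0$ onto site $n$ (a permutation $R_{0n}(0)=\sinh\eta\,P_{0n}$ at $\zeta_n^{(1)}$, and a rank-one projector $R_{0n}(-\eta)$ at $\zeta_n^{(0)}$ in the $\mathcal{U}_-$ case), exactly as in the bulk reconstruction; the reflected half $\hat{M}_0$ and the boundary matrix $K_\pm$ then act only on the sites $\neq n$ and on the auxiliary space. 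The key claim to prove is the factorisation
\begin{equation}
\text{tr}_0\big(\mathcal{U}_+(\zeta_n^{(1)})x_0\big)=\text{tr}_0\big(M_0(\zeta_n^{(1)})x_0\big)\,\mathcal{O}_n,
\end{equation}
with $\mathcal{O}_n$ an $x_0$-independent operator; here the parity relation \eqref{ADMFKSym-A-D+} (resp. \eqref{ADMFKSym-A-D-}) is essential, since it collapses the doubled boundary trace onto the single bulk trace multiplied by the even combination $\mathcal{\bar{T}}_\pm=\cosh(\lambda\mp\eta/2)(\mathcal{A}_\pm+\mathcal{D}_\pm)$.

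For the propagator half I would prove the boundary quantum-determinant (fusion) relation $\mathcal{\bar{T}}_\pm(\zeta_n^{(1)})\,\mathcal{\bar{T}}_\pm(\zeta_n^{(0)})=\det_q\mathcal{\bar{U}}_\pm(\xi_n)$, the exact analogue of the bulk identity $T(\zeta_n^{(1)})T(\zeta_n^{(0)})=\det_q M(\xi_n)$ recorded in Proposition~\ref{ADMFKRecBulk}; this follows from the explicit $\det_q\mathcal{U}_\pm$ formulae of the $\mathcal{U}_\pm$-reflection-algebra propositions specialised to $\zeta_\mp=i\pi/2$. It turns $\mathcal{\bar{T}}_\pm(\zeta_n^{(0)})/\det_q\mathcal{\bar{U}}_\pm(\xi_n)$ into $\mathcal{\bar{T}}_\pm^{-1}(\zeta_n^{(1)})$, forcing $\mathcal{O}_n=T^{-1}(\zeta_n^{(1)})\mathcal{\bar{T}}_+(\zeta_n^{(1)})$ and closing the conversion identity. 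Once the two formulae \eqref{ADMFKRec-3} and \eqref{ADMFKRec-1} are in hand (the former reproducing Wang's reconstruction), the companions \eqref{ADMFKRec-4} and \eqref{ADMFKRec-2} follow by running the same argument at the reflected special point, using the evenness $\mathcal{\bar{T}}_\pm(-\lambda)=\mathcal{\bar{T}}_\pm(\lambda)$ together with the parity relations \eqref{ADMFKSym-B-C+} and \eqref{ADMFKSym-B-C-}, which implement $\zeta_n^{(1)}\leftrightarrow-\zeta_n^{(0)}$ in the $\mathcal{U}_\pm$ argument while interchanging the order of the two local factors.

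I expect the main obstacle to be the factorisation step: because the partial trace $\text{tr}_0$ is \emph{not} cyclic, one cannot simply pull $x_0$ out, so one must track precisely how $\hat{M}_0(\zeta_n^{(1)})$ and $K_+(\zeta_n^{(1)})$ contribute once $P_{0n}$ has extracted $x_0$, and verify that this contribution is both independent of $x_0$ and equal to the transfer-matrix combination $\mathcal{\bar{T}}_+$ rather than some larger operator on the quantum space. The fusion relation and the parity arguments are comparatively routine once the explicit quantum determinants of Section~2 are invoked.
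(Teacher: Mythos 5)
Your overall target identities are correct --- the conversion identity
\begin{equation*}
\text{tr}_0\bigl(M_0(\zeta_n^{(1)})y_0\bigr)\frac{T(\zeta_n^{(0)})}{\det_qM(\xi_n)}
=\text{tr}_0\bigl(\mathcal{U}_{+}(\zeta_n^{(1)})y_0\bigr)\frac{\mathcal{\bar T}_{+}(\zeta_n^{(0)})}{\det_q\mathcal{\bar U}_{+}(\xi_n)}
\end{equation*}
is true, and the fusion relation $\mathcal{\bar T}_{\pm}(\zeta_n^{(1)})\mathcal{\bar T}_{\pm}(\zeta_n^{(0)})=\det_q\mathcal{\bar U}_{\pm}(\xi_n)$ is exactly the ingredient the paper uses. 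But the route you propose for the central ``factorisation step'' has a genuine gap, on two counts. First, the claim that once $R_{0n}(0)=\sinh\eta\,P_{0n}$ has acted, ``the reflected half $\hat M_0$ and the boundary matrix $K_{\pm}$ act only on the sites $\neq n$'' is false: $\hat M_0(\zeta_n^{(1)})$ contains $R_{0n}^{t_0}(-2\xi_n-\eta)$, a perfectly generic (non-degenerate) operator on $\mathrm{R}_0\otimes\mathrm{R}_n$. The correct mechanism is that this second $R$-matrix at site $n$ simply stays inside the auxiliary trace and is reabsorbed into $\text{tr}_0(\mathcal{U}_{\pm}(\cdot))\propto\mathcal{\bar T}_{\pm}(\cdot)$; nothing forces it to be trivial, and your factorisation cannot be justified by pretending it is. Second, and more importantly, the reduction to an \emph{undressed}, purely local identity at site $n$ does not actually simplify anything: without the propagator conjugation the permutation $P_{0n}$ sits in the middle of the ordered product $L_{0\mathsf{N}}\cdots L_{01}$ (and of its reflected partner), so it does not extract $x_0$ as a clean one-sided factor of the trace. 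The dressing $\prod_aT(\zeta_a^{(1)})$ is not a spectator: it is precisely what cyclically shifts site $n$ to the edge of \emph{both} halves of the boundary monodromy matrix so that the degenerate $R$-matrix lands adjacent to $x_0$ inside $\text{tr}_0$. Once you repair these two points you are led back to the paper's own computation, which conjugates $\text{tr}_0(\mathcal{U}_-(\lambda)\sigma_0^yx_0^{t_0}\sigma_0^y)$ by $\prod_{a=1}^{n}T(\zeta_a^{(1)})$, evaluates at $\lambda=-\zeta_n^{(1)}$ and $\lambda=\zeta_n^{(0)}$ where $L_{0n}^{t_0}(-\lambda)$, resp. $L_{0n}(\lambda)$, degenerates to a permutation and pushes $x_n$ out to the right, resp. the left, of the trace, and then applies the fusion identity. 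Note also that the logical order is reversed relative to the paper: the conversion identity you want to prove first is obtained there as a \emph{consequence} of the proposition (by comparing the boundary reconstruction with the bulk one and cancelling the dressing), so using it as the starting lemma requires an independent proof that your sketch does not supply.
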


\begin{proof}
The first reconstruction in terms of $\mathcal{U}_{+}\left( \lambda \right) $
is the result proven in Proposition 1 of \cite{ADMFKWang00}; similarly it is
possible to prove our second reconstruction in terms of $\mathcal{U}%
_{+}\left( \lambda \right) $. Let us prove here the reconstructions in terms
of $\mathcal{U}_{-}\left( \lambda \right) $; by definition of propagator
operator it holds:%
\begin{align}
& \prod_{a=1}^{n}T(\zeta _{a}^{(1)})tr_{0}(\mathcal{U}_{-}(\lambda )\sigma
_{0}^{y}x_{0}^{t_{0}}\sigma _{0}^{y})\left. =\right. (-1)^{\mathsf{N}%
}tr_{0}(L_{0,n}(\lambda )\ldots L_{01}(\lambda )L_{0\mathsf{N}}(\lambda
)\ldots L_{0n+1}(\lambda )K_{-}(\lambda )\sigma _{0}^{y}  \notag \\
& \hspace{5.5cm}L_{0,n+1}^{t_{0}}(-\lambda )\ldots L_{0\mathsf{N}%
}^{t_{0}}(-\lambda )L_{01}^{t_{0}}(-\lambda )\ldots L_{0n}^{t_{0}}(-\lambda
)x_{0}^{t_{0}}\sigma _{0}^{y})\prod_{a=1}^{n}T(\zeta _{a}^{(1)})
\label{ADMFKRec-a}
\end{align}%
where $L_{0a}(\lambda )\equiv R_{0a}(\lambda -\xi _{a}-\eta /2)$ for any $%
a\in \{1,...,\mathsf{N}\}$. Now using the identities: 
\begin{equation}
L_{0n}(\zeta _{n}^{(1)})=P_{0n}\sinh \eta ,\text{ \ }L_{0n}(\zeta
_{n}^{(0)})=-\sinh \eta \sigma _{0}^{y}P_{0n}^{t_{0}}\sigma _{0}^{y},
\end{equation}%
where $P_{0n}$ is the permutation operator between the 2-dimensional spaces R%
$_{0}$ and R$_{n}$. Then, the r.h.s of (\ref{ADMFKRec-a}) computed in $\lambda
=-\zeta _{n}^{(1)}$ reads:%
\begin{equation}
tr_{0}(\mathcal{U}_{-}(-\zeta _{n}^{(1)}))x_{n}\prod_{a=1}^{n}T(\zeta
_{a}^{(1)}),
\end{equation}%
being:%
\begin{equation}
L_{0n}^{t_{0}}(\zeta _{n}^{(1)})x_{0}^{t_{0}}\sigma _{0}^{y}=\sinh \eta 
\left[ x_{0}P_{0n}\right] ^{t_{0}}\sigma _{0}^{y}=\sinh \eta
P_{0n}^{t_{0}}\sigma _{0}^{y}x_{n}.
\end{equation}%
Similarly, the r.h.s of (\ref{ADMFKRec-a}) computed in $\lambda =-\zeta
_{n}^{(1)} $ reads:%
\begin{equation}
x_{n}tr_{0}(\mathcal{U}_{-}(-\zeta _{n}^{(1)}))\prod_{a=1}^{n}T(\zeta
_{a}^{(1)}),
\end{equation}%
being:%
\begin{equation}
x_{0}^{t_{0}}\sigma _{0}^{y}L_{0n}(\zeta _{n}^{(0)})=-\sinh \eta \left[
P_{0n}x_{0}\right] ^{t_{0}}\sigma _{0}^{y}=x_{n}\sigma _{0}^{y}L_{0n}(\zeta
_{n}^{(0)}).
\end{equation}%
By using these formulae the reconstructions (\ref{ADMFKRec-1}) and (\ref{ADMFKRec-2})
simply follows.
\end{proof}

The previous proposition naturally implies the following:

\begin{corollary}
The following annihilations identities hold:\newline
\textsf{I) }For the generators of the reflection algebra $\mathcal{U}%
_{-}\left( \lambda \right) $:%
\begin{eqnarray}
\mathcal{A}_{-}(\zeta _{n}^{(0)})\mathcal{C}_{-}(\pm \zeta _{n}^{(1)}) &=&%
\mathcal{A}_{-}(\zeta _{n}^{(0)})\mathcal{D}_{-}(-\zeta _{n}^{(1)})=0, \\
\mathcal{A}_{-}(-\zeta _{n}^{(1)})\mathcal{C}_{-}(\pm \zeta _{n}^{(0)}) &=&%
\mathcal{A}_{-}(-\zeta _{n}^{(1)})\mathcal{D}_{-}(\zeta _{n}^{(0)})=0,
\end{eqnarray}%
\begin{eqnarray}
\mathcal{D}_{-}(\zeta _{n}^{(0)})\mathcal{B}_{-}(\pm \zeta _{n}^{(1)}) &=&%
\mathcal{D}_{-}(\zeta _{n}^{(0)})\mathcal{A}_{-}(-\zeta _{n}^{(1)})=0, \\
\mathcal{D}_{-}(-\zeta _{n}^{(1)})\mathcal{B}_{-}(\pm \zeta _{n}^{(0)}) &=&%
\mathcal{D}_{-}(-\zeta _{n}^{(1)})\mathcal{A}_{-}(\zeta _{n}^{(0)})=0,
\end{eqnarray}%
\begin{equation}
\mathcal{B}_{-}(\pm \zeta _{n}^{(0)})\mathcal{B}_{-}(\pm \zeta _{n}^{(1)})=%
\mathcal{B}_{-}(\pm \zeta _{n}^{(0)})\mathcal{A}_{-}(-\zeta _{n}^{(1)})=%
\mathcal{B}_{-}(\pm \zeta _{n}^{(1)})\mathcal{A}_{-}(\zeta _{n}^{(0)})=0,
\label{ADMFKAnnih-BB-BA_-}
\end{equation}%
and%
\begin{equation}
\mathcal{C}_{-}(\pm \zeta _{n}^{(0)})\mathcal{C}_{-}(\pm \zeta _{n}^{(1)})=%
\mathcal{C}_{-}(\pm \zeta _{n}^{(0)})\mathcal{D}_{-}(-\zeta _{n}^{(1)})=%
\mathcal{C}_{-}(\pm \zeta _{n}^{(1)})\mathcal{D}_{-}(\zeta _{n}^{(0)})=0.
\end{equation}%
\newline
\textsf{II) }For the generators of the reflection algebra $\mathcal{U}%
_{+}\left( \lambda \right) $:%
\begin{eqnarray}
\mathcal{A}_{+}(-\zeta _{n}^{(0)})\mathcal{B}_{+}(\pm \zeta _{n}^{(0)}) &=&%
\mathcal{A}_{+}(-\zeta _{n}^{(0)})\mathcal{D}_{+}(\zeta _{n}^{(1)})=0, \\
\mathcal{A}_{+}(\zeta _{n}^{(1)})\mathcal{B}_{+}(\pm \zeta _{n}^{(0)}) &=&%
\mathcal{A}_{+}(\zeta _{n}^{(1)})\mathcal{D}_{+}(-\zeta _{n}^{(0)})=0,
\end{eqnarray}%
\begin{eqnarray}
\mathcal{D}_{+}(-\zeta _{n}^{(0)})\mathcal{C}_{+}(\pm \zeta _{n}^{(1)}) &=&%
\mathcal{D}_{+}(-\zeta _{n}^{(0)})\mathcal{A}_{+}(\zeta _{n}^{(1)})=0, \\
\mathcal{D}_{+}(\zeta _{n}^{(1)})\mathcal{C}_{+}(\pm \zeta _{n}^{(0)}) &=&%
\mathcal{D}_{+}(\zeta _{n}^{(1)})\mathcal{A}_{+}(-\zeta _{n}^{(0)})=0,
\end{eqnarray}%
\begin{equation}
\mathcal{B}_{+}(\pm \zeta _{n}^{(0)})\mathcal{B}_{+}(\pm \zeta _{n}^{(1)})=%
\mathcal{B}_{+}(\pm \zeta _{n}^{(0)})\mathcal{D}_{+}(\zeta _{n}^{(1)})=%
\mathcal{B}_{+}(\pm \zeta _{n}^{(1)})\mathcal{D}_{+}(-\zeta _{n}^{(0)})=0,
\end{equation}%
and%
\begin{equation}
\mathcal{C}_{+}(\pm \zeta _{n}^{(0)})\mathcal{C}_{+}(\pm \zeta _{n}^{(1)})=%
\mathcal{C}_{+}(\pm \zeta _{n}^{(0)})\mathcal{A}_{+}(\zeta _{n}^{(1)})=%
\mathcal{C}_{+}(\pm \zeta _{n}^{(1)})\mathcal{A}_{+}(-\zeta _{n}^{(0)})=0.
\end{equation}
\end{corollary}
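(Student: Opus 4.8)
The plan is to prove everything inside the separated-variable representations, reducing the whole list to a handful of ``clean'' products and generating the rest by Hermitian conjugation. I would treat $\mathcal{U}_-$ explicitly; the $\mathcal{U}_+$ statements of part \textsf{II)} come out identically with the $\mathcal{B}_+$/$\mathcal{C}_+$ data, or by the substitution $\mathcal{U}_+^{t_0}(-\lambda)\leftrightarrow\mathcal{U}_-(\lambda)$ already used to pass between the two reflection algebras. Two elementary facts drive the argument. First, from $\mathsf{A}_-(\lambda)=g_-(\lambda)a(\lambda)d(-\lambda)$ one reads $\mathsf{A}_-(\zeta_n^{(0)})=0=\mathsf{A}_-(-\zeta_n^{(1)})$ (the factors $a(\xi_n-\eta/2)$ and $d(\xi_n+\eta/2)$ vanish), and through $\mathsf{D}_-(\zeta_a^{(h_a)})=(k_a^{(-)})^{\varphi_a}\mathsf{A}_-(\zeta_a^{(h_a)}-2\varphi_a\xi_a)$ also $\mathsf{D}_-(\zeta_n^{(0)})=0=\mathsf{D}_-(-\zeta_n^{(1)})$. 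Second, the SOV interpolation formulae collapse at a node: at $\lambda=\zeta_c^{(h_c)}$ all Lagrange weights but $a=c$ vanish, so $\langle-,\mathbf{h}|\mathcal{A}_-(\zeta_c^{(h_c)})=\mathsf{A}_-(\zeta_c^{(h_c)})\langle-,\mathbf{h}|T_c^{-\varphi_c}$ and $\mathcal{D}_-(\zeta_c^{(h_c)})|-,\mathbf{k}\rangle=\mathsf{D}_-(\zeta_c^{(h_c)})T_c^{-\varphi_c}|-,\mathbf{k}\rangle$. Writing $V_n^{(\epsilon)}$ for the span of the SOV states with $h_n=\epsilon$, these two facts give: $\mathcal{D}_-(\zeta_n^{(0)})$ annihilates $V_n^{(0)}$ and $\mathcal{D}_-(-\zeta_n^{(1)})$ annihilates $V_n^{(1)}$ (node plus vanishing coefficient, on the right); $\mathrm{Im}\,\mathcal{A}_-(\zeta_n^{(0)})\subseteq V_n^{(1)}$ and $\mathrm{Im}\,\mathcal{A}_-(-\zeta_n^{(1)})\subseteq V_n^{(0)}$ (the corresponding $h_n$-rows of the matrix vanish, using the biorthogonality of the left and right $\mathcal{B}_-$-bases proved earlier); and $\mathcal{B}_-(\lambda)$, diagonal with eigenvalue $\propto a_{\mathbf{h}}(\lambda)a_{\mathbf{h}}(-\lambda)$, annihilates $V_n^{(\epsilon)}$ at $\lambda=\pm\zeta_n^{(\epsilon)}$.

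Every product in which a sector-annihilator stands on the outside and a sector-matched factor on the inside is then immediate. Thus $\mathcal{D}_-(\zeta_n^{(0)})\mathcal{A}_-(-\zeta_n^{(1)})=0$ because $\mathrm{Im}\,\mathcal{A}_-(-\zeta_n^{(1)})\subseteq V_n^{(0)}\subseteq\ker\mathcal{D}_-(\zeta_n^{(0)})$, and symmetrically $\mathcal{D}_-(-\zeta_n^{(1)})\mathcal{A}_-(\zeta_n^{(0)})=0$; likewise $\mathcal{D}_-(\zeta_n^{(0)})\mathcal{B}_-(\pm\zeta_n^{(1)})$, $\mathcal{D}_-(-\zeta_n^{(1)})\mathcal{B}_-(\pm\zeta_n^{(0)})$, $\mathcal{B}_-(\pm\zeta_n^{(1)})\mathcal{A}_-(\zeta_n^{(0)})$ and $\mathcal{B}_-(\pm\zeta_n^{(0)})\mathcal{A}_-(-\zeta_n^{(1)})$ all vanish, while $\mathcal{B}_-(\pm\zeta_n^{(0)})\mathcal{B}_-(\pm\zeta_n^{(1)})=0$ since the product of diagonal eigenvalues is identically zero. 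This settles all the listed products of the shapes $\mathcal{D}_-\mathcal{A}_-$, $\mathcal{D}_-\mathcal{B}_-$, $\mathcal{B}_-\mathcal{A}_-$ and $\mathcal{B}_-\mathcal{B}_-$. The same mechanism in the $\mathcal{C}_-$-SOV representation, which exists by Theorem~\ref{ADMFKTh1} (constructed from the $\mathcal{B}_-$ one by Hermitian conjugation) and in which $\mathcal{C}_-$ is diagonal with eigenvalue vanishing at $\pm\zeta_n^{(\epsilon)}$ while $\mathcal{A}_-,\mathcal{D}_-$ keep the same nodal actions and the same vanishing coefficients, yields the clean $\mathcal{C}_-$-analogues: $\mathcal{C}_-(\pm\zeta_n^{(0)})\mathcal{C}_-(\pm\zeta_n^{(1)})=0$, $\mathcal{C}_-(\pm\zeta_n^{(0)})\mathcal{A}_-(-\zeta_n^{(1)})=0$, $\mathcal{C}_-(\pm\zeta_n^{(1)})\mathcal{A}_-(\zeta_n^{(0)})=0$, $\mathcal{D}_-(\zeta_n^{(0)})\mathcal{C}_-(\pm\zeta_n^{(1)})=0$ and $\mathcal{D}_-(-\zeta_n^{(1)})\mathcal{C}_-(\pm\zeta_n^{(0)})=0$.

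The remaining entries are the order-reversed partners, obtained by Hermitian conjugation. By Proposition~\ref{ADMFKHerm-conj} (massless regime) $\mathcal{U}_-(\lambda)^\dagger=[\mathcal{U}_-(-\lambda^*)]^{t_0}$, whence entrywise $\mathcal{A}_-(\lambda)^\dagger=\mathcal{A}_-(-\lambda^*)$, $\mathcal{D}_-(\lambda)^\dagger=\mathcal{D}_-(-\lambda^*)$, $\mathcal{B}_-(\lambda)^\dagger=\mathcal{B}_-(-\lambda^*)$, $\mathcal{C}_-(\lambda)^\dagger=\mathcal{C}_-(-\lambda^*)$; the admissible parameters give $(\zeta_n^{(0)})^*=\zeta_n^{(1)}$, hence $-(\zeta_n^{(0)})^*=-\zeta_n^{(1)}$ and $-(-\zeta_n^{(1)})^*=\zeta_n^{(0)}$. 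Taking $\dagger$ therefore reverses a product and sends its arguments to exactly their partners: $\mathcal{D}_-(\zeta_n^{(0)})\mathcal{A}_-(-\zeta_n^{(1)})=0$ gives $\mathcal{A}_-(\zeta_n^{(0)})\mathcal{D}_-(-\zeta_n^{(1)})=0$, the clean $\mathcal{C}_-(\pm\zeta_n^{(0)})\mathcal{A}_-(-\zeta_n^{(1)})=0$ gives $\mathcal{A}_-(\zeta_n^{(0)})\mathcal{C}_-(\mp\zeta_n^{(1)})=0$, the clean $\mathcal{D}_-(\zeta_n^{(0)})\mathcal{C}_-(\pm\zeta_n^{(1)})=0$ gives $\mathcal{C}_-(\mp\zeta_n^{(0)})\mathcal{D}_-(-\zeta_n^{(1)})=0$, and so on, exhausting the products of the shapes $\mathcal{A}_-\mathcal{D}_-$, $\mathcal{A}_-\mathcal{C}_-$ and $\mathcal{C}_-\mathcal{D}_-$. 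The two signs in each $\mathcal{B}_-(\pm\cdot)$ and $\mathcal{C}_-(\pm\cdot)$ are interchangeable through the parity relations $(\ref{ADMFKSym-B-C-})$, whose prefactors are finite and nonzero at these points under $(\ref{ADMFKE-SOV})$, so one sign suffices. Finally each identity is a fixed rational (indeed entire) expression in the boundary and inhomogeneity parameters, so its validity on the full-dimensional Hermiticity locus of Proposition~\ref{ADMFKHerm-conj} extends to all parameters obeying $(\ref{ADMFKE-SOV})$ by analytic continuation.

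The step demanding the most care is precisely this conjugation bookkeeping: one must check that $\dagger$ carries each argument $\pm\zeta_n^{(\epsilon)}$ to the argument appearing in the companion identity, that it reverses the operator order correctly, and that the passage from the special Hermiticity regimes to generic $(\ref{ADMFKE-SOV})$-parameters is legitimate; a secondary subtlety is reading the image/kernel sectors off the interpolation formulae via biorthogonality. Conceptually all of this is the operator shadow of the preceding proposition: at $\lambda=\zeta_n^{(1)}$ and $\lambda=\zeta_n^{(0)}$ the local factor $L_{0n}(\lambda)$ degenerates to the permutation $\sinh\eta\,P_{0n}$ and to the rank-one singlet projector $-\sinh\eta\,\sigma_0^y P_{0n}^{t_0}\sigma_0^y$, and the vanishings $\mathsf{A}_-(\zeta_n^{(0)})=\mathsf{A}_-(-\zeta_n^{(1)})=0$ are exactly the imprint of those degeneracies on the SOV spectra of $\mathcal{A}_-$ and $\mathcal{D}_-$; inserting the degenerate $L$-operators directly into $\mathcal{U}_-$ at site $n$ yields the core identities without reference to the SOV bases, which is the route the phrase ``naturally implies'' points to.
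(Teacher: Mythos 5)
Your route is genuinely different from the paper's. The paper disposes of the whole list in a few lines using the boundary reconstruction formulae of the immediately preceding proposition: writing $x_n^{(1)}x_n^{(2)}$ with the two equivalent reconstructions (\ref{ADMFKRec-3})--(\ref{ADMFKRec-4}) (resp.\ (\ref{ADMFKRec-1})--(\ref{ADMFKRec-2})) and cancelling the inner pair via $\mathcal{\bar{T}}_{\pm }(\zeta _{n}^{(1)})\mathcal{\bar{T}}_{\pm }(\zeta_{n}^{(0)})=\det_{q}\mathcal{\bar{U}}_{\pm }(\xi _{n})$, one finds that $tr_{0}(\mathcal{U}_{+}(\zeta _{n}^{(1)})x_{0}^{(1)})\,tr_{0}(\mathcal{U}_{+}(-\zeta _{n}^{(0)})x_{0}^{(2)})$ is conjugate (by a product of bulk transfer matrices) to the single-site product $x_{n}^{(1)}x_{n}^{(2)}$, which vanishes whenever the elementary $2\times 2$ matrices satisfy $x^{(1)}x^{(2)}=0$; the remaining sign combinations follow from the parity relations (\ref{ADMFKSym-B-C-}), (\ref{ADMFKSym-B-C+}). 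This is exactly the ``operator shadow'' you gesture at in your closing paragraph but do not carry out. Your actual argument instead lives in the SOV representation: the vanishings $\mathsf{A}_{-}(\zeta _{n}^{(0)})=\mathsf{A}_{-}(-\zeta _{n}^{(1)})=0$, the collapse of the interpolation formulae at the nodes, and the biorthogonality of the left and right $\mathcal{B}_{-}$-bases correctly yield the kernel/image sector containments, and all the ``clean'' products of the shapes $\mathcal{D}_{-}\mathcal{A}_{-}$, $\mathcal{D}_{-}\mathcal{B}_{-}$, $\mathcal{B}_{-}\mathcal{A}_{-}$, $\mathcal{B}_{-}\mathcal{B}_{-}$ do vanish by your mechanism. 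What this buys is a representation-theoretic explanation of the identities; the price is a much longer chain of dependencies (the SOV theorems, the pairing proposition, the Hermiticity proposition, and an analytic continuation off the Hermiticity locus), none of which the paper's proof needs.

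There is, however, one concrete error, and it sits precisely in the step you flag as the delicate one. In $\mathcal{U}_{-}(\lambda )^{\dagger }=[\mathcal{U}_{-}(-\lambda ^{\ast })]^{t_{0}}$ the transposition $t_{0}$ swaps the off-diagonal entries, so the correct entrywise rules are $\mathcal{B}_{-}(\lambda )^{\dagger }=\mathcal{C}_{-}(-\lambda ^{\ast })$ and $\mathcal{C}_{-}(\lambda )^{\dagger }=\mathcal{B}_{-}(-\lambda ^{\ast })$, not $\mathcal{B}\mapsto \mathcal{B}$ and $\mathcal{C}\mapsto \mathcal{C}$ as you wrote; this swap is the very reason the paper says Hermitian conjugation converts the $\mathcal{B}_{\pm }$-eigenbasis into the $\mathcal{C}_{\pm }$-eigenbasis. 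As written, your derivations of the $\mathcal{A}_{-}\mathcal{C}_{-}$ and $\mathcal{C}_{-}\mathcal{D}_{-}$ entries (``conjugate $\mathcal{C}_{-}\mathcal{A}_{-}$ to get $\mathcal{A}_{-}\mathcal{C}_{-}$,'' etc.) therefore rest on a false identity. The repair is immediate and in fact shortens the proof: conjugating the clean $\mathcal{B}_{-}$-identities directly produces every $\mathcal{C}_{-}$-identity in the list, e.g.\ $\left( \mathcal{B}_{-}(\pm \zeta _{n}^{(0)})\mathcal{A}_{-}(-\zeta _{n}^{(1)})\right) ^{\dagger }=\mathcal{A}_{-}(\zeta _{n}^{(0)})\mathcal{C}_{-}(\mp \zeta _{n}^{(1)})$ and $\left( \mathcal{B}_{-}(\pm \zeta _{n}^{(0)})\mathcal{B}_{-}(\pm \zeta _{n}^{(1)})\right) ^{\dagger }=\mathcal{C}_{-}(\mp \zeta _{n}^{(0)})\mathcal{C}_{-}(\mp \zeta _{n}^{(1)})$, using $(\zeta _{n}^{(0)})^{\ast }=\zeta _{n}^{(1)}$ in the massless regime. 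Your auxiliary ``$\mathcal{C}_{-}$-SOV representation'' paragraph then becomes superfluous, which is just as well since its claims (that $\mathcal{A}_{-},\mathcal{D}_{-}$ retain the same nodal actions and coefficients in that basis) are asserted rather than proved. With that correction, and retaining the final analytic-continuation step from the Hermiticity locus to generic parameters satisfying (\ref{ADMFKE-SOV}), the argument closes.
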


\begin{proof}
The proof of these annihilation identities can be done along the same line
used for the bulk case. Let us sketch it; by using the reconstruction
formulae of the previous proposition and the identities:%
\begin{equation}
\mathcal{\bar{T}}_{\pm }(\zeta _{n}^{(1)})\mathcal{\bar{T}}_{\pm }(\zeta
_{n}^{(0)})=\det_{q}\mathcal{\bar{U}}_{\pm }(\xi _{n}),
\end{equation}
it also holds:%
\begin{eqnarray}
&&\prod_{a=1}^{n-1}T^{-1}(\zeta
_{a}^{(1)})x_{n}^{(3)}\prod_{a=1}^{n-1}T(\zeta _{a}^{(1)})\left. =\right. 
\frac{tr_{0}(\mathcal{U}_{+}(\zeta _{n}^{(1)})x_{0}^{(1)})tr_{0}(\mathcal{U}%
_{+}(-\zeta _{n}^{(0)})x_{0}^{(2)})}{\det_{q}\mathcal{\bar{U}}_{+}(\xi _{n})}
\\
&&\text{ \ \ \ \ \ \ \ \ \ \ \ \ \ \ \ \ \ }\left. =\right. \frac{\mathcal{%
\bar{T}}_{+}(\zeta _{n}^{(1)})tr_{0}(\mathcal{U}_{+}(-\zeta
_{n}^{(0)})x_{0}^{(1)})tr_{0}(\mathcal{U}_{+}(\zeta _{n}^{(1)})x_{0}^{(2)})%
\mathcal{\bar{T}}_{+}(\zeta _{n}^{(0)})}{\left( \det_{q}\mathcal{\bar{U}}%
_{+}(\xi _{n})\right) ^{2}},
\end{eqnarray}%
and:%
\begin{align}
& \prod_{a=n+1}^{\mathsf{N}}T(\zeta _{a}^{(1)})y_{n}^{(3)}\prod_{a=n+1}^{%
\mathsf{N}}T^{-1}(\zeta _{a}^{(1)})\left. =\right. \frac{tr_{0}(\mathcal{U}%
_{-}(\zeta _{n}^{(0)})y_{0}^{(1)})tr_{0}(\mathcal{U}_{-}(-\zeta
_{n}^{(1)})y_{0}^{(2)})}{\det_{q}\mathcal{\bar{U}}_{-}(\xi _{n})} \\
& \text{ \ \ \ \ \ \ \ \ \ \ \ \ \ \ \ \ \ \ \ }\left. =\right. \frac{%
\mathcal{\bar{T}}_{-}(\zeta _{n}^{(0)})tr_{0}(\mathcal{U}_{-}(-\zeta
_{n}^{(1)})y_{0}^{(1)})tr_{0}(\mathcal{U}_{-}(\zeta _{n}^{(0)})y_{0}^{(2)})%
\mathcal{\bar{T}}_{-}(\zeta _{n}^{(0)})}{\left( \det_{q}\mathcal{\bar{U}}%
_{-}(\xi _{n})\right) ^{2}}.
\end{align}%
where we have defined:%
\begin{equation}
x_{n}^{(3)}\equiv x_{n}^{(1)}x_{n}^{(2)},\text{ \ \ \ \ \ \ \ }%
y_{n}^{(3)}\equiv \sigma _{0}^{y}\left( y_{n}^{(2)}y_{n}^{(1)}\right)
^{t_{0}}\sigma _{0}^{y},
\end{equation}
we can use now these formulae to derive all the annihilation formulae.
\end{proof}

\subsection{$\mathcal{U}_{\pm }$-boundary reconstruction}

Let us remark that the bulk reconstruction formulae of Proposition \ref{ADMFKRecBulk}, the corresponding annihilation identities\footnote{%
See Lemma 5.1 of \cite{ADMFKKKMNST07}.} and the Yang-Baxter commutation relation
implies that the generic string of local operators of the form:%
\begin{equation}
\prod_{a=1}^{n}x_{a}  \label{ADMFKP-locals}
\end{equation}%
can indeed be represented as linear combinations of the following strings of
bulk operators:%
\begin{equation}
\left( tr_{0}(M_{0}(\zeta _{1}^{(1)})y_{0}^{(1)})\frac{T(\zeta _{1}^{(0)})}{%
\det_{q}M(\xi _{1})}\right) \left( tr_{0}(M_{0}(\zeta _{2}^{(1)})y_{0}^{(2)})\frac{%
T(\zeta _{2}^{(0)})}{\det_{q}M(\xi _{2})}\right) \cdots \left(
tr_{0}(M_{0}(\zeta _{n}^{(1)})y_{0}^{(n)})\frac{T(\zeta _{n}^{(0)})}{\det_{q}M(\xi
_{n})}\right) ,
\end{equation}%
then the following identities between bulk and $\mathcal{U}_{+}$-boundary
generators:%
\begin{equation}
tr_{0}(M_{0}(\zeta _{n}^{(1)})y_{0})\frac{T(\zeta _{n}^{(0)})}{\det_{q}M(\xi
_{n})}=tr_{0}(\mathcal{U}_{+}(\zeta _{n}^{(1)})y_{0})\frac{\mathcal{\bar{T}}%
_{+}(\zeta _{n}^{(0)})}{\det_{q}\mathcal{\bar{U}}_{+}(\xi _{n})},
\end{equation}%
imply that we can also write (\ref{ADMFKP-locals}) by a linear combinations of
the following strings of $\mathcal{U}_{+}$-boundary generators:%
\begin{equation}
\left( tr_{0}(\mathcal{U}_{+}(\zeta _{1}^{(1)})y_{0}^{(1)})\frac{\mathcal{%
\bar{T}}_{+}(\zeta _{1}^{(0)})}{\det_{q}\mathcal{\bar{U}}_{+}(\xi _{1})}%
\right) \left( tr_{0}(\mathcal{U}_{+}(\zeta _{2}^{(1)})y_{0}^{(2)})\frac{%
\mathcal{\bar{T}}_{+}(\zeta _{2}^{(0)})}{\det_{q}\mathcal{\bar{U}}_{+}(\xi
_{2})}\right) \cdots \left( tr_{0}(\mathcal{U}_{+}(\zeta
_{n}^{(1)})y_{0}^{(n)})\frac{\mathcal{\bar{T}}_{+}(\zeta _{n}^{(0)})}{%
\det_{q}\mathcal{\bar{U}}_{+}(\xi _{n})}\right) .
\end{equation}%
Similarly, the identities between bulk and $\mathcal{U}_{-}$-boundary
generators:%
\begin{equation}
tr_{0}(M_{0}(\zeta _{n}^{(0)})y_{0})\frac{T(\zeta _{n}^{(1)})}{\det_{q}M(\xi
_{n})}=tr_{0}(\mathcal{U}_{-}(\zeta _{n}^{(0)})y_{0})\frac{\mathcal{\bar{T}}%
_{-}(\zeta _{n}^{(1)})}{\det_{q}\mathcal{\bar{U}}_{-}(\xi _{n})}
\label{ADMFKId-Bulk-Boundary_-}
\end{equation}%
imply that the generic string of local operators of the form:%
\begin{equation}
\prod_{a=n}^{\mathsf{N}}x_{a}
\end{equation}%
can indeed be represented as linear combinations of the following strings of 
$\mathcal{U}_{-}$-boundary generators:%
\begin{equation}
\left( tr_{0}(\mathcal{U}_{-}(\zeta _{\mathsf{N}}^{(0)})y_{0}^{(\mathsf{N})})%
\frac{\mathcal{\bar{T}}_{-}(\zeta _{\mathsf{N}}^{(1)})}{\det_{q}\mathcal{%
\bar{U}}_{-}(\xi _{\mathsf{N}})}\right) \left( tr_{0}(\mathcal{U}_{-}(\zeta
_{\mathsf{N}-1}^{(0)})y_{0}^{(\mathsf{N}-1)})\frac{\mathcal{\bar{T}}%
_{-}(\zeta _{\mathsf{N}-1}^{(1)})}{\det_{q}\mathcal{\bar{U}}_{-}(\xi _{%
\mathsf{N}-1})}\right) \cdots \left( tr_{0}(\mathcal{U}_{-}(\zeta
_{n}^{(0)})y_{0}^{(n)})\frac{\mathcal{\bar{T}}_{-}(\zeta _{n}^{(1)})}{%
\det_{q}\mathcal{\bar{U}}_{-}(\xi _{n})}\right).
\end{equation}%
Here we show explicitly as these reconstructions work for some special
string of local operators:

\begin{proposition}
Let us consider the open XXZ spin chain with general $K_{-}$ and diagonal or
triangular $K_{+}$, then the following boundary reconstruction holds:%
\begin{eqnarray}
\sigma _{n}^{-}\cdots \sigma _{\mathsf{N}}^{-} &=&(-1)^{\mathsf{N}%
+1-n}\prod_{a=n}^{\mathsf{N}}\frac{\bar{a}_{+}(\zeta _{a}^{(1)})}{\mathsf{a}%
_{+}(\zeta _{a}^{(1)})}\prod_{n\leq a<b\leq \mathsf{N}}\frac{\sinh (\xi
_{a}+\xi _{b}-\eta )}{\sinh (\xi _{a}+\xi _{b})}  \notag \\
&&\times \mathcal{B}_{-}(\zeta _{\mathsf{N}}^{(0)})\cdots \mathcal{B}%
_{-}(\zeta _{n}^{(0)})\frac{\mathcal{T}_{-}(\zeta _{\mathsf{N}}^{(1)})}{%
\det_{q}\mathcal{\bar{U}}_{-}(\xi _{\mathsf{N}})}\cdots \frac{\mathcal{T}%
_{-}(\zeta _{n}^{(1)})}{\det_{q}\mathcal{\bar{U}}_{-}(\xi _{n})}.
\label{ADMFKsigma-rec-B-}
\end{eqnarray}%
Let us consider the open XXZ spin chain with general $K_{+}$ and diagonal or
triangular $K_{-}$, then the following boundary reconstruction holds:%
\begin{eqnarray}
\sigma _{1}^{-}\cdots \sigma _{n}^{-} &=&\prod_{a=1}^{n}\frac{\bar{d}%
_{-}(\zeta _{a}^{(0)})}{\mathsf{d}_{-}(\zeta _{a}^{(0)})}\prod_{1\leq
a<b\leq n}\frac{\sinh (\xi _{a}+\xi _{b}+\eta )}{\sinh (\xi _{a}+\xi _{b})} 
\notag \\
&&\times \mathcal{B}_{+}(\zeta _{1}^{(1)})\cdots \mathcal{B}_{+}(\zeta
_{n}^{(1)})\frac{\mathcal{T}_{+}(\zeta _{n}^{(0)})}{\det_{q}\mathcal{\bar{U}}%
_{+}(\xi _{n})}\cdots \frac{\mathcal{T}_{+}(\zeta _{1}^{(0)})}{\det_{q}%
\mathcal{\bar{U}}_{+}(\xi _{1})}.  \label{ADMFKsigma-rec-B+}
\end{eqnarray}
\end{proposition}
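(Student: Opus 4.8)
The plan is to reconstruct each local factor $\sigma_a^{-}$ by the boundary formula $\rf{Rec-1}$ and then glue the factors together. First I would record the elementary computation $\sigma_0^{y}(\sigma^{-})^{t_0}\sigma_0^{y}=-\sigma^{-}$, which gives $tr_0(\mathcal{U}_{-}(\zeta_a^{(0)})\sigma_0^{y}(\sigma^{-})^{t_0}\sigma_0^{y})=-\mathcal{B}_{-}(\zeta_a^{(0)})$. Feeding this into the string mechanism of the preceding subsection — the bulk--boundary identity $\rf{Id-Bulk-Boundary_-}$ together with the telescoping of the propagators $\prod T(\zeta^{(1)})$, the distinct sites commuting so that the ordering is unambiguous — represents the string as the ordered product
\begin{equation}
\sigma_n^{-}\cdots\sigma_{\mathsf{N}}^{-}=(-1)^{\mathsf{N}+1-n}\,\mathcal{B}_{-}(\zeta_{\mathsf{N}}^{(0)})\frac{\mathcal{\bar{T}}_{-}(\zeta_{\mathsf{N}}^{(1)})}{\det_q\mathcal{\bar{U}}_{-}(\xi_{\mathsf{N}})}\cdots\mathcal{B}_{-}(\zeta_n^{(0)})\frac{\mathcal{\bar{T}}_{-}(\zeta_n^{(1)})}{\det_q\mathcal{\bar{U}}_{-}(\xi_n)},
\end{equation}
the sign being the product of the $\mathsf{N}+1-n$ individual signs.

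Next I would trade each $\mathcal{\bar{T}}_{-}(\zeta_a^{(1)})$ for the transfer matrix $\mathcal{T}_{-}(\zeta_a^{(1)})$ of $\rf{T-}$. In their even $\mathcal{A}_{-}$-forms one has $\mathcal{\bar{T}}_{-}(\lambda)=\bar{a}_{+}(\lambda)\mathcal{A}_{-}(\lambda)+\bar{a}_{+}(-\lambda)\mathcal{A}_{-}(-\lambda)$ and $\mathcal{T}_{-}(\lambda)=\mathsf{a}_{+}(\lambda)\mathcal{A}_{-}(\lambda)+\mathsf{a}_{+}(-\lambda)\mathcal{A}_{-}(-\lambda)+c_{+}(\lambda)\mathcal{B}_{-}(\lambda)$, so that the annihilation identities $\rf{Annih-BB-BA_-}$ — in particular $\mathcal{B}_{-}(\zeta_a^{(0)})\mathcal{A}_{-}(-\zeta_a^{(1)})=0$ and $\mathcal{B}_{-}(\zeta_a^{(0)})\mathcal{B}_{-}(\zeta_a^{(1)})=0$ — annihilate every term except the $\mathcal{A}_{-}(\zeta_a^{(1)})$ one as soon as $\mathcal{B}_{-}(\zeta_a^{(0)})$ stands to its immediate left. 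This yields the replacement
\begin{equation}
\mathcal{B}_{-}(\zeta_a^{(0)})\mathcal{\bar{T}}_{-}(\zeta_a^{(1)})=\frac{\bar{a}_{+}(\zeta_a^{(1)})}{\mathsf{a}_{+}(\zeta_a^{(1)})}\mathcal{B}_{-}(\zeta_a^{(0)})\mathcal{T}_{-}(\zeta_a^{(1)}),
\end{equation}
which is applied inside each of the $\mathsf{N}+1-n$ bracketed factors and reproduces the scalar prefactor $\prod_{a=n}^{\mathsf{N}}\bar{a}_{+}(\zeta_a^{(1)})/\mathsf{a}_{+}(\zeta_a^{(1)})$ of $\rf{sigma-rec-B-}$.

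It then remains to pass from the alternating product $\mathcal{B}_{-}(\zeta_{\mathsf{N}}^{(0)})\mathcal{T}_{-}(\zeta_{\mathsf{N}}^{(1)})\cdots\mathcal{B}_{-}(\zeta_n^{(0)})\mathcal{T}_{-}(\zeta_n^{(1)})$ to the separated form $\mathcal{B}_{-}(\zeta_{\mathsf{N}}^{(0)})\cdots\mathcal{B}_{-}(\zeta_n^{(0)})\,\mathcal{T}_{-}(\zeta_{\mathsf{N}}^{(1)})\cdots\mathcal{T}_{-}(\zeta_n^{(1)})$ required by the statement. The central factors $\det_q\mathcal{\bar{U}}_{-}(\xi_a)$ pull out freely and the $\mathcal{T}_{-}(\zeta_a^{(1)})$ commute among themselves, so it suffices to carry each $\mathcal{T}_{-}(\zeta_a^{(1)})$ rightward past the $\mathcal{B}_{-}(\zeta_b^{(0)})$ with $b<a$. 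This is the main obstacle: one must use the reflection-algebra relation $\rf{bYB-AB}$ together with the annihilation identities to show that the off-diagonal contributions of the commutation collapse and that the surviving leading coefficients accumulate exactly to $\prod_{n\le a<b\le\mathsf{N}}\sinh(\xi_a+\xi_b-\eta)/\sinh(\xi_a+\xi_b)$. I would organise this as an induction on the string length, peeling off the leftmost pair and checking that its transport through the remaining $\mathcal{B}_{-}$'s produces precisely one such factor per pair, the unwanted terms being killed by $\rf{Annih-BB-BA_-}$.

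Finally, the second identity $\rf{sigma-rec-B+}$ follows by the mirror argument based on the $\mathcal{U}_{+}$-reconstruction $\rf{Rec-3}$, for which $tr_0(\mathcal{U}_{+}(\zeta_a^{(1)})\sigma^{-})=\mathcal{B}_{+}(\zeta_a^{(1)})$ carries no sign. Here one replaces $\mathcal{\bar{T}}_{+}(\zeta_a^{(0)})$ by $\mathcal{T}_{+}(\zeta_a^{(0)})$ through the even $\mathcal{D}_{+}$-forms $\mathcal{\bar{T}}_{+}(\lambda)=\bar{d}_{-}(\lambda)\mathcal{D}_{+}(\lambda)+\bar{d}_{-}(-\lambda)\mathcal{D}_{+}(-\lambda)$ and $\mathcal{T}_{+}(\lambda)=\mathsf{d}_{-}(\lambda)\mathcal{D}_{+}(\lambda)+\mathsf{d}_{-}(-\lambda)\mathcal{D}_{+}(-\lambda)+c_{-}(\lambda)\mathcal{B}_{+}(\lambda)$, together with the part-\textsf{II)} annihilation identities, which single out the $\mathcal{D}_{+}(\zeta_a^{(0)})$ term and give the ratio $\bar{d}_{-}(\zeta_a^{(0)})/\mathsf{d}_{-}(\zeta_a^{(0)})$; the same transport of the $\mathcal{T}_{+}(\zeta_a^{(0)})$ past the $\mathcal{B}_{+}$'s then collects $\prod_{1\le a<b\le n}\sinh(\xi_a+\xi_b+\eta)/\sinh(\xi_a+\xi_b)$.
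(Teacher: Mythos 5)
The decisive step of your argument --- gluing the single-site reconstructions $\left(\ref{ADMFKRec-1}\right)$ into the string --- is asserted rather than proved, and as stated it fails. In $\left(\ref{ADMFKRec-1}\right)$ the dressing of each local operator consists of the \emph{bulk} transfer matrices $\prod_{a>m}T^{\mp 1}(\zeta_a^{(1)})$, while the central block ends with the \emph{boundary} operator $\mathcal{\bar T}_-(\zeta_m^{(1)})/\det_q\mathcal{\bar U}_-(\xi_m)$. When you multiply the reconstructions for $m=\mathsf{N},\dots,n$, the $T^{-1}(\zeta_{\mathsf{N}}^{(1)})$ coming from the site-$(\mathsf{N}-1)$ dressing therefore meets $\mathcal{\bar T}_-(\zeta_{\mathsf{N}}^{(1)})$, not $T(\zeta_{\mathsf{N}}^{(1)})$, and nothing telescopes; the clean cancellation of propagators happens only for the purely bulk reconstruction of Proposition \ref{ADMFKRecBulk}, which is why the paper starts there, interleaves the $T(\zeta_a^{(1)})$ with the $B(\zeta_b^{(0)})$ using the Yang--Baxter exchange relations and the bulk annihilation identities, and only then converts each block $B(\zeta_a^{(0)})T(\zeta_a^{(1)})/\det_qM(\xi_a)$ into $\mathcal{B}_-(\zeta_a^{(0)})\mathcal{\bar T}_-(\zeta_a^{(1)})/\det_q\mathcal{\bar U}_-(\xi_a)$ via $\left(\ref{ADMFKId-Bulk-Boundary_-}\right)$. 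Carried out correctly, this yields the intermediate formula $\left(\ref{ADMFKInterm-1}\right)$, which differs from your first display by the nontrivial factor $\prod_{n\le a<b\le\mathsf{N}}\sinh(\xi_a-\xi_b-\eta)/\sinh(\xi_a-\xi_b)$.

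Consistently, your claim that the final reordering of the $\mathcal{T}_-(\zeta_a^{(1)})$ past the $\mathcal{B}_-(\zeta_b^{(0)})$ ``accumulates exactly to $\prod\sinh(\xi_a+\xi_b-\eta)/\sinh(\xi_a+\xi_b)$'' is also incorrect: each exchange of $\mathcal{A}_-(\zeta_b^{(1)})$ past $\mathcal{B}_-(\zeta_a^{(0)})$ via $\left(\ref{ADMFKbYB-AB}\right)$ contributes the full coefficient $\frac{\sinh(\xi_a-\xi_b)\sinh(\xi_a+\xi_b-\eta)}{\sinh(\xi_a-\xi_b-\eta)\sinh(\xi_a+\xi_b)}$, and it is precisely its difference part that cancels the bulk factor you dropped. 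Your two unproved assertions are thus wrong by reciprocal amounts and compensate only because the target formula is known. The pieces you do justify --- the trace identity $tr_0(\mathcal{U}_-(\zeta_a^{(0)})\sigma_0^y(\sigma^-)^{t_0}\sigma_0^y)=-\mathcal{B}_-(\zeta_a^{(0)})$, the replacement $\mathcal{B}_-(\zeta_a^{(0)})\mathcal{\bar T}_-(\zeta_a^{(1)})=\frac{\bar a_+(\zeta_a^{(1)})}{\mathsf{a}_+(\zeta_a^{(1)})}\mathcal{B}_-(\zeta_a^{(0)})\mathcal{T}_-(\zeta_a^{(1)})$ through $\left(\ref{ADMFKAnnih-BB-BA_-}\right)$, and the mirror structure of the $\mathcal{U}_+$ case --- agree with the paper; what is missing is the entire bulk stage of the proof.
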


\begin{proof}
Let us prove explicitly the first reconstruction as for the second one can
proceed similarly. First of all by using the bulk reconstruction of
Proposition \ref{ADMFKRecBulk} one gets:%
\begin{equation}
\sigma _{n}^{-}\cdots \sigma _{\mathsf{N}}^{-}=(-1)^{\mathsf{N}+1-n}B(\zeta
_{\mathsf{N}}^{(0)})\cdots B(\zeta _{n}^{(0)})\frac{T(\zeta _{\mathsf{N}%
}^{(1)})}{\det_{q}M_{0}(\xi _{\mathsf{N}})}\cdots \frac{T(\zeta _{n}^{(1)})}{%
\det_{q}M_{0}(\xi _{n})}.
\end{equation}%
Now by using the bulk annihilation identities $B(\zeta _{\mathsf{N}%
}^{(0)})D(\zeta _{\mathsf{N}}^{(1)})=0$, we get the set of identities:%
\begin{align}
B(\zeta _{\mathsf{N}}^{(0)})\cdots B(\zeta _{n}^{(0)})T(\zeta _{\mathsf{N}%
}^{(1)})& =B(\zeta _{\mathsf{N}}^{(0)})\cdots B(\zeta _{n}^{(0)})A(\zeta _{%
\mathsf{N}}^{(1)})\notag \\
& =\frac{\sinh (\xi _{n}-\xi _{\mathsf{N}}-\eta )}{\sinh (\xi _{n}-\xi _{%
\mathsf{N}})}B(\zeta _{\mathsf{N}}^{(0)})\cdots B(\zeta _{n+1}^{(1)})A(\zeta
_{\mathsf{N}}^{(1)})B(\zeta _{n}^{(0)})\notag \\
& =\prod_{a=n}^{\mathsf{N}-1}\frac{\sinh (\xi _{a}-\xi _{\mathsf{N}}-\eta )}{%
\sinh (\xi _{a}-\xi _{\mathsf{N}})}B(\zeta _{\mathsf{N}}^{(0)})A(\zeta _{%
\mathsf{N}}^{(1)})B(\zeta _{\mathsf{N}}^{(1)})\cdots B(\zeta _{n}^{(0)}) \notag\\
& =\prod_{a=n}^{\mathsf{N}-1}\frac{\sinh (\xi _{a}-\xi _{\mathsf{N}}-\eta )}{%
\sinh (\xi _{a}-\xi _{\mathsf{N}})}B(\zeta _{\mathsf{N}}^{(0)})T(\zeta _{%
\mathsf{N}}^{(1)})B(\zeta _{\mathsf{N}}^{(1)})\cdots B(\zeta _{n}^{(0)})
\end{align}%
where to commute $B(\zeta _{n}^{(0)})$ and $A(\zeta _{\mathsf{N}}^{(1)})$ we
have used the Yang-Baxter commutation relation:%
\begin{equation}
B(\zeta _{n}^{(0)})A(\zeta _{\mathsf{N}}^{(1)})=\frac{\sinh (\xi _{n}-\xi _{%
\mathsf{N}}-\eta )}{\sinh (\xi _{n}-\xi _{\mathsf{N}})}A(\zeta _{\mathsf{N}%
}^{(1)})B(\zeta _{n}^{(0)})+\frac{\sinh \eta }{\sinh (\xi _{n}-\xi _{\mathsf{%
N}})}B(\zeta _{\mathsf{N}}^{(1)})A(\zeta _{n}^{(0)})
\end{equation}%
which with the bulk annihilation identity $B(\zeta _{n}^{(0)})B(\zeta
_{n}^{(1)})=0$ imply the above second identity, the third one is obtained by
reiterating the commutation and finally the forth one by using once again
the annihilation identity $B(\zeta _{\mathsf{N}}^{(0)})D(\zeta _{\mathsf{N}%
}^{(1)})=0$. Repeating the same procedure to commute the others transfer
matrices through the product of the $B$-operators we get:%
\begin{equation}
\sigma _{\mathsf{N}}^{-}\cdots \sigma _{n}^{-}=(-1)^{\mathsf{N}%
+1-n}\prod_{n\leq a<b\leq \mathsf{N}}\frac{\sinh (\xi _{a}-\xi _{b}-\eta )}{%
\sinh (\xi _{a}-\xi _{b})}\frac{B(\zeta _{\mathsf{N}}^{(0)})T(\zeta _{%
\mathsf{N}}^{(1)})}{\det_{q}M_{0}(\xi _{\mathsf{N}})}\cdots \frac{B(\zeta
_{n}^{(0)})T(\zeta _{n}^{(1)})}{\det_{q}M_{0}(\xi _{n})},
\end{equation}%
and then by using the boundary-bulk identities $\left( \ref{ADMFKId-Bulk-Boundary_-}\right) $ we can rewrite it in the form:%
\begin{equation}
\sigma _{\mathsf{N}}^{-}\cdots \sigma _{n}^{-}=(-1)^{\mathsf{N}%
+1-n}\prod_{n\leq a<b\leq \mathsf{N}}\frac{\sinh (\xi _{a}-\xi _{b}-\eta )}{%
\sinh (\xi _{a}-\xi _{b})}\frac{\mathcal{B}_{-}(\zeta _{\mathsf{N}}^{(0)})%
\mathcal{\bar{T}}_{-}(\zeta _{\mathsf{N}}^{(1)})}{\det_{q}\mathcal{\bar{U}}%
_{-}(\xi _{\mathsf{N}})}\cdots \frac{\mathcal{B}_{-}(\zeta _{n}^{(0)})%
\mathcal{\bar{T}}_{-}(\zeta _{n}^{(1)})}{\det_{q}\mathcal{\bar{U}}_{-}(\xi
_{n})}.  \label{ADMFKInterm-1}
\end{equation}%
We can now use the boundary annihilation identities $\left( \ref{ADMFKAnnih-BB-BA_-}\right) $ and the reflection algebra commutation relations to
move all the transfer matrices $\mathcal{\bar{T}}_{-}(\zeta _{a}^{(1)})$ to
the right and transform them into $\mathcal{T}_{-}(\zeta _{a}^{(1)})$. More
in details, from the annihilation identities and the definition of the
transfer matrices it holds:%
\begin{equation}
\mathcal{B}_{-}(\zeta _{n}^{(0)})\mathcal{\bar{T}}_{-}(\zeta _{n}^{(1)})=%
\frac{\bar{a}_{+}(\zeta _{n}^{(1)})}{\mathsf{a}_{+}(\zeta _{n}^{(1)})}%
\mathcal{B}_{-}(\zeta _{n}^{(0)})\mathcal{T}_{-}(\zeta _{n}^{(1)}),
\end{equation}%
and so the first transfer matrix on the right in $\left( \ref{ADMFKInterm-1}%
\right) $ can be rewritten in the desired form. Now let us apply the
procedure to the product $\mathcal{B}_{-}(\zeta _{n+1}^{(0)})\mathcal{\bar{T}%
}_{-}(\zeta _{n+1}^{(1)})\mathcal{B}_{-}(\zeta _{n}^{(0)})$ to rewrite it in
the desired form $\mathcal{B}_{-}(\zeta _{n+1}^{(0)})\mathcal{B}_{-}(\zeta
_{n}^{(0)})\mathcal{T}_{-}(\zeta _{n+1}^{(1)})$, the annihilation identity $%
\left( \ref{ADMFKAnnih-BB-BA_-}\right) $ implies:%
\begin{eqnarray}
\mathcal{B}_{-}(\zeta _{n+1}^{(0)})\mathcal{\bar{T}}_{-}(\zeta _{n+1}^{(1)})%
\mathcal{B}_{-}(\zeta _{n}^{(0)}) &=&\bar{a}_{+}(\zeta _{n+1}^{(1)})\mathcal{%
B}_{-}(\zeta _{n+1}^{(0)})\mathcal{A}_{-}(\zeta _{n+1}^{(1)})\mathcal{B}%
_{-}(\zeta _{n}^{(0)}) \\
&=&\bar{a}_{+}(\zeta _{n+1}^{(1)})\frac{\sinh (\xi _{n}-\xi _{n+1})}{\sinh
(\xi _{n}-\xi _{n+1}-\eta )}\frac{\sinh (\xi _{n}+\xi _{n+1}-\eta )}{\sinh
(\xi _{n}+\xi _{n+1})}  \notag \\
&&\times \mathcal{B}_{-}(\zeta _{n+1}^{(0)})\mathcal{B}_{-}(\zeta _{n}^{(0)})%
\mathcal{A}_{-}(\zeta _{n+1}^{(1)}).  \label{ADMFKInterm-2}
\end{eqnarray}%
In the second identity we have used the reflection algebra commutation
relation:%
\begin{eqnarray}
\mathcal{A}_{-}(\zeta _{n+1}^{(1)})\mathcal{B}_{-}(\zeta _{n}^{(0)}) &=&%
\frac{\sinh (\xi _{n}-\xi _{n+1})}{\sinh (\xi _{n}-\xi _{n+1}-\eta )}\frac{%
\sinh (\xi _{n}+\xi _{n+1}-\eta )}{\sinh (\xi _{n}+\xi _{n+1})}\mathcal{B}%
_{-}(\zeta _{n}^{(0)})\mathcal{A}_{-}(\zeta _{n+1}^{(1)})  \notag \\
&&-\frac{\sinh (2\xi _{n}-2\eta )\sinh \eta }{\sinh (\xi _{n}-\xi
_{n+1}-\eta )\sinh (2\xi _{n}-\eta )}\mathcal{B}_{-}(\zeta _{n+1}^{(1)})%
\mathcal{A}_{-}(\zeta _{n}^{(0)})  \notag \\
&&-\frac{\sinh \eta }{\sinh (\xi _{n}+\xi _{n+1})\sinh 2\lambda _{1}}%
\mathcal{B}_{-}(\zeta _{n+1}^{(1)})\mathcal{\tilde{D}}_{-}(\zeta _{n}^{(0)})
\end{eqnarray}%
and the fact that due to the presence of $\mathcal{B}_{-}(\zeta _{n+1}^{(0)})$ the second and third terms on the
right of this formula give zero when inserted in $\left( \ref{ADMFKInterm-1}%
\right) $. Now by using the commutativity of the $\mathcal{B}_{-}$%
-generators and once again the annihilation identities it holds:%
\begin{equation}
\mathcal{B}_{-}(\zeta _{n+1}^{(0)})\mathcal{B}_{-}(\zeta _{n}^{(0)})\mathcal{%
A}_{-}(\zeta _{n+1}^{(1)})=\frac{\mathcal{B}_{-}(\zeta _{n+1}^{(0)})\mathcal{%
B}_{-}(\zeta _{n}^{(0)})\mathcal{T}_{-}(\zeta _{n+1}^{(1)})}{\mathsf{a}%
_{+}(\zeta _{n}^{(1)})},
\end{equation}%
so that we have accomplished our task for $\mathsf{N}=n+1$ having the
product $\mathcal{T}_{-}(\zeta _{n+1}^{(1)})\mathcal{T}_{-}(\zeta
_{n}^{(1)}) $ to the right of $\left( \ref{ADMFKInterm-1}\right) $. Reiterating
this procedure for the remaining transfer matrices $\mathcal{\bar{T}}%
_{-}(\zeta _{a}^{(1)})$ for $a\in \{n+2,...,\mathsf{N}\}$ we obtain our
result.
\end{proof}

\section{Matrix elements}

\subsection{From $\mathcal{B}_{-}$-SOV representation}

Here we consider the transfer matrices (\ref{ADMFKT-}), then the
following proposition holds:

\begin{proposition}
Let $\langle \tau _{-}|$ and $|\tau _{-}^{\prime }\rangle $
be a generic couple of left and right $\mathcal{T}_{-}$-eigenstates, then we
have:%
\begin{align}
\langle \tau _{-}|\sigma _{n}^{-}\cdots \sigma _{\mathsf{N}}^{-}|\tau
_{-}^{\prime }\rangle & =\frac{-\left( \frac{\kappa _{-}e^{\tau _{-}}\sinh
\eta }{2^{\mathsf{N}}\sinh \zeta _{-}}\right) ^{\mathsf{N}-n}}{V(\eta
_{n}^{(0)},...,\eta _{\mathsf{N}}^{(0)})}\prod_{n\leq a<b\leq \mathsf{N}}%
\frac{\sinh (\xi _{a}+\xi _{b}-\eta )}{\sinh (\xi _{a}+\xi _{b})}\det_{2%
\mathsf{N}-n}||\Sigma _{a,b}^{(-,n,\tau _{-},\tau _{-}^{\prime })}||  \notag
\\
& \times \prod_{a=n}^{\mathsf{N}}\frac{\bar{a}_{+}(\zeta _{a}^{(1)})\tau
_{-}^{\prime }(\zeta _{a}^{(1)})\bar{Q}_{\tau _{-}}(\zeta _{a}^{(1)})Q_{\tau
_{-}^{\prime }}(\zeta _{a}^{(1)})\sinh 2\xi _{a}}{\mathsf{a}_{+}(\zeta
_{a}^{(1)})\det_{q}\mathcal{\bar{U}}_{-}(\xi _{a})},
\end{align}%
where $||\Sigma _{a,b}^{(-,n,\tau _{-},\tau _{-}^{\prime })}||$ is the $(2\mathsf{N}-n)\times (2\mathsf{N}-n)$ matrix of elements:%
\begin{align}
\Sigma _{a,b}^{(-,n,\tau _{-},\tau _{-}^{\prime })}& \equiv \mathcal{M}
_{a,b}^{(\tau _{-},\tau _{-}^{\prime })}\text{ \ \ \ for }a\in \{1,...,n-1\},%
\text{ }b\in \{1,...,2\mathsf{N}-n\}, \\
\Sigma _{a,b}^{(-,n,\tau _{-},\tau _{-}^{\prime })}& \equiv \left( \eta
_{a}^{(0)}\right) ^{(b-1)}\text{ \ \ \ for }a\in \{n,...,\mathsf{N}\},\text{ 
}b\in \{1,...,2\mathsf{N}-n\}, \\
\Sigma _{a,b}^{(-,n,\tau _{-},\tau _{-}^{\prime })}& \equiv \left( \eta
_{a}^{(1)}\right) ^{(b-1)}\text{ \ \ \ for }a\in \{\mathsf{N}+1,...,2\mathsf{%
N}-n\},\text{ }b\in \{1,...,2\mathsf{N}-n\}.
\end{align}
\end{proposition}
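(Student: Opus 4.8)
The plan is to combine the boundary reconstruction $\left(\ref{ADMFKsigma-rec-B-}\right)$ for $\sigma_n^-\cdots\sigma_{\mathsf{N}}^-$ with the explicit separate forms of the $\mathcal{T}_-$-eigenstates from Theorem \ref{ADMFKC:T-eigenstates-}. First I would insert $\left(\ref{ADMFKsigma-rec-B-}\right)$ into $\langle\tau_-|\sigma_n^-\cdots\sigma_{\mathsf{N}}^-|\tau_-'\rangle$. Since the rightmost operators in that reconstruction are the transfer matrices $\mathcal{T}_-(\zeta_a^{(1)})$ and $|\tau_-'\rangle$ is a $\mathcal{T}_-$-eigenstate, each of them acts on $|\tau_-'\rangle$ simply by its eigenvalue $\tau_-'(\zeta_a^{(1)})$; this extracts the factor $\prod_{a=n}^{\mathsf{N}}\tau_-'(\zeta_a^{(1)})/\det_q\mathcal{\bar{U}}_-(\xi_a)$ and reduces the problem to the purely off-diagonal matrix element $\langle\tau_-|\mathcal{B}_-(\zeta_{\mathsf{N}}^{(0)})\cdots\mathcal{B}_-(\zeta_n^{(0)})|\tau_-'\rangle$.

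Next I would evaluate this using the diagonality of $\mathcal{B}_-(\lambda)$ in the SOV basis. Expanding $\langle\tau_-|$ and $|\tau_-'\rangle$ in the left/right $\mathcal{B}_-$-eigenbases $\left(\ref{ADMFKeigenT-l-D}\right)$, $\left(\ref{ADMFKeigenT-r-D}\right)$ and using the diagonal overlap $\left(\ref{ADMFKM_jj}\right)$, the matrix element becomes a single sum over $\mathbf{h}\in\{0,1\}^{\mathsf{N}}$ weighted by $\prod_{a=n}^{\mathsf{N}}\text{\textsc{b}}_{-,\mathbf{h}}(\zeta_a^{(0)})$ and one surviving Vandermonde, exactly as in the scalar-product Proposition $\left(\ref{ADMFKScalar-p1}\right)$. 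The crucial simplification is that $\text{\textsc{b}}_{-,\mathbf{h}}(\zeta_a^{(0)})$ vanishes whenever $h_a=0$, because the factor $\sinh(\zeta_a^{(0)}-\xi_a+\eta/2)=\sinh 0$ appears in $a_{\mathbf{h}}(\zeta_a^{(0)})$; hence the product $\prod_{a=n}^{\mathsf{N}}\mathcal{B}_-(\zeta_a^{(0)})$ freezes $h_n=\cdots=h_{\mathsf{N}}=1$ and the sum collapses to one over the free indices $h_1,\dots,h_{n-1}$ only, with the frozen wave-functions $\bar{Q}_{\tau_-}(\zeta_a^{(1)})Q_{\tau_-'}(\zeta_a^{(1)})$ factoring out into the prefactor.

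The heart of the computation is rewriting the frozen $\mathcal{B}_-$-eigenvalues in a Vandermonde-compatible form. Using $\zeta_a^{(0)}=\xi_a-\eta/2$ together with $\sinh(X-Y)\sinh(X+Y)=\tfrac12(\cosh 2X-\cosh 2Y)$, I would show that $a_{\mathbf{h}}(\zeta_a^{(0)})\,a_{\mathbf{h}}(-\zeta_a^{(0)})=2^{-\mathsf{N}}\prod_{n'=1}^{\mathsf{N}}(\eta_{n'}^{(h_{n'})}-\eta_a^{(0)})$, so that each frozen eigenvalue supplies a product linear in all the variables $\eta_{n'}^{(h_{n'})}$. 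The same-site factor $\eta_a^{(1)}-\eta_a^{(0)}=2\sinh 2\xi_a\sinh\eta$ is constant and supplies the $\sinh\eta$, $\sinh 2\xi_a$ and $2^{-\mathsf{N}}$ pieces of the prefactor, while the Vandermonde of the points $\{\eta_a^{(0)}\}_{a=n}^{\mathsf{N}}$ among themselves factors out as $1/V(\eta_n^{(0)},\dots,\eta_{\mathsf{N}}^{(0)})$. What remains is a product of the surviving Vandermonde with the cross factors $\prod_{a\geq n}\prod_{n'\neq a}(\eta_{n'}^{(h_{n'})}-\eta_a^{(0)})$, summed over $h_1,\dots,h_{n-1}$; by the multilinearity of the determinant (as in the proof of $\left(\ref{ADMFKScalar-p1}\right)$) this sum assembles into a single determinant whose rows split into the three announced blocks: the $\mathcal{M}_{a,b}^{(\tau_-,\tau_-')}$ rows coming from the summed free variables, the $(\eta_a^{(0)})^{b-1}$ rows from the $\mathcal{B}_-$-zero points, and the $(\eta_a^{(1)})^{b-1}$ rows from the frozen $h=1$ variables.

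The main obstacle I anticipate is the precise bookkeeping that turns this partially-frozen Vandermonde-plus-cross-factors into the stated $(2\mathsf{N}-n)\times(2\mathsf{N}-n)$ block determinant rather than a naive $(2\mathsf{N}-n+1)$-dimensional one: one frozen $h=1$ row is absorbed into the prefactor through the same-site cancellations $\eta_a^{(1)}-\eta_a^{(0)}$ noted above (consistent with the exponent $\mathsf{N}-n$, rather than $\mathsf{N}-n+1$, appearing in the $\kappa_-e^{\tau_-}\sinh\eta/(2^{\mathsf{N}}\sinh\zeta_-)$ factor), and one must verify that the surviving $\eta^{(0)}$- and $\eta^{(1)}$-rows are exactly those of the proposition and carry the right relative normalization. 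Establishing this reduction, and checking that no spurious antisymmetrization sign or overcounting enters when the free-index sum is converted into determinant rows, is where the detailed but essentially routine algebra lives; the companion reconstruction $\left(\ref{ADMFKsigma-rec-B+}\right)$ would then be treated identically in the $\mathcal{B}_+$-representation.
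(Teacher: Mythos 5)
Your proposal follows essentially the same route as the paper's proof: insert the reconstruction (\ref{ADMFKsigma-rec-B-}), let the transfer matrices act by their eigenvalues on $|\tau_-'\rangle$, use the vanishing of \textsc{b}$_{-,\mathbf{h}}(\zeta_a^{(0)})$ at $h_a=0$ to freeze $h_n=\cdots=h_{\mathsf{N}}=1$, rewrite the frozen eigenvalues as $2^{-\mathsf{N}}\prod_{n'}(\eta_{n'}^{(h_{n'})}-\eta_a^{(0)})$ so that the product with the Vandermonde assembles into an extended Vandermonde divided by $V(\eta_n^{(0)},\dots,\eta_{\mathsf{N}}^{(0)})$, and conclude by multilinearity as in the scalar-product proposition. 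The identities you invoke (including the signs in $a_{\mathbf{h}}(\zeta_a^{(0)})a_{\mathbf{h}}(-\zeta_a^{(0)})$ and the same-site factor $\eta_a^{(1)}-\eta_a^{(0)}=2\sinh 2\xi_a\sinh\eta$) check out, and the bookkeeping subtlety you flag about the determinant's dimension is precisely the step the paper itself leaves implicit.
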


\begin{proof}
Here we use the reconstruction (\ref{ADMFKsigma-rec-B-}) for $\sigma
_{n}^{-}\cdots \sigma _{\mathsf{N}}^{-}$, then we can act with the product
of transfer matrices on the right $\mathcal{T}_{-}$-eigenstate $|\tau
_{-}^{\prime }\rangle $ and we are left with the following computation:%
\begin{equation}
\mathcal{B}_{-}(\zeta _{\mathsf{N}}^{(0)})\cdots \mathcal{B}_{-}(\zeta
_{n}^{(0)})|\tau _{-}^{\prime }\rangle .
\end{equation}%
From the decomposition of $|\tau _{-}^{\prime }\rangle $ in the $\mathcal{B}%
_{-}$-eigenstates, we get: 
\begin{align}
\mathcal{B}_{-}(\zeta _{\mathsf{N}}^{(0)})\cdots \mathcal{B}_{-}(\zeta
_{n}^{(0)})|\tau _{-}^{\prime }\rangle & =\prod_{a=n}^{\mathsf{N}}Q_{\tau
_{-}^{\prime }}(\zeta
_{a}^{(1)})\sum_{h_{1},...,h_{n-1}=0}^{1}\prod_{a=1}^{n-1}Q_{\tau
_{-}^{\prime }}(\zeta _{a}^{(h_{a})})\prod_{a=n}^{\mathsf{N}}\text{\textsc{b}%
}_{-,\{h_{1},...,h_{n-1},1,...,1\}}(\zeta _{a}^{(0)})  \notag \\
& V(\eta _{1}^{(h_{1})},...,\eta _{n-1}^{(h_{n-1})},\eta _{n}^{(1)},...,\eta
_{\mathsf{N}}^{(1)})|h_{1},...,h_{n-1},1,...,1\rangle .
\end{align}%
Let us rewrite the $\mathcal{B}_{-}$-eigenvalues in terms of the $\eta
_{a}^{(h_{a})}$:%
\begin{equation}
\text{\textsc{b}}_{-,\text{\textbf{h}}}(\lambda )\equiv \frac{\left(
-1\right) ^{\mathsf{N}}\kappa _{-}e^{\tau _{-}}\sinh (2\lambda -\eta )}{2^{%
\mathsf{N}}\sinh \zeta _{-}}\prod_{a=1}^{\mathsf{N}}\left( \cosh 2\lambda
-\eta _{a}^{(h_{a})}\right) ,
\end{equation}%
and then we have:%
\begin{align}
& \prod_{a=n}^{\mathsf{N}}\text{\textsc{b}}_{-,\{h_{1},...,h_{n-1},1,...,1%
\}}(\zeta _{a}^{(0)})V(\eta _{1}^{(h_{1})},...,\eta _{n-1}^{(h_{n-1})},\eta
_{n}^{(1)},...,\eta _{\mathsf{N}}^{(1)})  \notag \\
& =\frac{\left( \frac{\left( -1\right) ^{\mathsf{N}-n}\kappa _{-}e^{\tau
_{-}}\sinh \eta }{2^{\mathsf{N}}\sinh \zeta _{-}}\right) ^{\mathsf{N}-n}}{%
V(\eta _{n}^{(0)},...,\eta _{\mathsf{N}}^{(0)})}\prod_{a=n}^{\mathsf{N}%
}\sinh 2(\xi _{a}-\eta )  \notag \\
& \times V(\eta _{1}^{(h_{1})},...,\eta _{n-1}^{(h_{n-1})},\eta
_{n}^{(0)},...,\eta _{\mathsf{N}}^{(0)},\eta _{n}^{(1)},...,\eta _{\mathsf{N}%
}^{(1)}).
\end{align}%
Using this last formula and taking the scalar product we obtain our result.
\end{proof}

\subsection{From $\mathcal{B}_{+}$-SOV representation}

Here we consider the transfer matrices $(\ref{ADMFKT+})$, then the
following proposition holds:

\begin{proposition}
Let us consider the open XXZ with transfer matrix $\mathcal{T}_{+}(\lambda )$
$(\ref{ADMFKT+})$ and let $\langle \tau _{+}|$ and $|\tau _{+}^{\prime }\rangle $
be a generic couple of left and right $\mathcal{T}_{+}$-eigenstates, then we
have:%
\begin{align}
\langle \tau _{+}|\sigma _{1}^{-}\cdots \sigma _{n}^{-}|\tau _{+}^{\prime
}\rangle & =\frac{\left( \frac{\left( -1\right) ^{\mathsf{N}}\kappa
_{+}e^{\tau _{+}}\sinh \eta }{2^{\mathsf{N}}\sinh \zeta _{+}}\right) ^{n}}{%
V(\eta _{n}^{(1)},...,\eta _{\mathsf{N}}^{(1)})}\prod_{1\leq a<b\leq n}\frac{%
\sinh (\xi _{a}+\xi _{b}+\eta )}{\sinh (\xi _{a}+\xi _{b})}\det_{\mathsf{N}%
+n}||\Sigma _{a,b}^{(+,n,\tau _{+},\tau _{+}^{\prime })}||  \notag \\
& \times \prod_{a=1}^{n}\frac{\bar{d}_{-}(\zeta _{a}^{(0)})\tau _{+}^{\prime
}(\zeta _{a}^{(0)})\bar{Q}_{\tau _{+}}(\zeta _{a}^{(0)})Q_{\tau _{+}^{\prime
}}(\zeta _{a}^{(0)})\sinh 2\xi _{a}}{\mathsf{d}_{-}(\zeta _{a}^{(0)})\det_{q}%
\mathcal{\bar{U}}_{+}(\xi _{a})},
\end{align}%
where $||\Sigma _{a,b}^{(+.n,\tau _{+},\tau _{+}^{\prime })}||$ is the $(\mathsf{N}+n)\times (\mathsf{N}+n)$ matrix of elements:%
\begin{align}
\Sigma _{a,b}^{(+,n,\tau +,\tau _{+}^{\prime })}& \equiv \left( \eta
_{a}^{(1)}\right) ^{(b-1)}\text{ \ \ \ for }a\in \{1,...,n\},\text{ }b\in
\{1,...,\mathsf{N}+n\}, \\
\Sigma _{a,b}^{(+,n,\tau +,\tau _{+}^{\prime })}& \equiv \left( \eta
_{a-n}^{(0)}\right) ^{(b-1)}\text{ \ \ \ for }a\in \{n+1,...,2n\},\text{ }%
b\in \{1,...,\mathsf{N}+n\}, \\
\Sigma _{a,b}^{(+,n,\tau +,\tau _{+}^{\prime })}& \equiv \mathcal{M}
_{a-n,b}^{(\tau _{+},\tau _{+}^{\prime })}\text{ \ \ \ for }a\in \{2n+1,...,%
\mathsf{N}+n\},\text{ }b\in \{1,...,\mathsf{N}+n\}.
\end{align}
\end{proposition}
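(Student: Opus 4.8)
The plan is to transcribe, \emph{mutatis mutandis}, the argument of the preceding $\mathcal{B}_{-}$ proposition, using now the $\mathcal{B}_{+}$-reconstruction $(\ref{ADMFKsigma-rec-B+})$ together with the right $\mathcal{B}_{+}$-eigenbasis $(\ref{ADMFKeigenT-r-D+})$. First I would insert $(\ref{ADMFKsigma-rec-B+})$ into $\langle \tau_{+}|\sigma_{1}^{-}\cdots\sigma_{n}^{-}|\tau_{+}^{\prime}\rangle$ and let the string of transfer matrices $\mathcal{T}_{+}(\zeta_{n}^{(0)})\cdots\mathcal{T}_{+}(\zeta_{1}^{(0)})$, which stand to the right, act on the eigenvector $|\tau_{+}^{\prime}\rangle$; being a commuting family on an eigenstate, each returns its eigenvalue $\tau_{+}^{\prime}(\zeta_{a}^{(0)})$. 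This leaves the evaluation of
\begin{equation}
\mathcal{B}_{+}(\zeta_{1}^{(1)})\cdots\mathcal{B}_{+}(\zeta_{n}^{(1)})|\tau_{+}^{\prime}\rangle .
\end{equation}

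Next I would expand $|\tau_{+}^{\prime}\rangle$ in the right $\mathcal{B}_{+}$-eigenbasis via $(\ref{ADMFKeigenT-r-D+})$ and use that each $\mathcal{B}_{+}(\zeta_{a}^{(1)})$ is diagonal with eigenvalue $\text{\textsc{b}}_{+,\text{\textbf{h}}}(\zeta_{a}^{(1)})$. Writing this eigenvalue, from $(\ref{ADMFKEigenValue-D+})$ and $\cosh 2\zeta_{b}^{(h_{b})}=\eta_{b}^{(h_{b})}$, in the factorized form
\begin{equation}
\text{\textsc{b}}_{+,\text{\textbf{h}}}(\lambda)=\frac{(-1)^{\mathsf{N}}\kappa_{+}e^{\tau_{+}}\sinh(2\lambda+\eta)}{2^{\mathsf{N}}\sinh\zeta_{+}}\prod_{b=1}^{\mathsf{N}}\left(\cosh 2\lambda-\eta_{b}^{(h_{b})}\right),
\end{equation}
one sees that at $\lambda=\zeta_{a}^{(1)}$ the $b=a$ factor is $\eta_{a}^{(1)}-\eta_{a}^{(h_{a})}$, which vanishes exactly when $h_{a}=1$. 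Hence $\mathcal{B}_{+}(\zeta_{1}^{(1)})\cdots\mathcal{B}_{+}(\zeta_{n}^{(1)})$ annihilates every basis component except those with $h_{1}=\cdots=h_{n}=0$, reducing the expansion to a sum over the free indices $h_{n+1},\dots,h_{\mathsf{N}}\in\{0,1\}$.

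The key step is then a Vandermonde enlargement identical in spirit to the $\mathcal{B}_{-}$ case: the surviving product $\prod_{a=1}^{n}\text{\textsc{b}}_{+,\{0,\dots,0,h_{n+1},\dots,h_{\mathsf{N}}\}}(\zeta_{a}^{(1)})$ is a product of linear factors $\eta_{a}^{(1)}-\eta_{b}^{(h_{b})}$, and multiplying it into the Vandermonde $\prod_{1\le b<a\le\mathsf{N}}(\eta_{a}^{(h_{a})}-\eta_{b}^{(h_{b})})$ from $(\ref{ADMFKeigenT-r-D+})$ rebuilds a single larger Vandermonde on the enlarged node set $\{\eta_{a}^{(1)}\}_{a=1}^{n}\cup\{\eta_{a}^{(0)}\}_{a=1}^{n}\cup\{\eta_{a}^{(h_{a})}\}_{a=n+1}^{\mathsf{N}}$, up to the Vandermonde $\prod_{1\le b<a\le n}(\eta_{a}^{(1)}-\eta_{b}^{(1)})$ of the $n$ adjoined nodes, which lands in the denominator, and the scalar factors recorded in the statement. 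Finally I would pair the result with the separate left eigencovector $\langle\tau_{+}|$ of $(\ref{ADMFKeigenT-l-D+})$; since only $h_{1}=\cdots=h_{n}=0$ survives, the bra supplies the fixed weights $\bar{Q}_{\tau_{+}}(\zeta_{a}^{(0)})$ for $a\le n$ and the free weights $\bar{Q}_{\tau_{+}}(\zeta_{a}^{(h_{a})})$ for $a>n$. By multilinearity of the determinant the sum over $h_{n+1},\dots,h_{\mathsf{N}}$ collapses the last $\mathsf{N}-n$ rows into the scalar-product entries $\mathcal{M}_{a-n,b}^{(\tau_{+},\tau_{+}^{\prime})}$ of $(\ref{ADMFKScalar-p1})$, while the adjoined nodes $\eta_{a}^{(1)}$ and the projected nodes $\eta_{a}^{(0)}$ produce the two pure-power blocks, assembling the $(\mathsf{N}+n)\times(\mathsf{N}+n)$ determinant $||\Sigma_{a,b}^{(+,n,\tau_{+},\tau_{+}^{\prime})}||$.

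I expect the main obstacle to be the prefactor bookkeeping in the third step rather than any conceptual difficulty: one must track the ratios $\bar{d}_{-}(\zeta_{a}^{(0)})/\mathsf{d}_{-}(\zeta_{a}^{(0)})$ and $\sinh(\xi_{a}+\xi_{b}+\eta)/\sinh(\xi_{a}+\xi_{b})$ coming from $(\ref{ADMFKsigma-rec-B+})$, the transfer-matrix eigenvalues $\tau_{+}^{\prime}(\zeta_{a}^{(0)})$ with their $\det_{q}\mathcal{\bar{U}}_{+}(\xi_{a})$ denominators, the $\sinh 2\xi_{a}$ factors and the signs generated when reordering the enlarged set of Vandermonde nodes, and check that they combine into exactly the displayed prefactor. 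The structural content --- the projection onto $h_{1}=\cdots=h_{n}=0$ and the recombination into the three-block matrix $\Sigma$ with the $\mathcal{M}$-block in rows $2n+1,\dots,\mathsf{N}+n$ --- is a direct translation of the $\mathcal{B}_{-}$ computation, so the care lies in verifying that the enlarged Vandermonde distributes into precisely that block form.
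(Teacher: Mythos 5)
Your proposal follows exactly the paper's own route: insert the reconstruction $(\ref{ADMFKsigma-rec-B+})$, let the $\mathcal{T}_{+}(\zeta_{a}^{(0)})$ act on $|\tau_{+}^{\prime}\rangle$ to produce eigenvalues, expand in the right $\mathcal{B}_{+}$-eigenbasis so that the string of $\mathcal{B}_{+}(\zeta_{a}^{(1)})$ projects onto $h_{1}=\cdots=h_{n}=0$, recombine the factorized $\mathcal{B}_{+}$-eigenvalues with the Vandermonde into an enlarged Vandermonde, and collapse the remaining sum by multilinearity into the block determinant $||\Sigma^{(+,n,\tau_{+},\tau_{+}^{\prime})}||$. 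This is essentially the same argument as in the paper (whose written version in fact contains several index and label typos that your cleaner statement of the operator string avoids), so the proposal is correct and structurally identical.
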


\begin{proof}
Here we use the reconstruction (\ref{ADMFKsigma-rec-B+}) for $\sigma
_{1}^{-}\cdots \sigma _{n}^{-}$, then we can act with the product of
transfer matrices on the right $\mathcal{T}_{+}$-eigenstate $|\tau
_{+}^{\prime }\rangle $ and we are left with the following computation:%
\begin{equation}
\mathcal{B}_{+}(\zeta _{\mathsf{N}}^{(0)})\cdots \mathcal{B}_{+}(\zeta
_{n}^{(0)})|\tau _{+}^{\prime }\rangle .
\end{equation}%
From the decomposition of $|\tau _{+}^{\prime }\rangle $ in the $\mathcal{B}%
_{+}$-eigenstates, we get: 
\begin{align}
\mathcal{B}_{+}(\zeta _{\mathsf{N}}^{(1)})\cdots \mathcal{B}_{+}(\zeta
_{n}^{(1)})|\tau _{+}^{\prime }\rangle & =\prod_{a=1}^{n}Q_{\tau
_{+}^{\prime }}(\zeta _{a}^{(1)})\sum_{h_{n+1},...,h_{\mathsf{N}%
}=0}^{1}\prod_{a=n+1}^{\mathsf{N}}Q_{\tau _{+}^{\prime }}(\zeta
_{a}^{(h_{a})})\prod_{a=1}^{n}\text{\textsc{b}}_{+,\{0,...,0,h_{n+1},...,h_{%
\mathsf{N}}\}}(\zeta _{a}^{(1)})  \notag \\
& V(\eta _{1}^{(0)},...,\eta _{n}^{(0)},\eta _{n+1}^{(h_{n+1})},...,\eta _{%
\mathsf{N}}^{(h_{\mathsf{N}})})|0,...,0,h_{n+1},...,h_{\mathsf{N}}\rangle .
\end{align}%
Let us rewrite the $\mathcal{B}_{+}$-eigenvalues in terms of the $\eta
_{a}^{(h_{a})}$:%
\begin{equation}
\text{\textsc{b}}_{+,\text{\textbf{h}}}(\lambda )\equiv \frac{\kappa
_{+}e^{\tau _{+}}\sinh (2\lambda +\eta )\left( -1\right) ^{\mathsf{N}}}{2^{%
\mathsf{N}}\sinh \zeta _{-}}\prod_{a=1}^{\mathsf{N}}\left( \cosh 2\lambda
-\eta _{a}^{(h_{a})}\right) ,
\end{equation}%
and then we have:%
\begin{align}
& \prod_{a=n}^{\mathsf{N}}\text{\textsc{b}}_{+,\{0,...,0,h_{n+1},...,h_{%
\mathsf{N}}\}}(\zeta _{a}^{(1)})V(\eta _{1}^{(0)},...,\eta _{n}^{(0)},\eta
_{n+11}^{(h_{n+1})},...,\eta _{\mathsf{N}}^{(h_{\mathsf{N}})})  \notag \\
& =\frac{\left( \frac{\left( -1\right) ^{\mathsf{N}}\kappa _{+}e^{\tau
_{+}}\sinh \eta }{2^{\mathsf{N}}\sinh \zeta _{+}}\right) ^{n}}{V(\eta
_{1}^{(1)},...,\eta _{n}^{(1)})}\prod_{a=n}^{\mathsf{N}}\sinh 2\xi _{a}\sinh
2(\xi _{a}+\eta )  \notag \\
& \times V(\eta _{1}^{(1)},...,\eta _{n}^{(1)},\eta _{1}^{(0)},...,\eta
_{n}^{(0)},\eta _{n+1}^{(h_{n+1})},...,\eta _{\mathsf{N}}^{(h_{\mathsf{N}%
})}).
\end{align}%
Using this last formula and taking the scalar product we obtain our result.
\end{proof}

\section{Conclusion and outlook}
We have analyzed the integrable quantum models associated to the transfer matrices corresponding to one general non-diagonal and one diagonal or triangular
boundary matrices. For these integrable quantum models, defining the open spin 1/2 XXZ quantum chain in the same
class of non-diagonal boundary matrices for the homogeneous limit, we have obtained the complete SOV-characterization of the transfer matrix eigenvalues and eigenstates, the proof of the simplicity of the spectrum and determinant formulae of $\mathsf{N}\times\mathsf{N}$ matrices for the scalar products of separate states. Finally, matrix elements of a class of quasi-local operators have been computed on the transfer matrix eigenstates in determinant form by using the reconstruction of these operators by the Sklyanin's quantum separate variables. The relevance of these findings in the framework of the non-equilibrium
systems like the partial asymmetric simple exclusion processes (PASEP) will
be described in \cite{ADMFKN12-4} and for the most general
symmetric simple exclusion processes in \cite{ADMFKN12-5}, where moreover further matrix elements of quasi-local operators will be computed.

In the literature of quantum integrable models there exist different applications of separation of variable methods for computing the matrix elements of local operators. An important example is presented in the Smirnov's paper \cite{ADMFKSm98}, where determinant formulae for the matrix elements of a conjectured basis of local operators have been derived in Sklyanin's SOV framework for the quantum integrable Toda chain \cite{ADMFKSk1}. There is a strong analogy among Smirnov's formulae, those that we have here derived and more in general those which appear in the series of papers \cite{ADMFKN12-0}, \cite{ADMFKN12-1,ADMFKN12-3} and \cite{ADMFKGMN12-SG,ADMFKGMN12-T2}. The main differences in all these formulae are due to model dependent features, like the nature of the spectrum of the quantum separate variables. In fact, it is worth citing also the results of the papers \cite{ADMFKBBS96,ADMFKBBS97} on the form factors of the restricted sine-Gordon at the reflectionless points in the S-matrix formulation\footnote{See \cite{ADMFKA.Zam77}-\cite{ADMFKM92} and references
therein.}. The form factors there derived\footnote{Note that recently in \cite{ADMFKJMS11-03} these results have been connected to the important achievements obtained  in \cite{ADMFKJMS11-03}-\cite{ADMFKJMS11-02} where a fermionic basis of
quasi-local operators has been introduced in the infinite volume limit of the XXZ spin 1/2 chain.} can be represented once again as determinants and the connection with SOV emerges on the basis of the semi-classical analysis of \cite{ADMFKBBS96}, there also used as a tool to overcome the problem\footnote{Let us recall that this is a longstanding problem in the S-matrix formulation. The description of massive IQFTs as (superrenormalizable) perturbations of conformal field theories \cite{ADMFKVi70}-\cite{ADMFKDFMS97} by relevant local fields \cite{ADMFKZam88}-\cite{ADMFKGM96} has been at the origin of the attempt of classifying the local field content of massive theories (the set of the solutions to the form factor equations \cite{ADMFKKW78,ADMFKSm92}) by that of the corresponding ultraviolet conformal field theories. Several results are known which confirm this characterization, see for example \cite{ADMFKCM90}-\cite{ADMFKJMT03} and the series of works \cite{ADMFKDN05-1}-\cite{ADMFKDN08}.} of the local fields identification.

Let us comment that in this paper we have followed a different approach for the reconstruction
of local operators w.r.t that used in \cite{ADMFKKKMNST07,ADMFKKKMNST08}. A part the different
framework, ABA in \cite{ADMFKKKMNST07,ADMFKKKMNST08} and SOV in this paper, we have decided to reconstruct
local operators directly by the quantum separate variables of the 6-vertex
reflection algebra and not in terms of those of the 6-vertex Yang-Baxter
algebra. The main motivation to do so is related to the increased complexity
of the functional relations among the generators of these two algebras in
the general non-diagonal cases which make more complicated compute the
action of the quantum separate variables of the 6-vertex Yang-Baxter algebra
on the eigenstates of the 6-vertex reflection algebra transfer matrices. Our
current approach is of course more natural as the action of the quantum
separate variables of the reflection algebra on the corresponding transfer
matrix eigenstates has a simpler form. However, it is worth commenting that
this reconstruction program is not yet completed as we have so far
constructed explicitly only some classes of quasi-local operators by our approach but
we believe possible to use this type of reconstruction to compute all matrix
elements of local operators. One important motivation is to derive form
factors of local operators expressed by determinant formulae as it was obtained in \cite{ADMFKN12-0} for the 6-vertex transfer matrix with antiperiodic boundary conditions.
Indeed, the knowledge of the form factors of local operators is an important
step toward the complete solution of the quantum model as the form factors
represent an efficient numerical tools for the computation of two point correlation functions. In fact, we can rewrite correlation functions in spectral series of form factors and then we can try to use the
same approach developed in \cite{ADMFKCM05} in the ABA framework and used in the
series of works\footnote{The dynamical structure factors, important physical observables measurable by neutron scattering experiments \cite{ADMFKBloch36}-\cite{ADMFKBalescu75}, were computed by this method.} \cite{ADMFKCHM05}-\cite{ADMFKCCS07}. This is a concrete project as also in our SOV
framework will be possible to have representations for the scalar products and
complete characterization of the transfer matrix spectrum in terms of solutions of a system of Bethe equations.

Finally, let us comment that the analysis developed in this paper define the
required setup to extend the results on the spectrum characterization and
the scalar product formulae in the SOV framework to the most general
non-diagonal spin-1/2 open XXZ and XYZ quantum chains. Indeed, the so-called
gauge transformations\footnote{%
These gauge transformations has been first introduced by Baxter \cite{ADMFKBa72-1}-\cite{ADMFKBa72-3}.} can be used also in the reflection algebra framework to reduce the spectral problem to one analyzable by SOV. More in
details, both the transfer matrices of 8-vertex and 6-vertex reflection
algebras associated to the most general integrable boundaries matrices can
be reduced by gauge transformations to those of a dynamical 6-vertex
reflection algebra of elliptic and trigonometric type, respectively, with
one triangular boundary matrix. The implementation of the SOV analysis for
the spectral problem of these dynamical 6-vertex systems is currently under
study in collaboration with N. Kitanine and it consists in the
generalization to the dynamical case of the SOV results derived in this
paper for the standard reflection algebra. It is then worth mentioning that
in \cite{ADMFKN12-3, ADMFK?NT12} it will be shown as the SOV results from the
spectrum up to the form factors of local operators can be extended
from the standard Yang-Baxter algebra to the dynamical one.
\bigskip

{\bf Acknowledgments}\, The author would like to thank N. Kitanine, K. K. Kozlowski, J. M. Maillet, B. M.  McCoy, V. Terras for their interest in this work. The author is supported by National Science Foundation grants PHY-0969739 and gratefully acknowledge the YITP Institute of Stony Brook, where he had the opportunity to develop his research programs and the privilege to have stimulating discussions with B. M. McCoy. Finally, the author would like to thank the Theoretical Physics Group of the Laboratory of Physics at ENS-Lyon and the Mathematical Physics Group at IMB of the Dijon University for their hospitality under the support ANR-10-BLAN-0120-04-DIADEMS.

\begin{small}

\end{small}

\end{document}